\begin{document}

\title{Normalization for Multimodal Type Theory}
\author{Daniel Gratzer\lmcsorcid{0000-0003-1944-0789}}
\address{Aarhus University}
\email{gratzer@cs.au.dk}

\begin{abstract}
  We prove normalization for \MTT{}, a general multimodal dependent type theory capable of
  expressing modal type theories for guarded recursion, internalized parametricity, and various
  other prototypical modal situations. We prove that deciding type checking and conversion in \MTT{}
  can be reduced to deciding the equality of modalities in the underlying modal situation,
  immediately yielding a type checking algorithm for all instantiations of \MTT{} in the literature.
  This proof uses a generalization of \emph{synthetic Tait computability}---an abstract approach to
  gluing proofs---to account for modalities. This extension is based on \MTT{} itself, so that this
  proof also constitutes a significant case study of \MTT{}.
\end{abstract}

\maketitle

\section{Introduction}
\label{sec:introduction}

If type theory is classically the study of objects invariant under change of context, modal type
theory is the study of adding non-invariant connectives---\emph{modalities}---to type theory. Given
that many natural features of particular models of type theory are not invariant under substitution,
modal type theories have sparked considerable interest. By nature, however, modal type theories must
thread the needle of presenting modalities in such a way that the classical substitution theorems of
type theory still hold.

Typically, modal type theories require modifications to the apparatus of contexts and
substitutions. Unfortunately, these tweaks are often more art than science, with expert attention
required even to make the most trivial modification to the modal structure of a type theory. In
order to address this complexity, \emph{general} modal type theories have been
introduced~\cite{licata:2017,gratzer:2020}. These theories can be instantiated by a description of
a modal situation to produce a system enjoying the theorems usually proved by experts.

\subsection{Multimodal type theory}

We focus on one such general modal type theory: \MTT{} \cite{gratzer:2020}. \MTT{} can be
instantiated with an arbitrary collection of modalities and transformations between them to yield a
highly usable syntax. The modalities in \MTT{} behave like (weak) dependent right adjoints
(DRAs)~\cite{birkedal:2020} so that \MTT{} can be used to internalize nearly any right
adjoint. This flexibility allows \MTT{} to encode calculi for guarded recursion, internalized
parametricity, and other handcrafted calculi.

More precisely, \MTT{} can be instantiated by a \emph{mode theory}, a strict 2-category describing
modes, modalities, and natural transformations between these modalities. This 2-categorical
structure is then reflected into the structure of substitutions in \MTT{}, ensuring that \eg{}, a
transformation between two modalities $\mu$ and $\nu$ gives rise to a function
$\Modify{A} \to \Modify[\nu]{A}$.

While this flexibility allows \MTT{} to accommodate many interesting calculi, it becomes
proportionally more challenging to prove metatheoretic results about \MTT{}. In particular, the rich
substitution structure inherited from the mode theory can introduce subtle equations between
terms. The proof that the crisp induction principles can be reconstructed in \MTT{}~\cite[Theorem
10.4]{gratzer:mtt-journal:2021}, for instance, exemplifies this and hinges on many such
calculations. In fact, the metatheoretic results established by Gratzer et al.~\cite{gratzer:2020}
(soundness and canonicity) are results on closed terms in \MTT{}, allowing their proofs to avoid the
majority of the substitution apparatus.

Crucially, it remained open whether \MTT{} admitted a normalization algorithm and, consequently,
whether type checking was decidable. Even in the presence of a normalization algorithm \MTT{} cannot
admit an unconditional type checking algorithm: it is not only necessary to have a decision
procedure for terms in the language, but also for modalities and 2-cells as both appear in terms for
\MTT{}.

In this paper we show the best possible result holds: \MTT{} admits an unconditional normalization
algorithm and conversion of normal forms is decidable if conversion is decidable in the mode
theory.\footnote{The converse is almost, but not quite, true. Decidability of conversion for normal
  forms implies that the 1- and 2-cells of the mode theory have decidable equality, as these appear
  in normal forms.} As corollaries, we show that type constructors in \MTT{} are always injective
and that type checking is decidable when the mode theory is decidable.%
\footnote{This requirement is potentially nontrivial \eg{}, the word problem for groups is known to
  be undecidable and is subsumed by the problem for 2-categories.}

\subsection{Normalization-by-evaluation}
\label{sec:introduction:normalization}
A normalization algorithm must begin by defining \emph{normal forms}. Their precise formulation
depends on the situation but they always satisfy two crucial properties. First, the equality
of normal forms $u = v$ is clearly decidable---often no more than structural equality---and there is
a function $\DecodeNf{u}$ decoding a normal form to a term of the same type.

Relative to a notion of normal form, a normalization algorithm sends a term $\Gamma \vdash M : A$ to
a normal form $\Normalize{M}{A}$ such that $\prn{\Normalize{-}{A}, \DecodeNf{-}}$ lifts to an
isomorphism between equivalence classes of terms of $A$ and normal forms~\cite{abel:2013}. Typically
one breaks the condition that $\prn{\Normalize{-}{A},\DecodeNf{-}}$ forms an isomorphism into three
conditions:
\begin{enumerate}
\item \emph{Completeness}: if $\Gamma \vdash M = N : A$ then
  $\Normalize{M}{A} = \Normalize{N}{A}$.
\item \emph{Soundness}: $\Gamma \vdash \DecodeNf{\Normalize{M}{A}} = M : A$.
\item \emph{Idempotence}: $u = \Normalize{\DecodeNf{u}}{A}$.
\end{enumerate}

\begin{rem}
  We warn the reader that this terminology is not entirely standard. Various sources use the
  opposite conventions of soundness and completeness~\cite{altenkirch:2016,altenkirch:2017}. Such
  sources often refer to the final condition as \emph{stability}.
\end{rem}

Proving normalization is an involved affair. Traditionally, one begins by fixing a strongly
normalizing confluent rewriting system presenting the equational theory of the type theory. The
normal forms are then exactly the terms of the theory which cannot be further reduced. This approach
does not scale, however, to type theories with \emph{type-directed} equations such as the unicity
principles of dependent sums and the unit type. These equations defy attempts to present them
in a rewriting system and require type-directed algorithms.

The preeminent type-directed technique for normalization is \emph{normalization-by-evaluation
  (NbE)}~\cite{abel:2013}. Proving that an NbE algorithm works, however, is an extremely intricate
affair involving a variety of complex constructions. After the algorithm is defined, the proof of
correctness typically proceeds by establishing properties (1)-(3) in order. Each property, moreover,
requires a separate argument. Completeness is established through a PER model, soundness through a
cross-language logical relation, and idempotence through a final inductive argument. The first two
properties in particular are time-consuming to verify; recent work by Gratzer et
al.~\cite{gratzer:2019} extended NbE to a type theory with an idempotent comonad but even in this
minimal case the correctness proof occupied a 90 page technical
report~\cite{gratzer:tech-report:2019}.

These difficulties are not unique to modal type theories, and a long line of research focuses on
taming the complexity of NbE through
\emph{gluing}~\cite{altenkirch:1995,streicher:1998,fiore:2002,altenkirch:2016,coquand:2019,sterling:phd}.
This line of work recasts normalization algorithms as the construction of models of type theory in
categories defined by Artin gluing.

\subsection{Normalization-by-gluing}
\label{sec:introduction:normalization-by-gluing}

Stepping back from type theory and normalization, fix a functor $\Mor[F]{\CC}{\DD}$ between a pair
of categories. The \emph{gluing} of $F$ (written $\GL{F}$) is a category whose objects triples
$\prn{C : \CC, D : \DD, \Mor[f]{D}{F\prn{C}}}$. Morphisms in this category are given by pairs of
morphisms $\prn{x_0,x_1}$ fitting into a commuting square, \eg{}:
\[
  \DiagramSquare{
    width = 5cm,
    nw = D_0,
    ne = D_1,
    sw = F\prn{C_0},
    se = F\prn{C_1},
    north = x_1,
    south = F\prn{x_0},
    west = f_0,
    east = f_1,
  }
\]
We note that there are evident projection functors $\Mor[\pi_0]{\GL{F}}{\CC}$ and
$\Mor[\pi_1]{\GL{F}}{\DD}$.

We will view $\GL{F}$ as a category of proof-relevant predicates on $\CC$. To illustrate this,
consider $\EE = \GL{\Gamma}$ where $\Mor[\Gamma = \Hom{\ObjTerm{}}{-}]{\CC}{\SET}$ is the global
sections map on a cartesian closed category $\CC$ sending each object to the set of its global
points. Objects in $\EE$ then correspond to an object $C : \CC$ equipped with a map of sets
$\Mor[\pi]{X}{\Hom{\ObjTerm{}}{C}}$. Shifting perspective, we can view $\pi$ as a (proof-relevant)
predicate on the global points of $C$ by setting $\Phi\prn{c} = \pi^{-1}\prn{c}$.

Remarkably, $\EE$ inherits much of the structure of $\CC$ so that $\EE$ is also a Cartesian closed
category and $\pi_0$ preserves finite products and exponentials. This is a recurrent pattern with
Artin gluing; if $\Mor[F]{\CC}{\DD}$ is a nice functor between categories closed under (co)limits,
exponentials, \etc{}, then $\GL{F}$ will be closed under the same operations in such a way that
$\pi_0$ preserves them. In fact, unfolding the construction of \eg{} binary products and
exponentials in $\EE$ yields the definition familiar from logical relations.

\begin{exa}
  Viewing objects of $\EE$ as proof-relevant predicates as described above, the exponential
  $\prn{C,\Phi}^{\prn{D,\Psi}}$ is given by the following pair $\prn{C^D,\Xi}$ where $\Xi$ is
  defined as follows (writing $\epsilon$ for the evaluation map associated with $C^D$):
  \[
    \Xi\prn{f} = \Prod{d \in \Hom{\ObjTerm{}}{D}} \Psi\prn{d} \to \Phi\prn{\epsilon\gls{f,d}}
  \]
\end{exa}

Informally, therefore, we view $\GL{\Mor[F]{\CC}{\DD}}$ as the category of $\DD$-valued predicates
on $\CC$ and the construction of exponentials, products, \etc{} within $\GL{F}$ corresponds to
defining a logical relation on $\CC$. See Mitchell and Scedrov~\cite{mitchell:1993} for an
exposition on this perspective.

Carrying out a normalization-by-gluing proof, therefore, turns the classical approach on its
head. Originally one defined the normalization algorithm then showed it to be sound, complete, and
idempotent. When carrying out the proof by gluing, the algorithm is not defined up front. Instead,
one carefully constructs a gluing category $\GL{F}$ built on a functor out of the category of
contexts of the initial model $\mathcal{I}$. Concretely, this is the category of syntactic contexts
and simultaneous substitutions between them up to definitional equality. The heart of the argument
then breaks down into three steps:

\begin{enumerate}
\item We show that $\GL{F}$ supports a particular model of type theory $\mathcal{G}$.
\item We define a \emph{reify} operation which sends terms from $\mathcal{G}$ to normal forms.
\item We show that the projection $\pi_0$ induces a morphism of models
  $\Mor{\mathcal{G}}{\mathcal{I}}$ and that for a given term $x$ in $\mathcal{G}$ reifying $x$
  yields a normal form for $\pi_0\prn{x}$.
  \label{item:strict-equations}
\end{enumerate}

In particular, types in $\mathcal{G}$ will be chosen such that they consist of a type from the
initial model along with a proof-relevant predicate carving out those terms which have (suitably
hereditary) normal forms. A term in this model is then a term from the syntactic model together with
a witness for the proof-relevant predicate associated with the type.

The first step and the universal property of the initial model produces a morphism of models
$\Mor[i]{\mathcal{I}}{\mathcal{G}}$ and the second step ensures that $\pi_0 \circ i =
\ArrId{}$. Remarkably, this already defines a sound and complete normalization algorithm. The
algorithm simply takes a syntactic term $M : A$, regards it as an element of the initial model, and
then reifies $i\prn{M}$ to obtain the normal form. Moreover, because $\pi_0 \circ i = \ArrId{}$ we
conclude that this yields a normal form for the supplied $M$.

To a coarse approximation, the construction of $\mathcal{G}$ and reification specifies the
normalization algorithm and proves its soundness in a single step. The attentive reader will notice,
however, that the completeness requirement from Section~\ref{sec:introduction:normalization} seems
to be absent from this new story. In fact, in this approach completeness is automatic and no proof
is required. Indeed, terms and types within the initial model are realized by equivalences classes
of syntactic terms and types taken up to definitional equality. Accordingly, the morphism $i$---and
therefore the normalization algorithm---cannot distinguish between definitional equal terms.

One might suspect that working with equivalence classes of terms when defining $\mathcal{G}$ simply
causes the burden to shift so that---while there is no need to prove completeness separately---the
work of such a proof is spread throughout the construction of $\mathcal{G}$. In fact the opposite is
the case: working with terms up to definitional equality substantially simplifies the construction
of $\mathcal{G}$. Connectives in type theory only have universal properties up to definitional
equality. Only when working with equivalences classes therefore, can we use these universal
properties and benefit from existing results. For instance, we shall see that our construction of
dependent products in our gluing model is essentially mechanical.

The gluing approach yields other unexpected advantages. Recall that $\GL{F}$ intuitively consists of
\emph{proof-relevant} predicates. This proof relevance is crucial to an elegant treatement of
universes in the model~\cite{coquand:2019}. We are able to define the predicate associated with an
element of a universe to consist not only of an appropriate normal form but to also contain the data
of the type it encodes within the model. In proof-irrelevant settings, universes were a frequent
source of difficulty which necessitated laborious techniques to encode~\cite{allen:1987}.

\subsection{Synthetic Tait computability}
\label{sec:introduction:stc}

Using gluing to prove normalization is certainly an improvement over `free-hand' proofs of
normalization-by-evaluation, but the picture is not as rosy as it may first appear. Models of type
theory are subject to a variety of strict equations (see Item \ref{item:strict-equations} on
page \pageref{item:strict-equations}) which often
force external constructions, where naturality obligations can be prohibitive. Worse, the passage
between mathematics internal to the gluing category and external constructions is difficult
and the boundary frequently raises mismatches.

We follow Sterling and Harper~\cite{sterling:modules:2021} and adopt a synthetic approach to
gluing. We begin with two crucial observations. First, while models of type theory are strangely
behaved objects, one can often embed a model into a presheaf topos and thereby work in an extremely
rich setting. Second, when gluing together presheaf topoi along a nice functor
$\GL{\Mor[F]{\PSH{\CC}}{\PSH{\DD}}}$, the result is another presheaf topos and the internal language
of this topos contains lex idempotent monads ($\Open,\Closed$) allowing one to recover both
$\PSH{\CC}$ and $\PSH{\DD}$.

Sterling and collaborators have then shown that it is possible to work exclusively within the
internal language of $\GL{F}$ to construct the normalization model and have termed this approach
\emph{synthetic Tait computability (STC)}. Experience has shown that working internally simplifies
constructions involved in the gluing model, making it practical to prove metatheorems for even
extremely complex type theories like cubical type
theory~\cite{sterling:modules:2021,sterling:2021,sterling:phd,gratzer:lob:2022,sterling:2022}.

Proofs using STC construct the model within $\GL{F}$ by defining a sequence of constants within the
internal language. Accordingly, the heart of the normalization proof is realized by a series of
programming exercises in extensional type theory. This alone does not remove the strict equations
that cause trouble with typical gluing proofs but it does provide a systematic approach to handling
them. Concretely, within an STC proof, all the required strict equations have a particular form: for
some type operator in the object theory, we are given an element $\mathsf{op} : \Open \Ty{}$
corresponding to the operator in the syntactic model, and we must extend this to an element of
$\Ty{}$. Within the internal language, the two components of this problem (the element of $\Ty{}$
and the proof that it extends $\mathsf{op}$) can be represented by an element of the following
dependent sum:\footnote{Here we have used standard syntactic sugar to represent the monadic
  operations of $\Open$.}
\[
  \Sum{A : \Ty{}} x \gets \mathsf{op}; \Open \prn{A = x}
\]
The second component in particular represents the aforementioned strict equation. In practice, it
is easy to obtain an element of $\Ty{}$ which extends $\mathsf{op}$ up to isomorphism \ie{} an
element of the following type:
\[
  \Sum{A : \Ty{}} x \gets \mathsf{op}; \Open \prn{\Tm{}\prn{A} \cong \Tm{}\prn{x}}
\]

Remarkably, this proves to be enough. The internal language of $\GL{F}$ supports a strictification
axiom~\cite{orton:2018} which provides a section to the canonical projection from the first type to
the second. We are therefore able to construct various connectives which agree only up to
isomorphism with their syntactic counterparts and correct them to construct the model. For instance,
a dependent product is determined by a universal property and it is possible to construct a type in
$\GL{F}$ with this property by virtue of general categorical theorems. However, the result will only
satisfy the required equation up to isomorphism. The strictification axiom allows STC proofs to
benefit from the general categorical result without resorting to unfolding the construction supplied
by the abstract argument.

\subsubsection{Synthetic Tait computability for \MTT{}}
Unlike Martin-L{\"o}f type theory or cubical type theory, a model of \MTT{} is not a single category
equipped with additional structure. Rather, a model is a network of categories, each supporting
their own individual model of type theory which are then connected by various adjoints and natural
transformations. The internal language of any of these categories is insufficient to construct the
gluing model, so it is necessary to generalize from working in the extensional type theory of a
topos to working in all topoi simultaneously using extensional \MTT{}. Each topos then comes
equipped with the structure of STC: a pair of lex monads and a strictification axiom. We prove that
this mode-local structure is respected by the \MTT{} modalities between topoi and call the resulting
language \emph{multimodal synthetic Tait computability}. The smooth interaction between MTT
modalities and the lex monads $\Open$ and $\Closed$ ensures that the key techniques of STC proofs
can be generalized to multimodal STC.

With this machinery, we are able to give a concise and conceptual construction of the gluing model
and extract the first normalization algorithm for multimodal type theory. In practice, this internal
proof is necessary; removing the simplifying assumption on substitutions used in the canonicity
proof given by Gratzer et al.~\cite{gratzer:mtt-journal:2021} is already nearly intractable.

\subsection{Contributions}

We contribute a normalization algorithm for \MTT{} equipped with the full suite of connectives:
dependent sums, products, booleans, intensional identity types, a universe, and modal types. In
addition to the usual corollaries of normalization (decidability of type checking, injectivity of
type constructors, \etc{}), this sharpens the canonicity result of Gratzer et
al.~\cite{gratzer:2020}. This algorithm applies to any choice of mode theory and therefore
simultaneously establishes normalization results for many specialized modal calculi.

In order to prove this result, we advance modern gluing techniques to apply to modal type theories
and demonstrate that extensional \MTT{} itself is a suitable metalanguage for carrying out the proof
of normalization-by-gluing. We further argue that these techniques scale by extending the proof to a
version of \MTT{} supplemented with crisp induction principles and deduce that \eg{}, normalization
continues to hold.

Section~\ref{sec:mtt} gives a brief tutorial on \MTT{} and introduces normal forms for this type
theory. In Section~\ref{sec:models}, we discuss the models of \MTT{} and relax the definition of a
model of \MTT{} to obtain \emph{\MTT{} cosmoi}. We prove that the syntactic cosmos enjoys a
privileged position among \MTT{} cosmoi
(Theorem~\ref{thm:cosmoi:quasi-initiality}). Section~\ref{sec:mstc} introduces \emph{multimodal
  synthetic Tait computability} and shows that gluing together a network of topoi results in a model
of extensional \MTT{} equipped with STC structure in each mode
(Theorem~\ref{thm:mstc:fundamental}). Finally, in Section~\ref{sec:normalization-cosmos} we
construct the normalization cosmos (Theorem~\ref{thm:model:mtt-cosmos}) and extract the
normalization function in Section~\ref{sec:normalization}
(Theorem~\ref{thm:normalization:normalization}). Section~\ref{sec:crisp} discusses an extension of
this proof to support crisp induction.

\section{A primer on \MTT{}}
\label{sec:mtt}

We collect the key ideas of \MTT{}~\cite{gratzer:mtt-journal:2021}. First, as mentioned in
Section~\ref{sec:introduction}, \MTT{} is parametrized by a mode theory: a strict 2-category $\Mode$
whose objects are modes, morphisms are modalities, and 2-cells are natural transformations between
modalities. Henceforth, we will work with \MTT{} over a fixed mode theory $\Mode$.

\MTT{} plays two distinct roles in this paper. First, it is the object theory under consideration
and the subject of our normalization theorem. However, as the proof of normalization uses \MTT{} as
an internal language to construct the normalization model \MTT{} is also used as a metalanguage.
These two different uses invite two very distinct perspectives on the type theory. In order to
crystallize \MTT{} precisely enough for the normalization result, we will view \MTT{} as a
particular generalized algebraic theory (GAT). Accordingly, binding is handled by De Bruijn indices
and the theory uses explicit substitutions~\cite{martin-lof:1992}. On the other hand, we will not
use De Brujin indices and explicit substitutions when working with \MTT{} as a metalanguage. In
these instances, we will treat \MTT{} as a normal type theory and avail ourselves of conveniences
similar to what a proof assistant like Agda might provide.

As a compromise, we introduce \MTT{} in Sections~\ref{sec:primer:mode-local} and
\ref{sec:primer:modalities} as a formal theory but go through several important constructions in
Section~\ref{sec:primer:examples} using the informal surface-language employed by much of
Section~\ref{sec:normalization-cosmos}. For a comprehensive account of both perspectives, we refer
the reader to Gratzer et al.~\cite{gratzer:mtt-journal:2021}.

\subsection{Mode-local connectives in \MTT{}}
\label{sec:primer:mode-local}

Each mode in \MTT{} constitutes its own separate type theory. In fact, each mode $m$ is equipped
with its own copy the of judgments of type theory \eg{}, $\IsCx{\Gamma}$, $\IsTy{A}$,
$\IsTm{M}{A}$. Much of the theory of \MTT{} is \emph{mode-local} and only mentions a single copy of
these judgments at a time. For these connectives the rules are precisely the standard rules from
\MLTT{}, replicated for each mode. The connectives of type theory---dependent sums, intensional
identity types, booleans---are all incorporated in this fashion. Each mode also contains a
\emph{weak} universe {\`a la Tarski}. Explicitly, this means that there are separate codes and an
$\Dec{-}$ operation decoding a code to a type, but the decoding operation only commutes with
connectives up to isomorphism. While the restriction to weak universes is not fundamental, it
simplifies the proof and recent implementations have shown them to be practical~\cite{cooltt}.

\subsection{Modalities in \MTT{}}
\label{sec:primer:modalities}

The novelty of \MTT{} comes from those connectives which mix two modes: the modalities.  \MTT{}
draws inspiration from Fitch-style type theories~\cite{clouston:fitch:2018,birkedal:2020} and
defines each modality together with an adjoint action on contexts. Accordingly, each
$\Mor[\mu]{n}{m}$ defines a context former sending contexts in mode $m$ to contexts in mode $n$ and
this is then used to define modal types $\Modify{A}$:
\begin{mathparpagebreakable}
  \inferrule{
    \IsCx{\Gamma}
  }{
    \IsCx{\LockCx{\Gamma}}<n>
  }
  \and
  \inferrule{
    \IsTy[\LockCx{\Gamma}]{A}<n>
  }{
    \IsTy{\Modify{A}}
  }
  \and
  \inferrule{
    \IsTm[\LockCx{\Gamma}]{M}{A}<n>
  }{
    \IsTm{\MkBox{M}}{\Modify{A}}
  }
\end{mathparpagebreakable}

These context operations assemble into a 2-functor $m \mapsto \Cx{m}$ from $\Coop{\Mode}$ to the
category of categories, selecting the various categories of contexts.\footnote{Given a 2-category $\CC$, recall that $\Coop{\CC}$ is a 2-category
  with the same objects as $\CC$ but with 1- and 2-cells reversed.} Concretely, a substitution
$\IsSb[\Delta]{\gamma}{\Gamma}$ lifts to a substitution
$\IsSb[\LockCx{\Delta}]{\LockSb{\gamma}}{\LockCx{\Gamma}}<n>$ and each 2-cell
$\Mor[\alpha]{\nu}{\mu}$ induces a substitution
$\IsSb[\LockCx{\Gamma}]{\Key{\alpha}{}}{\LockCx{\Gamma}<\nu>}<n>$. These operations satisfy several
equations to organize them into a 2-functor \eg{},
$\EqSb[\LockCx{\Gamma}]{\LockSb{\ISb}}{\ISb}{\LockCx{\Gamma}}<n>$ and
$\EqCx{\LockCx{\LockCx{\Gamma}}<\xi>}{\LockCx{\Gamma}<\mu\circ\xi>}<o>$. We record these rules in
Figure~\ref{fig:mtt:substitutions}.

\begin{figure}
  \begin{mathpar}
    \inferrule{
      \Mor[\mu]{n}{m}
      \\
      \IsCx{\Gamma}
    }{
      \IsCx{\LockCx{\Gamma}}<n>
    }
    \and
    \inferrule{
      \Mor[\mu]{n}{m}
      \\
      \IsSb{\delta}{\Delta}
    }{
      \IsSb[\LockCx{\Gamma}]{\LockSb{\delta}}{\LockCx{\Delta}}<n>
    }
    \and
    \inferrule{
      \Mor[\mu]{n}{m}
      \\
      \IsSb[\Gamma]{\delta_0}{\Delta_0}
      \\
      \IsSb[\Delta_0]{\delta_1}{\Delta_1}
    }{
      \EqSb[\LockCx{\Gamma}]{
        \LockSb{\prn{\delta_1 \circ \delta_0}}
      }{
        \LockSb{\delta_1} \circ \LockSb{\delta_0}
      }{
        \LockCx{\Delta_1}
      }<n>
    }
    \and
    \inferrule{
      \Mor[\mu]{n}{m}
      \\
      \IsCx{\Gamma}
    }{
      \EqSb[\LockCx{\Gamma}]{
        \LockSb{\ISb}
      }{
        \ISb
      }{
        \LockCx{\Gamma}
      }<n>
    }
    \and
    \inferrule{
      \Mor[\nu]{o}{n}
      \\
      \Mor[\mu]{n}{m}
      \\
      \IsCx{\Gamma}
    }{
      \EqCx{\LockCx{\Gamma}<\mu\circ\nu>}{\LockCx{\LockCx{\Gamma}}<\nu>}<o>
    }
    \and
    \inferrule{
      \Mor[\nu]{o}{n}
      \\
      \Mor[\mu]{n}{m}
      \\
      \IsSb{\delta}{\Delta}
    }{
      \EqSb[\LockCx{\Gamma}<\mu\circ\nu>]{
        \LockSb{\LockSb{\delta}}<\nu>
      }{
        \LockSb{\delta}<\mu\circ\nu>
      }{
        \LockCx{\Delta}<\mu\circ\nu>
      }<o>
    }
    \and
    \inferrule{
      \Mor[\mu,\nu]{n}{m}
      \\
      \Mor[\alpha]{\nu}{\mu}
      \\
      \IsSb{\delta}{\Delta}
    }{
      \IsSb[\LockCx{\Gamma}<\mu>]{
        \Key{\alpha}{\Gamma}
      }{
        \LockCx{\Gamma}<\nu>
      }<n>
    }
    \and
    \inferrule{
      \Mor[\mu]{n}{m}
      \\
      \IsCx{\Gamma}
    }{
      \EqSb[\LockCx{\Gamma}]{\ISb}{\Key{\ArrId{}}{\Gamma}}{\LockCx{\Gamma}}<n>
    }
    \and
    \inferrule{
      \IsCx{\Gamma,\Delta}
      \\
      \Mor[\mu,\nu]{n}{m}
      \\
      \IsSb{\delta}{\Delta}
      \\
      \Mor[\alpha]{\nu}{\mu}
    }{
      \EqSb[\LockCx{\Gamma}]{
        \Key{\alpha}{\Gamma} \circ \prn{\LockSb{\delta}}
      }{
        \prn{\LockSb{\delta}<\nu>} \circ \Key{\alpha}{\Delta}
      }{
        \LockCx{\Delta}<\nu>
      }<n>
    }
    \and
    \inferrule{
      \IsCx{\Gamma}<m>
      \\
      \Mor[\mu_0, \mu_1, \mu_2]{n}{m}
      \\
      \Mor[\alpha_0]{\mu_0}{\mu_1}
      \\
      \Mor[\alpha_1]{\mu_1}{\mu_2}
    }{
      \EqSb[
        \LockCx{\Gamma}<\mu_2>
      ]{
        \Key{\alpha_1 \circ \alpha_0}{\Gamma}
      }{
        \Key{\alpha_0}{\Gamma} \circ \Key{\alpha_1}{\Gamma}
      }{
        \LockCx{\Gamma}<\mu_0>
      }<n>
    }
    \and
    \inferrule{
      \IsCx{\Gamma}
      \\
      \Mor[\nu_0, \nu_1]{o}{n}
      \\
      \Mor[\mu_0, \mu_1]{n}{m}
      \\
      \Mor[\beta]{\nu_0}{\nu_1}
      \\
      \Mor[\alpha]{\mu_0}{\mu_1}
    }{
      \EqSb[
        \LockCx{\Gamma}<\mu_1 \circ \nu_1>
      ]{
        \Key{\alpha \HComp \beta}{\Gamma}
      }{
        \LockSb{\Key{\alpha}{\Gamma}}<\nu_0> \circ \Key{\beta}{\LockCx{\Gamma}<\mu_1>}
      }{
        \LockCx{\Gamma}<\mu_0 \circ \nu_0>
      }<o>
    }
  \end{mathpar}
  \caption{Key rules for contexts and substitutions in \MTT{}}
  \label{fig:mtt:substitutions}
\end{figure}

Two basic questions remain: what is the elimination principle for $\Modify{A}$ and which terms can
be constructed in the context $\LockCx{\Gamma}$? Both of these problems are addressed through the
same idea, the final component of \MTT{}. We generalize the context extension $\OldECx{\Gamma}{A}$
from \MLTT{} to annotate each variable with a modality:
\[
  \inferrule{
    \IsCx{\Gamma}
    \\
    \IsTy[\LockCx{\Gamma}]{A}<n>
  }{
    \IsCx{\ECx{\Gamma}{A}}
  }
\]
Intuitively, $\ECx{\Gamma}{A}$ plays the same role as $\OldECx{\Gamma}{\Modify{A}}$ and comes
equipped with a similar universal property: a substitution $\IsSb[\Delta]{\gamma}{\ECx{\Gamma}{A}}$
is precisely determined by a substitution $\IsSb[\Delta]{\gamma'}{\Gamma}$ and a term
$\IsTm[\LockCx{\Delta}]{M}{\Sb{A}{\LockSb{\gamma'}}}<n>$. The ordinary context extension $\Gamma.A$
is recovered by taking $\mu = \ArrId{}$; the equation $\LockCx{\Gamma}<\ArrId{}> = \Gamma$ ensures
that the universal properties of $\Gamma.A$ and $\ECx{\Gamma}{A}<\ArrId{}>$ match.

Despite the similarities between $\ECx{\Gamma}{A}$ and $\ECx{\Gamma}{\Modify{A}}<\ArrId{}>$, they
occupy different positions in the theory. The variable rule of \MTT{} is adjusted to take into
account modal annotations and require that the modalities in the context must cancel a variable's
annotation:
\[
  \inferrule{
    \IsCx{\Gamma}
    \\
    \IsTy[\LockCx{\Gamma}]{A}<n>
  }{
    \IsTm[\LockCx{\ECx{\Gamma}{A}}]{\Var{0}}{\Sb{A}{\LockSb{\Wk}}}<n>
  }
\]

As in Martin-L{\"o}f type theory, it is necessary to apply a weakening substitution $\Wk$ to $A$
when describing the type of $\Var{0}$. The normal variable rule arises again as a special case after
setting $\mu = \ArrId{}$. Note that attempting to state such a variable rule for
$\ECx{\Gamma}{\Modify{A}}<\ArrId{}>$ would quickly introduce issues around substitution within the
theory, so these two contexts behave quite differently in practice.

\begin{rem}
  From the view of Fitch-style type theories where $\LockCx{-}$ is left adjoint to the modal type,
  this rule plays the role of the counit; it allows us to pass from $L(R(A))$ to $A$.
\end{rem}

The addition of modal annotations creates a redundancy in our system: we may hypothesize of
$\Modify[\mu]{A}$ with annotation $\nu$ or directly hypothesize over $A$ with annotation
$\nu\circ\mu$. There is a substitution navigating in one direction, but not the other:
\[
  \IsSb[\ECx{\Gamma}{A}<\nu\circ\mu>]{\ESb{\Wk}{\MkBox[\mu]{\Var{0}}}}{\ECx{\Gamma}{\Modify[\mu]{A}}<\nu>}<o>
\]

This mismatch is addressed through elimination for $\Modify[\mu]{-}$. Informally, this rule ensures
that these two contexts are isomorphic `from the perspective of a type':\footnote{Formally, this
  rule ensures that, among others, this map is anodyne in the sense of Awodey~\cite{awodey:2018}.}
\begin{mathparpagebreakable}
  \inferrule{
    \Mor[\nu]{m}{o}
    \\
    \Mor[\mu]{n}{m}
    \\\\
    \IsCx{\Gamma}<o>
    \\
    \IsTy[\LockCx{\LockCx{\Gamma}<\nu>}<\mu>]{A}<n>
    \\
    \IsTy[\ECx{\Gamma}{\Modify[\mu]{A}}<\nu>]{B}
    \\
    \IsTm[\LockCx{\Gamma}<\nu>]{M_0}{\Modify[\mu]{A}}<m>
    \\
    \IsTm[\ECx{\Gamma}{A}<\nu\circ\mu>]{M_1}{\Sb{B}{\ESb{\Wk}{\MkBox[\mu]{\Var{0}}}}}<o>
  }{
    \IsTm{\LetMod{M_0}{M_1}<\nu>[\mu]}{\Sb{B}{\ESb{\ISb}{M_0}}}<o>
  }
  \and
  \LetMod{\MkBox[\nu]{M_0}}{M_1}<\nu>[\mu] = \Sb{M_1}{\ESb{\ISb}{M_0}}
\end{mathparpagebreakable}

Notice that the elimination rule for the modal type $\Modify[\mu]{-}$ is parameterized by an
additional modality $\nu$. We refer to $\mu$ as the \emph{main modality} and $\nu$ as the
\emph{framing modality}.

\begin{rem}
  Fitch-style type theories require
  $\IsSb[\ECx{\Gamma}{A}<\nu\circ\mu>]{\ESb{\Wk}{\MkBox[\mu]{\Var{0}}}}{\ECx{\Gamma}{\Modify[\mu]{A}}<\nu>}<o>$
  to be invertible. Such an inverse, however, again disrupts substitution in the presence of
  multiple modalities. For an extended discussion of this point and various potential solutions, see
  Gratzer et al.~\cite{gratzer:fitchtt:2022}.
\end{rem}

In addition to modal types, dependent products in \MTT{} are also modalized so that $A \to B$ is
replaced by $\Fn{A}{B}$:
\begin{mathparpagebreakable}
  \inferrule{
    \IsTm[\ECx{\Gamma}{A}]{M}{B}
  }{
    \IsTm{\Lam{M}}{\Fn{A}{B}}
  }
  \and
  \inferrule{
    \IsTm{M}{\Fn{A}{B}}
    \\
    \IsTm[\LockCx{\Gamma}]{N}{A}<n>
  }{
    \IsTm{\App{M}{N}}{\Sb{B}{\ESb{\ISb}{N}}}
  }
\end{mathparpagebreakable}

This feature is a useful convenience; it ensures that many functions avoid the need to accept an
argument of modal type only to immediately apply the elimination rule. We will see frequent examples
of this pattern later as \MTT{} is used as a metalanguage.

\subsection{Standard combinators within \MTT{}}
\label{sec:primer:examples}

\NewDocumentCommand{\Triv}{}{\mathsf{triv}}
\NewDocumentCommand{\Comp}{}{\mathsf{comp}}

As the assignment $\Gamma \mapsto \LockCx{\Gamma}$ is pseudofunctorial, its adjoint action on types is
likewise functorial \emph{up to propositional equality}. In particular, there are equivalences
$\Triv : \Modify[\ArrId{}]{A} \to A$ and
$\Comp : \Modify{\Modify[\nu]{A}} \to \Modify[\mu\circ\nu]{A}$:

\begin{align*}
  &\Triv\prn{x} = \LetMod{x}[y]{y}<\ArrId{}>[\ArrId{}]
  \\
  &\Triv^{-1}\prn{x} = \MkBox[\ArrId{}]{x}
  \\[0.1cm]
  &\Comp\prn{x} = \LetMod{x}[y_0]{\LetMod{y_0}[y_1]{\MkBox[\mu\circ\nu]{y_1}}<\nu>[\mu]}<\mu>[\ArrId{}]
  \\
  &\Comp^{-1}\prn{x} = \LetMod{x}[y]{\MkBox[\mu]{\MkBox[\nu]{y}}}<\mu\circ\nu>[\ArrId{}]
\end{align*}

Each modality $\Modify[\mu]{-}$ also satisfies the modal principle referred to as \emph{axiom K}
\ie{}, they preserve finite products. In practice, this property serves as an internalization of
functoriality as it provides a canonical comparison map
$\Modify{A \to B} \to \Modify{A} \to \Modify{B}$. In fact, we can prove a dependent version of this
map as in Birkedal et al.~\cite{birkedal:2020}:
\begin{align*}
  &\prn{\ZApp{}} :
  \Modify[\mu]{\prn{x : A} \to B\prn{x}}
  \to \prn{a : \Modify[\mu]{A}}
  \to \LetMod{a}[a_0]{\Modify{B\prn{a_0}}}[1]
  \\
  &f \ZApp{} a = \LetMod{f}[f_0]{\LetMod{a}[a_0]{\MkBox{f_0\prn{a_0}}}[1]}[1]
\end{align*}
In functional programming parlance, modalities are \emph{applicative functors} though without an
operation $A \to \Modify{A}$~\cite{mcbride:2008}.

While it is far less useful, one can also define a version of $\ZApp{}$ using the modalized
dependent product rather than accepting elements of $\Modify{-}$:
\begin{align*}
  &\prn{\ZApp{}'} :
  \DeclNameless{\prn{x : A} \to B\prn{x}}
  \to \DeclVar{a}{A}
  \to \Modify{B\prn{a}}
  \\
  &f \mathbin{\ZApp{}'} a = \MkBox{f\prn{a}}
\end{align*}

This is indicative of a common pattern; it is typically far more concise to use the modalized
dependent product instead of accepting $\Modify{-}$ in order to avoid needing to immediately eliminate
arguments.

\subsection{Normal and neutral forms in \MTT{}}
\label{sec:primer:normals-and-neutrals}

As mentioned in Section~\ref{sec:introduction:normalization}, the starting point for normalization
is the definition of normal form. In \MTT{}---as in other type theories---normal forms are presented
together with a class of neutral forms. Intuitively, normal forms capture terms in $\beta$-normal
and $\eta$-long form while neutrals are chains of eliminations applied to a variable.

We define normal and neutral forms as separate syntactic classes, equipped with their own family of
typing judgments and decoding functions sending them to terms. Dependency complicates this
definition as various typing rules require substitution in the types of premises or the
conclusion. Unfortunately, it is just as hard to define substitution on normal forms as it is to
define normalization in general~\cite{watkins:2004}. Accordingly, a normal form (resp. neutral,
normal type) is typed by the judgment $\IsNf{u}{A}$ (resp. $\IsNe{e}{A}$, $\IsNfTy{\tau}$) where $A$
is not required to be any sort of normal form. Furthermore, these judgments are defined
inductive-recursively with decoding functions $\DecNf{u}$ (resp. $\DecNe{e}$, $\DecNfTy{\tau}$)
which send a normal form (resp. neutral, normal type) to its corresponding piece of syntax.  Normal
and neutral forms for mode-local connectives are unchanged from their standard presentation in type
theory:
\begin{grammar}
  Normals & u &
  \NfLam{u} \GrmSep{} \NfInj{e} \GrmSep{} \NfMkBox{u} \GrmSep{} \dots
  \\
  Neutral & e &
  \NeVar{k}{\alpha} \GrmSep{} \NeApp{e}{u} \GrmSep{} \NeLetMod{\mu}{\nu}{\tau}{e}{u} \GrmSep{} \dots
  \\
  Normal types & \tau &
  \NfFn{\tau}{\sigma} \GrmSep{} \NfModify{\tau} \GrmSep{} \NfDec{u} \GrmSep{} \dots
\end{grammar}

We defer a more complete presentation of the judgments and decoding function to
Figure~\ref{fig:app:normals-and-neutrals}, but remark that the neutral form for variables is
annotated with a 2-cell and index, decoding to $\Var{0}$ together with a combination of weakening
and 2-cell substitutions $\Wk$ and $\Key{\alpha}{}$. Note that we require that $\Dec{-}$ commute
with type formers only up to isomorphism (weak Tarski universes) we must include neutral and normal
forms for \eg{}, $\Dec{\ModifyCode{A}}$ as well as other type connectives. We include only those for
$\ModifyCode{-}$ as they are representative of the general pattern.

\begin{figure*}
  \begin{mathparpagebreakable}
    \inferrule{ }{
      \IsRen{\EmpRen}{\EmpTele}
      \\
      \DecRen{\EmpRen} = \EmpSb
    }
    \and
    \inferrule{ }{
      \IsRen[\ECx{\Gamma}{A}]{\WkRen}{\Gamma}
      \\
      \DecRen{\WkRen} = \Wk
    }
    \and
    \inferrule{ }{
      \IsRen{\IRen}{\Gamma}
      \\
      \DecRen{\IRen} = \ISb
    }
    \and
    \inferrule{
      \IsRen[\Gamma_0]{r}{\Gamma_1}
      \\
      \IsRen[\Gamma_1]{s}{\Gamma_2}
    }{
      \IsRen[\Gamma_0]{s \circ r}{\Gamma_2}
      \\
      \DecRen{s \circ r} = \DecRen{s} \circ \DecRen{r}
    }
    \and
    \inferrule{
      \IsRen{r}{\Delta}
    }{
      \IsRen[\LockCx{\Gamma}]{\LockRen{r}}{\LockCx{\Delta}}<n>
      \\
      \DecRen{\LockRen{r}} = \LockSb{\DecRen{r}}
    }
    \and
    \inferrule{
      \Mor[\mu, \nu]{n}{m}
      \\
      \Mor[\alpha]{\nu}{\mu}
    }{
      \IsRen[\LockCx{\Gamma}]{\KeyRen{\alpha}{\Gamma}}{\LockTele{\Gamma}<\nu>}<n>
      \\
      \DecRen{\KeyRen{\alpha}{\Gamma}} = \Key{\alpha}{\Gamma}
    }
    \and
    \inferrule{
      \IsRen{r}{\Delta}<m>
      \\
      \IsNe[\LockCx{\Gamma}]{\NeVar{k}{\alpha}}{\Sb{A}{\LockSb{\DecRen{r}}}}<n>
    }{
      \IsRen{\ERen{r}{\NeVar{k}{\alpha}}}{\ECx{\Delta}{A}}
      \\
      \DecRen{\ERen{r}{\NeVar{k}{\alpha}}} = \ESb{\DecRen{r}}{\DecNe{\NeVar{k}{\alpha}}}
    }
  \end{mathparpagebreakable}
  \caption{Complete definition of renamings}
  \label{fig:app:renamings}
\end{figure*}

\begin{figure*}
  \begin{mathparpagebreakable}
    \inferrule{ }{
      \IsNfTy{\NfBool}
      \and
      \IsNfTy{\NfUni}
    }
    \and
    \inferrule{
      \IsNfTy[\Gamma]{\tau}
      \\
      \IsNfTy[\ECx{\Gamma}{\DecNfTy{\tau}}]{\sigma}
    }{
      \IsNfTy{\NfFn{\tau}{\sigma}}
    }
    \and
    \inferrule{
      \IsNfTy{\tau}
      \\
      \IsNfTy[\ECx{\Gamma}{\DecNfTy{\tau}}<\ArrId{}>]{\sigma}
    }{
      \IsNfTy{\NfProd{\tau}{\sigma}}
    }
    \and
    \inferrule{
      \IsNfTy{\tau}
      \\
      \IsNf{u,v}{\DecNfTy{\tau}}
    }{
      \IsNfTy{\NfId{\tau}{u}{v}}
    }
    \and
    \inferrule{
      \IsNfTy[\LockCx{\Gamma}]{\tau}<n>
    }{
      \IsNfTy{\NfModify{\tau}}
    }
    \and
    \inferrule{
      \IsNf{u}{\NfUni}
    }{
      \IsNfTy{\NfDec{u}}
    }
  \end{mathparpagebreakable}
  \begin{mathparpagebreakable}
    \inferrule{
      \Gamma(k) = \DeclNameless{A}
      \\
      \Locks{\Gamma}{k} = \nu
      \\
      \Mor[\alpha]{\mu}{\nu}
    }{
      \IsNe{\NeVar{k}{\alpha}}{
        \Sb{A}{\Key{\alpha}{} \circ (\LockSb{\Wk}<\nu_{k-1}>) \dots \circ (\LockSb{\Wk}<\nu_{0}>)}
      }
    }
    \and
    \inferrule{ }{
      \IsNf{\NfTrue}{\Bool}
      \\
      \IsNf{\NfFalse}{\Bool}
    }
    \and
    \inferrule{
      \IsNe{e}{\Bool}
    }{
      \IsNf{\NfInj{e}}{\Bool}
    }
    \and
    \inferrule{
      \IsNfTy[\ECx{\Gamma}{\Bool}<\ArrId{m}>]{\tau}
      \\
      \IsNe{e}{\Bool}
      \\
      \IsNf{v_1}{\Sb{\DecNfTy{\tau}}{\ESb{\ISb}{\True}}}
      \\
      \IsNf{v_2}{\Sb{\DecNfTy{\tau}}{\ESb{\ISb}{\False}}}
    }{
      \IsNe{\NeBoolRec{\tau}{e}{v_1}{v_2}}{\Sb{\DecNfTy{\tau}}{\ESb{\ISb}{\DecNe{e}}}}
    }
    \and
    \inferrule{
      \IsNf{u}{A}
    }{
      \IsNf{\NfRefl{u}}{\Id{A}{\DecNf{u}}{\DecNf{u}}}
    }
    \and
    \inferrule{
      \IsTm{M_0,M_1}{A}
      \\
      \IsNe{e}{\Id{A}{M_0}{M_1}}
    }{
      \IsNf{\NfInj{e}}{\Id{A}{M_0}{M_1}}
    }
    \and
    \inferrule{
      \IsTm{M_0,M_1}{A}
      \\
      \IsNe{e}{\Id{A}{M_0}{M_1}}
      \\
      \IsNfTy[
        \ECx{\ECx{\ECx{\Gamma}{A}<\ArrId{m}>}{A}<\ArrId{m}>}{\Id{\Sb{A}{\Wk[2]}}{\Var{1}}{\Var{0}}}<\ArrId{m}>
      ]{\tau}
      \\
      \IsNf[\ECx{\Gamma}{A}<\ArrId{}>]{u}{\Sb{\DecNfTy{\tau}}{\ESb{\ESb{\ESb{\ISb}{\Var{0}}}{\Var{0}}}{\Refl{\Var{0}}}}}
    }{
      \IsNe{\NeIdRec{\tau}{u}{e}}{\Sb{\DecNfTy{\tau}}{\ESb{\ESb{\ESb{\ISb}{M_0}}{M_1}}{P}}}
    }
    \and
    \inferrule{
      \IsNf[\ECx{\Gamma}{A}]{u}{B}
    }{
      \IsNf{\NfLam{u}}{\Fn{A}{B}}
    }
    \and
    \inferrule{
      \IsNe{e}{\Fn{A}{B}}
      \\
      \IsNf{u}{A}
    }{
      \IsNe{\NeApp{e}{u}}{\Sb{B}{\ESb{\ISb}{\DecNf{u}}}}
    }
    \and
    \inferrule{
      \IsNf[\LockCx{\Gamma}]{u}{A}<n>
    }{
      \IsNf{\NfMkBox{u}}{\Modify{A}}
    }
    \and
    \inferrule{
      \IsNe{e}{\Modify{A}}
    }{
      \IsNf{\NfInj{e}}{\Modify{A}}
    }
    \and
    \inferrule{
      \IsNe[\LockCx{\Gamma}]{u}{\Modify[\nu]{A}}<n>
      \\
      \IsNfTy[\ECx{\Gamma}{\Modify[\nu]{A}}]{\tau}
      \\
      \IsNf[\ECx{\Gamma}{A}<\mu \circ \nu>]{u}{\Sb{\DecNfTy{\tau}}{\ESb{\Wk}{\MkBox[\nu]{\Var{0}}}}}
    }{
      \IsNe{\NeLetMod{\mu}{\nu}{\tau}{e}{u}}{\Sb{\DecNfTy{\tau}}{\ESb{\ISb}{\DecNf{u}}}}
    }
    \and
    \inferrule{
      \IsNe{e}{\Uni}
    }{
      \IsNf{\NfInj{e}}{\Uni}
    }
    \and
    \inferrule{
      \IsNf[\LockCx{\Gamma}]{u}{\Uni}
    }{
      \IsNf{\NfModifyCode{u}}{\Uni}
    }
    \and
    \inferrule{
      \IsNe{e}{\Uni}
      \\
      \IsNe{f}{\Dec{\DecNe{e}}}
    }{
      \IsNf{\NfInj{f}}{\Dec{\DecNe{e}}}
    }
    \and
    \inferrule{
      \IsTm[\LockCx{\Gamma}]{A}{\Uni}<n>
      \\
      \IsNe{e}{\Dec{\ModifyCode{A}}}
    }{
      \IsNe{\NfDecIso{e}}{\Modify{\Dec{A}}}
    }
    \and
    \inferrule{
      \IsNf{u}{\Modify{\Dec{A}}}
    }{
      \IsNf{\NfDecIso*{u}}{\Dec{\ModifyCode{A}}}
    }
  \end{mathparpagebreakable}
  \caption{Definition of selected normals, neutrals, and normal types}
  \label{fig:app:normals-and-neutrals}
\end{figure*}

To ensure that normal forms are $\eta$-long, neutrals can only be `injected' into normals by
$\NfInj{-}$ for types without an $\eta$ law \eg{}, at modal types but not at dependent
products. Finally, we emphasize that normal forms are freely generated so their equality is
decidable if equality of modalities and 2-cells is decidable. This is more subtle than it may
appear at first blush, and we return to this point in Section \ref{sec:normalization:properties}.

\paragraph{Renamings}
While normal and neutral forms are not stable under substitution, they are stable under the
restricted class of \emph{renamings}. The formal definition of renamings is presented in
Figure~\ref{fig:app:renamings}. Intuitively, they are the smallest class of substitutions closed
under weakening, composition, identity, modal substitutions ($\LockSb{-}$,$\Key{\alpha}{}$), and
extension by variables $\Var{k}^\alpha$.

Renamings are easily seen to act on normal forms, neutral forms, and normal types.  Unlike normals
and neutrals, however, renamings are taken up to a definitional equality which ensures that \eg{},
composition is associative and that modal substitutions organize into a 2-functor. This poses no
issue as the action of renamings on normals and neutrals send definitionally equal renamings to
identical normals and neutrals, ensuring that the action lifts to equivalences classes.

A nontrivial definitional equality on renamings is essential, however, as it ensures that the class
of contexts of mode $m$ and renamings between them organizes into a category $\Ren{m}$ and that the
assignments $m \mapsto \Ren{m}$, $\mu \mapsto \LockCx{-}$, and $\alpha \mapsto \Key{\alpha}{}$
define a 2-functor $\Mor{\Coop{\Mode}}{\CAT}$.
\begin{lem}
  The decoding of renamings to substitutions gives a 2-natural transformation
  $\Mor[\EmbRen[-]]{\Ren{-}}{\Cx{-}}$.
\end{lem}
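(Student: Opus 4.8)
The plan is to verify the claim by a routine — but careful — induction over the inductive generation of renamings described in Figure~\ref{fig:app:renamings}, checking at each step that the decoding $\EmbRen[-]$ is compatible with the action of the two relevant structures: the underlying 1-categorical structure of $\Ren{m}$ (identity, composition) and the 2-functorial structure witnessing $m \mapsto \Ren{m}$ as a 2-functor $\Coop{\Mode} \to \CAT$ (the lock operations $\LockCx{-}$ and the key substitutions $\Key{\alpha}{}$). Concretely, one must produce, for each mode $m$, a functor $\Mor[\EmbRen[m]]{\Ren{m}}{\Cx{m}}$ and then check that this family is 2-natural in $m$. Since $\Ren{m}$ and $\Cx{m}$ have literally the same objects (contexts of mode $m$), $\EmbRen[m]$ is the identity on objects, and the only content is the action on morphisms plus functoriality and naturality equations.

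First I would define $\EmbRen[m]$ on the generators: weakenings $\Wk$ decode to the syntactic weakening substitution, the identity renaming decodes to $\ISb$, composition decodes to composition, $\LockSb{-}$ decodes to $\LockSb{-}$, $\Key{\alpha}{}$ decodes to $\Key{\alpha}{}$, and an extension $\ESb{r}{\Var{k}^\alpha}$ decodes to $\ESb{\EmbRen{r}}{\DecNe{\NeVar{k}{\alpha}}}$ using the decoding of neutral variables already fixed in Section~\ref{sec:primer:normals-and-neutrals}. Next I would check that these assignments respect the definitional equalities imposed on renamings — associativity and unit laws for composition, and the 2-functor laws for $\LockSb{-}$ and $\Key{\alpha}{}$ (the equations of Figure~\ref{fig:mtt:substitutions}). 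This is immediate: every defining equation of $\Ren{m}$ is the restriction of a corresponding equation that already holds in $\Cx{m}$ by the rules of \MTT{}, so $\EmbRen[m]$ sends each side of a renaming equation to a pair of syntactic substitutions that are already definitionally equal. This simultaneously shows $\EmbRen[m]$ is well-defined on equivalence classes and that it is a functor. The 2-naturality conditions — that $\EmbRen[-]$ commutes with the $\LockSb{-}$ action of a modality $\mu$ and with the $\Key{\alpha}{}$ action of a 2-cell $\alpha$ — then hold by construction, since $\EmbRen[-]$ was defined to send these operations to the corresponding syntactic operations, which are exactly the data making $m \mapsto \Cx{m}$ a 2-functor $\Coop{\Mode} \to \CAT$.

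The only step requiring genuine care — and the one I expect to be the main obstacle — is the clause for variable extension. One must confirm that the decoding of the neutral variable $\NeVar{k}{\alpha}$ is exactly a well-typed term of the form demanded by the universal property of $\ECx{\Gamma}{A}$, i.e. a term over $\LockCx{\Delta}$ obtained as $\Var{0}$ precomposed with the appropriate combination of $\Wk$'s and a single $\Key{\alpha}{}$; and that under composition of renamings this decoding interacts correctly with $\EmbRen[m]$ of the underlying renaming, so that $\EmbRen[m]\prn{\ESb{r}{\Var{k}^\alpha} \circ s} = \EmbRen[m]\prn{\ESb{r}{\Var{k}^\alpha}} \circ \EmbRen[m]\prn{s}$. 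This amounts to checking that the substitution action on a decoded neutral variable agrees on the nose with the syntactic substitution of $\Var{0}$ and its attendant weakening/key prefix, which follows from the $\beta$-style computation rules for context extension together with the 2-functor equations relating $\Key{\alpha}{}$ to composition and to $\LockSb{-}$ (the last three rules of Figure~\ref{fig:mtt:substitutions}). Once this bookkeeping is discharged, 2-naturality of $\EmbRen[-]$ is a formality.
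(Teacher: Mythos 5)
Your proposal is correct and is exactly the argument the paper intends: the decoding is already defined generator-by-generator in Figure~\ref{fig:app:renamings}, each defining equation of $\Ren{m}$ is the image of an equation already valid in the substitution calculus of Figure~\ref{fig:mtt:substitutions} (so the decoding descends to equivalence classes and is functorial), and 2-naturality holds by construction since locks and keys are sent to locks and keys. The paper omits the proof as routine, and your elaboration — including the extra care at the variable-extension clause — fills it in faithfully.
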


\section{Models and cosmoi}
\label{sec:models}

Gratzer et al.~\cite{gratzer:mtt-journal:2021} introduced \MTT{} as a generalized algebraic theory
so that \MTT{} is automatically equipped with a category of models. A standard result of GATs
ensures that the syntax of \MTT{} organizes into an initial model which opens the possibility
of semantic methods for proving results about syntax. Gratzer et al.~\cite{gratzer:mtt-journal:2021}
then repackages the definition of models in the language of natural models~\cite{awodey:2018}.

\subsection{Natural models of \MTT{}}
\label{sec:cosmoi:models}

We begin by recalling the presentation of a model of \MTT{} given by Gratzer et
al.~\cite{gratzer:mtt-journal:2021}. Recall that a natural model of type theory~\cite{awodey:2018}
is a pair of a category $\CC$---representing a category of contexts---together with a representable
natural transformation $\Mor[\El{}]{\EL{}}{\TY{}}$:
\begin{defi}
  A natural transformation $\Mor[f]{X}{Y} : \PSH{\CC}$ is \emph{representable} when each fiber of
  $f$ over a representable point of $Y$ is itself representable \ie{}, $\Yo{C} \times_Y X$ is
  representable for each $\Mor{\Yo{C}}{Y}$.
\end{defi}

Intuitively, $\El{}$ displays pairs of terms with their types over types. These two objects organize
into presheaves through substitution on terms and types. With this in mind, the representability
condition encodes context extension.

In order to adapt this to \MTT{}, we can no longer consider just a category of contexts. The
existence of multiple modes mandates that we consider a 2-functor of contexts
$\Mor[F]{\Coop{\Mode}}{\CAT}$. The action of modalities $\Mor[F\prn{\mu}]{F\prn{m}}{F\prn{n}}$ gives
the semantic equivalent of $\LockCx{-}$, while the 2-cell component $F\prn{\alpha}$ interprets
$\Key{\alpha}{}$.

Each mode $m : \Mode$ is equipped with a morphism $\Mor[\El{m}]{\EL{m}}{\TY{m}} : \PSH{F\prn{m}}$
representing the terms and types of mode $m$ and each modality $\Mor[\mu]{n}{m}$ induces a functor
which acts by precomposition $\Pre{F\prn{\mu}}$.

\begin{defi}
  A model of \MTT{} without any type constructors is a strict 2-functor
  $\Mor[F]{\Coop{\Mode}}{\CAT}$ together with a collection of morphisms
  $\Mor[\El{m}]{\EL{m}}{\TY{m}} : \PSH{F\prn{m}}$ such that $\Pre*{F\prn{\mu}}{\El{n}}$ is
  representable for each $\Mor[\mu]{n}{m}$.
\end{defi}

Connectives are individually specified on top of this structure. For instance, the following
pullback square in $\PSH{F\prn{m}}$ for each mode $m$ ensures closure under dependent sums:
\begin{equation}
  \DiagramSquare{
    nw/style = pullback,
    width = 6cm,
    height = 1.5cm,
    nw = \Sum{A : \TY{m}} \Sum{B : \ReIdx{\El{m}}{A} \to \TY{m}} \Sum{a : \ReIdx{\El{m}}{A}} \ReIdx{\El{m}}{B\prn{a}},
    sw = \Sum{A : \TY{m}} \Prod{\_ : \ReIdx{\El{m}}{A}} \TY{m},
    ne = \EL{m},
    se = \TY{m},
  }
  \label{diag:cosmoi:dependent-sum}
\end{equation}

Diagram~\ref{diag:cosmoi:dependent-sum} takes advantage of the model of extensional \MLTT{} in a
presheaf topos~\cite{hofmann:1997} and we have written $\ReIdx{\El{m}}{A}$ to denote the
specialization of $\El{m}$ (viewed as a dependent type over $\TY{M}$) with $A$. We will freely take
advantage of this model and use our assumption of a hierarchy of Grothendieck universes to equip it
with an infinite hierarchy of cumulative universes~\cite{hofmann-streicher:1997}. We refer to a
family of presheaves as \emph{small} if it is classified by a universe.

Dependent products $\Fn{A}{B}$ are specified by a similar pullback square but their encoding in
\MTT{} presents a slight complication. Recall that dependent products include a modality
$\Fn{A}{B}$. In order to account for $\mu$, we use $\Pre{F\prn{\mu}}$; if elements of $\TY{m}\prn{X}$
represent types from mode $m$ in context $X : F\prn{m}$, elements
$\Pre{F\prn{\mu}}\prn{\TY{n}}\prn{X}$ represent types from mode $n$ but in context
$F\prn{\mu}\prn{X}$. Accordingly, the presence of dependent products is encoded by the following
pullback square:

\begin{equation}
  \DiagramSquare{
    nw/style = pullback,
    width = 7cm,
    height = 1.75cm,
    nw = {
      \Sum{A : \Pre{F\prn{\mu}}\prn{\TY{n}}}
      \Sum{B : \ReIdx{\Pre{F\prn{\mu}}\prn{\El{n}}}{A} \to \TY{m}}
      \Prod{a : \ReIdx{\Pre{F\prn{\mu}}\prn{\El{n}}}{A}} \ReIdx{\El{m}}{B\prn{a}}
    },
    sw = \Sum{A : \Pre{F\prn{\mu}}\prn{\TY{n}}} \ReIdx{\Pre{F\prn{\mu}}\prn{\El{n}}}{A} \to \TY{m},
    ne = \EL{m},
    se = \TY{m},
  }
  \label{diag:cosmoi:dependent-prod}
\end{equation}

Given $\Mor[\mu]{n}{m}$, we can specify the formation and introduction rules of $\Modify$ with
another commuting square:
\begin{equation}
  \DiagramSquare{
    width = 3.5cm,
    height = 1.5cm,
    nw = \Pre{F\prn{\mu}}{\EL{n}},
    sw = \Pre{F\prn{\mu}}{\TY{n}},
    ne = \EL{m},
    se = \TY{m},
  }
  \label{diag:cosmoi:modal-intro}
\end{equation}
Unlike dependent sums or products, modal types do not have a universal property---an $\eta$ law---so
they cannot be encoded by a single pullback. Instead we must describe the elimination principle
separately. Following Gratzer et al.~\cite{gratzer:mtt-journal:2021}, we encode the elimination
principle as an internal lifting structure.

\begin{defiC}[Definition 18~\cite{awodey:2018}]
  An internal lifting structure $s : i \pitchfork \tau$ between a pair of morphisms $\Mor[i]{A}{B}$
  and $\Mor[\tau]{X}{Y}$ is a section of canonical map $\Mor{X^B}{Y^B \times_{Y^A} X^A}$.
\end{defiC}

Fix a pair of modalities $\Mor[\mu]{n}{m}$ and $\Mor[\nu]{o}{n}$ and write $c$ for the comparison
map $\Mor{\Pre{F\prn{\nu}}(\EL{o})}{\Pre{F\prn{\nu}}\prn{\TY{o}} \times_{\TY{n}} \EL{n}}$ induced
by Diagram~\ref{diag:cosmoi:modal-intro}. The elimination principle for $\nu$-modal types with a
framing modality $\mu$ is encoded by a lifting structure of the following type:
\[
  \Pre{F\prn{\mu}}(c) \pitchfork \Pre{F\prn{\mu\circ\nu}}\prn{\TY{o}} \times \El{m}
  : \SLICE{\PSH{F\prn{o}}}{\Pre{F\prn{\mu\circ\nu}}\prn{\TY{o}}}
\]

This definition is somewhat obstruse, but we will soon be in a position to formulate a far more
intuitive version of it by taking advantage of a richer version of the internal language in
Section~\ref{sec:cosmoi:presheaf-cosmoi}.

As models of a particular GAT, models of \MTT{} assemble into a category. A morphism between models
$F$ and $G$ is given by a 2-natural transformation $\Mor{F}{G}$ along with natural assignments of
terms and types of $F$ to the terms and types of $G$. All of these operations are required to
strictly preserve term, type, and context formers. We refer the reader to
Gratzer et al.~\cite{gratzer:mtt-journal:2021} for a precise description.

Finally, a standard result of GATs is that the \emph{syntactic model} occupies a distinguished place
in the category of models:
\begin{thm}
  \label{thm:cosmoi:initiality}
  Syntax is the initial model of \MTT{}.
\end{thm}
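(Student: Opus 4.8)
The plan is to invoke the general machinery of generalized algebraic theories rather than construct the initial model by hand. Since \MTT{} has been presented as a GAT (following Gratzer et al.), the category of models is by definition the category of algebras for an essentially algebraic theory, and it is a standard fact—going back to the theory of essentially algebraic theories and locally presentable categories—that such a category has an initial object, realized as the term model (syntax) freely generated by no constants and quotiented by provable equality. Concretely, I would first recall the precise GAT presentation of \MTT{}: the sorts (contexts, substitutions at each mode, types, terms, together with the locked contexts, keys, modal types, modalized $\Pi$-types, \etc{}) and the operations and equations recorded in Section~\ref{sec:mtt} and Figure~\ref{fig:mtt:substitutions}. The key observation is that every rule of \MTT{} is either an operation symbol or an equation between well-formed terms, with no existential or disjunctive content, so the presentation genuinely fits the GAT format.

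Second, I would construct the syntactic model explicitly: its category of contexts at mode $m$ has as objects the derivable contexts $\IsCx{\Gamma}$ and as morphisms the derivable substitutions $\IsSb{\gamma}{\Gamma}$, both taken up to the definitional equality generated by the equational rules; the 2-functor $\Mor{\Coop{\Mode}}{\CAT}$ sends $\mu$ to the locking operation $\LockCx{-}$ and $\alpha$ to $\Key{\alpha}{}$, with functoriality and 2-functoriality exactly witnessed by the equations in Figure~\ref{fig:mtt:substitutions}. The presheaves $\TY{m}$ and $\EL{m}$ are given by (equivalence classes of) derivable types and typed terms, and representability of $\Pre*{F\prn{\mu}}{\El{n}}$ is precisely the universal property of the modal context extension $\ECx{\Gamma}{A}$ recalled in Section~\ref{sec:primer:modalities}. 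Each connective's semantic structure (the pullback squares~\eqref{diag:cosmoi:dependent-sum}, \eqref{diag:cosmoi:dependent-prod}, \eqref{diag:cosmoi:modal-intro} and the lifting structure for modal elimination) is validated by the corresponding formation/introduction/elimination/$\beta$/$\eta$ rules.

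Third, for initiality I would define, for an arbitrary model $F$, the interpretation function by recursion on derivations: contexts, substitutions, types, and terms are sent to their counterparts in $F$, using the operation symbols of $F$ at each step. The equational rules of \MTT{} are sound for this interpretation because $F$, being a model, validates exactly those equations; hence the interpretation descends to the quotient by definitional equality and yields a morphism of models $\Mor{\mathcal{I}}{F}$. Uniqueness follows because any morphism of models is required to strictly preserve all context, type, and term formers, so it is forced to agree with the recursively defined interpretation on every generator, and two morphisms agreeing on generators agree everywhere by induction on syntax.

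The main obstacle—really the only subtle point—is the usual mutual-recursion/coherence issue in defining the interpretation function: because the sorts of \MTT{} are mutually dependent (the type of a term mentions a type, which lives over a context, \etc{}) and because \MTT{} uses explicit substitutions, one must set up the recursion on raw derivations and then verify that it respects \emph{all} the equational rules simultaneously, including the substitution and 2-cell coherence laws of Figure~\ref{fig:mtt:substitutions}, before one can conclude that it factors through the quotient. This is exactly the content of the general GAT initiality theorem, so rather than reprove it I would cite the standard result (and Gratzer et al.'s GAT presentation of \MTT{}) and simply note that \MTT{} satisfies its hypotheses; the bureaucratic verification that each rule is an operation or an equation, and that the term model satisfies them, is routine.
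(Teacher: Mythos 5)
Your proposal matches the paper's approach exactly: the paper offers no explicit proof, simply observing that \MTT{} is presented as a generalized algebraic theory and invoking the standard result that the syntax of a GAT forms the initial object in its category of models. Your additional sketch of the term-model construction and the recursion/coherence argument is a correct elaboration of what that standard result packages up, so there is no gap.
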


\subsection{\MTT{} cosmoi}
\label{sec:cosmoi:cosmoi}

As mentioned in Section~\ref{sec:introduction}, normalization is proven through the construction of a model of
\MTT{} together with a map from this model to syntax. Models of \MTT{} and morphisms
between them are difficult to construct, however, because of the extreme strictness of morphisms and
the requirement that each $\El{m}$ be a representable natural transformation. Prior to
normalization, therefore, we introduce a weakened notion of model: an \MTT{} cosmos. An \MTT{}
cosmos is an axiomatization of a natural model of \MTT{}, but rather than working in presheaf topoi
and requiring that $\El{m}$ is a representable natural transformation a cosmos requires only that
$\El{m}$ be a morphism in a locally cartesian closed category equipped with structure such as
Diagrams \ref{diag:cosmoi:dependent-prod} and \ref{diag:cosmoi:modal-intro}.

\begin{defi}
  A \emph{cosmos} is a pseudofunctor $\Mor[F]{\Mode}{\CAT}$ such that each $F\prn{m}$ is a locally
  cartesian closed category and each $F\prn{\mu}$ has a left adjoint
  $F_!\prn{\mu} \Adjoint F\prn{\mu}$.
\end{defi}

One should imagine a cosmos $F$ as arising from some model of \MTT{} $F_0$ with
$F\prn{m} = \PSH{F_0\prn{m}}$. The adjunction $F\prn{\mu}_! \Adjoint F\prn{\mu}$ is then recording the
adjunction given by precomposition and left Kan extension $F_0\prn{\mu}_! \Adjoint
F_0\prn{\mu}^*$. In particular, the left adjoint to $F\prn{\mu}$ allows us to capture the left
adjoint action of a modality on contexts ($\LockCx{-}$) while $F\prn{\mu}$ is more intended to
record the modality itself. While this example is strictly 2-functorial, we allow a general cosmos to be
pseudofunctorial. The formal connection between models and cosmoi is given by the following example:

\begin{exa}
  \label{ex:cosmoi:model-to-cosmos}
  A model of \MTT{} $F$ assembles into a cosmos $G$ by taking $G\prn{m} = \PSH{F\prn{m}}$ and
  $G\prn{\mu} = \Pre{F\prn{\mu}}$. In particular, we write $\Mor[\InterpSyn]{\Mode}{\CAT}$ for
  the cosmos induced by the initial model of \MTT{} specified by Theorem~\ref{thm:cosmoi:initiality}.
\end{exa}

The additional requirements imposed by natural models of \MTT{} to encode various connectives can be
transferred \emph{mutatis mutandis} to a cosmos; they are all stated within the language of locally
cartesian closed categories.

\begin{defi}
  An cosmos $F$ is an \MTT{} cosmos when equipped with the following structure:
  \begin{enumerate}
  \item In $F(m)$, there is a universe $\Mor[\El{m}]{\EL{m}}{\TY{m}}$ with a choice of codes
    witnessing its closure under dependent sums and products, identity types, and booleans. For
    instance, a choice of pullback square of the following shape:
    \begin{equation*}
      \DiagramSquare{
        nw/style = pullback,
        width = 7cm,
        height = 2.5cm,
        nw = {
          \Sum{A : F\prn{\mu}\prn{\TY{m}}}
          \Sum{B : \ReIdx{F\prn{\mu}\prn{{\El{n}}}}{A} \to \TY{m}}
          \Prod{a : \ReIdx{F\prn{\mu}\prn{{\El{n}}}}{A}} \ReIdx{\El{m}}{B\prn{a}}
        },
        sw = \Sum{A : F\prn{\mu}\prn{\TY{n}}} \ReIdx{F\prn{\mu}\prn{{\El{n}}}}{A} \to \TY{m},
        ne = \EL{m},
        se = \TY{m},
        south = \mathbf{Prod},
        north = \mathbf{lam},
      }
    \end{equation*}
  \item For each $\mu$, there exists a chosen commuting square
    \begin{equation}
      \DiagramSquare{
        nw = F(\mu)(\EL{n}),
        sw = F(\mu)(\TY{n}),
        ne = \EL{m},
        se = \TY{m},
        south = \mathbf{Mod},
        width = 4cm,
        height = 1.75cm,
      }
      \label{diag:cosmoi:cosmoi-modal-intro}
    \end{equation}
    \label{point:cosmoi:quasi-representation-1}
  \item For each $\Mor[\mu]{n}{m}$ and $\Mor[\nu]{o}{n}$, there is a chosen lifting structure
    $F(\mu)(m) \pitchfork F(\mu\circ\nu)(\TY{o}) \times \El{m}$, where
    $\Mor[m]{F(\nu)(\EL{o})}{F(\nu)(\TY{o}) \times_{\TY{n}} \EL{n}}$ is the comparison map induced
    by Diagram~\ref{diag:cosmoi:cosmoi-modal-intro}.
    \label{point:cosmoi:quasi-representation-2}
  \item $\El{m}$ contains a subuniverse also closed under all these connectives.
  \end{enumerate}
\end{defi}

\begin{defi}
  \label{def:cosmoi:morphism}
  A morphism between \MTT{} cosmoi $\Mor[\alpha]{F}{G}$ is a 2-natural transformation $\alpha$ such
  that $\alpha_m$ is an LCCC functor and preserves all connectives strictly.

  Furthermore, we require that $\alpha$ satisfies the Beck-Chevalley condition so that there is a
  natural isomorphism $\beta_\mu : \alpha_n \circ F(\mu)_! \cong G(\mu)_! \circ \alpha_m$ commuting
  with transposition. Precisely, if $\Mor[a]{X}{F(\mu)(Y)} : F(m)$ the transposition of
  $\alpha_\mu \circ \alpha_m(a)$ is $\alpha_n(\Transpose{a}) \circ \beta_\mu^{-1}$.
\end{defi}

Definition~\ref{def:cosmoi:morphism} uses a number of concepts from 2-category theory and we take
a moment to recall and discuss them here. First, a 2-natural transformation $\alpha$ between
pseudofunctors $\Mor[F,G]{\Mode}{\CAT}$ consists of a collection of functors
$\Mor[\alpha_m]{F\prn{m}}{G\prn{m}}$ along with a family of natural isomorphisms $\alpha_{\mu}$
witnessing the commutativity of the following diagrams up to natural isomorphism:
\[
  \DiagramSquare{
    nw = F\prn{n},
    sw = F\prn{m},
    ne = G\prn{n},
    se = G\prn{m},
    west = F\prn{\mu},
    east = G\prn{\mu},
    north = \alpha_n,
    south = \alpha_m,
    width = 4cm,
  }
\]
The collection of natural isomorphisms $\alpha_\mu$ satisfy a number of coherence conditions forcing
them to behave as expected with respect to composition and identity in $\Mode$ as well as to force
them to be natural with respect to 2-cells in $\Mode$. Fortunately, these higher conditions will not
generally factor into what follows, so we refer the reader to Johnson and
Yau~\cite{johnson-yau:2020} where this notion is detailed under the name \emph{strong
  transformation}.

Note that $F\prn{m}$ and $G\prn{m}$ are both LCC and equipped with universes closed under various
connectives. The next part of Definition~\ref{def:cosmoi:morphism} requires that $\alpha_{\mu}$
respects this additional structure. Finally, since $F\prn{\mu}$ and $G\prn{\mu}$ are both right
adjoints, one can ask whether there is a natural isomorphism witnessing
$\alpha_m \circ F_!\prn{\mu} = G_!\prn{\mu} \circ \alpha_n$. The final requirement---that
$\alpha_\mu$ satisfy the Beck-Chevalley condition---essentially states that there is such a
natural isomorphism and that it is canonically induced from $\alpha_\mu$. In particular, this
ensures that transposing a morphism along $F_!\prn{\mu} \Adjoint F\prn{\mu}$ and then applying
$\alpha_m$ produces the same result as applying $\alpha_n$ and transposing along
$G_!\prn{\mu} \Adjoint G\prn{\mu}$.

A morphism of \MTT{} cosmoi is both more and less restrictive than a morphism of \MTT{}
models. While a morphism of models need not induce an LCC functor between the relevant presheaf
categories, a morphism of cosmoi is not required to strictly preserve context extension or the
choice of terminal context. It so happens that the only map of consequence in this paper is locally
cartesian closed, so the additional structure of morphisms of cosmoi poses no issue. Not requiring
the strict preservation of context extension and dropping the representability requirements from
\MTT{} cosmoi, however, ensures that cosmoi are far easier to construct.

Merely defining a normalization cosmos $\InterpGl$ and projection
$\Mor[\pi]{\InterpGl}{\InterpSyn}$, however, is not enough to prove normalization; we also need a
section to $\pi$. In the category of models, this section would exist as a consequence of
initiality, but $\InterpSyn$ is not initial in the category of \MTT{} cosmoi.%
\footnote{%
  2-monad theory~\cite{kinoshita:1999,gratzer:lcccs:2020} yields an initial cosmos $\ICosmos$ but
  we work with $\InterpSyn$ because---unlike $\ICosmos$---it is known to adequately represent
  syntax.
}
Accordingly, we cannot easily obtain a section of a map into $\InterpSyn$ and in fact sections
rarely exist. Any such map, however, is essentially surjective on definable terms \eg{}, for any
syntactic context $\Gamma$ there exists some object in $X : G\prn{m}$ along with
$\alpha : \pi\prn{X} \cong \Yo{\Gamma}$. Similar statements hold for terms, types, \etc{} While these
choices need not assemble into a morphism of cosmoi, such piecemeal liftings suffice for the
normalization algorithm in Section~\ref{sec:normalization}.

\begin{thm}
  \label{thm:cosmoi:quasi-initiality}
  Fix an \MTT{} cosmos $G$ and $\Mor[\pi]{G}{\InterpSyn}$.
  \begin{enumerate}
  \item For $\IsCx{\Gamma}$, there exists $\Interp{\Gamma} : G(m)$ and
    a canonical isomorphism $\alpha_\Gamma : \Yo{\Gamma} \cong \pi(\Interp{\Gamma})$.
  \item For every $\IsTy{A}$, there exists $\Mor[\Interp{A}]{\Interp{\Gamma}}{\TY{m}}$ such that
    $\pi(\Interp{A}) \circ \alpha_\Gamma = \YoEm{A}$.
  \item For every $\IsTm{M}{A}$, there exists $\Mor[\Interp{M}]{\Interp{\Gamma}}{\EL{m}}$ lying over
    $\Interp{A}$ such that $\pi(\Interp{M}) \circ \alpha_\Gamma = \YoEm{M}$.
  \end{enumerate}
  Here $\YoEm{-}$ is the isomorphism induced by the Yoneda lemma. Moreover, each lift $\Interp{-}$
  respects definitional equality.
\end{thm}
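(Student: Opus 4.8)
The plan is to construct the interpretation $\Interp{-}$ directly, by a single large induction over the generators of the initial model of \MTT{}---contexts, substitutions, types, and terms, together with their defining equations---carrying the coherence isomorphisms $\alpha$ as an invariant throughout. Because $G$ is only a cosmos and not a model of \MTT{} (it lacks representability of $\El{m}$ and a strict context-extension structure), we cannot obtain the interpretation from Theorem~\ref{thm:cosmoi:initiality}; instead, the point of $\pi$ is to \emph{import} into $G$ exactly the representability that $G$ is missing. I would therefore strengthen the statement so that the induction also produces, for each $\IsSb[\Delta]{\gamma}{\Gamma}$, a morphism $\Mor[\Interp{\gamma}]{\Interp{\Delta}}{\Interp{\Gamma}}$ compatible with $\pi$, $\alpha_\Delta$, and $\alpha_\Gamma$, and shows that the whole assignment respects the equational theory of \MTT{}. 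Methodologically this is the standard partial-interpretation argument: $\Interp{-}$ is first defined on raw syntax, and then definedness and coherence are established simultaneously by induction on derivations.

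For the context and substitution formers: the empty context is interpreted by the terminal object of $G\prn{m}$---available since $G\prn{m}$ is locally cartesian closed---which $\pi$ preserves; a lock $\LockCx{\Gamma}$ is interpreted by $F_!\prn{\mu}\prn{\Interp{\Gamma}}$ and $\LockSb{\gamma}$ by $F_!\prn{\mu}\prn{\Interp{\gamma}}$, with the coherence isomorphism extracted from $\alpha_\Gamma$ and the Beck--Chevalley isomorphism $\beta_\mu$ of Definition~\ref{def:cosmoi:morphism}, using that in $\InterpSyn$ the left adjoint $F_!\prn{\mu}$ sends the representable on $\Gamma$ to the representable on $\LockCx{\Gamma}$; a key $\Key{\alpha}{}$ is interpreted by the mate of the pseudofunctorial $2$-cell $F\prn{\alpha}$ under $F_!\prn{\nu}\Adjoint F\prn{\nu}$ and $F_!\prn{\mu}\Adjoint F\prn{\mu}$. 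The decisive clause is context extension: $\Interp{\ECx{\Gamma}{A}}$ is taken to be the pullback in $G\prn{m}$ of $F\prn{\mu}\prn{\El{n}}$ along the transpose of $\Interp{A} : \Interp{\LockCx{\Gamma}} \to \TY{n}$, and the variable rule is interpreted by the resulting projection. To build $\alpha_{\ECx{\Gamma}{A}}$ one uses that $\pi$, being an LCCC functor, preserves this pullback; that $\pi$ commutes with $F\prn{\mu}$ up to the structural isomorphism of the $2$-natural transformation; and that $\pi$ strictly preserves $\El{n} : \EL{n} \to \TY{n}$. Combining these with $\alpha_\Gamma$ and the inductive hypotheses relating $\pi\prn{\Interp{A}}$ to $\YoEm{A}$ identifies $\pi\prn{\Interp{\ECx{\Gamma}{A}}}$ with the analogous pullback in $\InterpSyn\prn{m}$, which by representability of $\El{n}$ in the syntactic model is precisely $\Yo{\ECx{\Gamma}{A}}$.

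The type and term formers are then essentially mechanical. Dependent sums and products, booleans, intensional identity types, the universe with its codes, and the modal types with their elimination encoded by a lifting structure are each interpreted by the correspondingly-named piece of structure provided by the definition of an \MTT{} cosmos; the subuniverse clause interprets the small/large distinction on the universe. The projection conditions hold because, by Definition~\ref{def:cosmoi:morphism}, $\pi$ strictly preserves $\El{m} \to \TY{m}$, every chosen pullback square, and every chosen lifting structure, and the equations of \MTT{} are respected because the corresponding operations in $G$ satisfy their semantic counterparts on the nose.

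The main obstacle is not any individual connective but the coherence bookkeeping. Since the interpretation is not a strict morphism of models---indeed $G$ is not a model at all---the isomorphisms $\alpha_\Gamma$ cannot be elided, and one must verify that they cohere with every generator and every equation: that $\alpha$ is natural in substitutions, that $\alpha_{\ECx{\Gamma}{A}}$ restricts correctly along the weakening substitution $\Wk$ and the variable substitution, and that the Beck--Chevalley witnesses $\beta_\mu$ interact correctly with the action of locks and keys and with the interpretation of $\Modify{-}$. Bundling this data into a single coherent family is where essentially all the work resides; the clauses above are routine once the inductive invariant is pinned down. A secondary, purely meta-theoretic subtlety is that, as \MTT{} is presented as a generalized algebraic theory with explicit substitutions, one must take some care to organize the argument as a partial interpretation of raw syntax whose definedness and coherence are established together---but this follows a by-now-standard template.
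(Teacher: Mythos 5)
Your construction agrees with the paper's in all of its mathematical content---the empty context goes to the terminal object, $\LockCx{\Gamma}$ to the image of $\Interp{\Gamma}$ under the left adjoint $G_!\prn{\mu}$, keys to mates of the pseudofunctorial $2$-cells, context extension to the pullback of $G\prn{\mu}\prn{\El{n}}$ along the transpose of $\Interp{A}$ (with $\alpha_{\ECx{\Gamma}{A}}$ coming from $\pi$ preserving that pullback and representability of $\El{n}$ holding in the syntactic model), the connectives to the chosen cosmos structure, and Beck--Chevalley doing the work for modal types. Where you differ is in the packaging: you propose a partial interpretation of raw syntax whose definedness and coherence are proved by induction on derivations, whereas the paper constructs a \emph{displayed model} over the syntactic model---contexts are triples $\prn{X, \Gamma, \alpha : \pi\prn{X} \cong \Yo{\Gamma}}$, types and terms are pairs satisfying the projection condition---and then invokes Theorem~\ref{thm:cosmoi:initiality} to obtain the section. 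These are the same induction (the paper explicitly calls the displayed model a categorical abstraction of rule induction), but the displayed-model route avoids raw syntax and partiality entirely: one only checks that the displayed data is closed under each operation of the GAT, and initiality assembles the interpretation and its coherences for free.

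The one place where your version leaves genuine work on the table is the coherence bookkeeping you correctly identify as the crux. The paper discharges it by two moves you should make explicit rather than defer: it first replaces $G$ by an equivalent strict $2$-functor so that $\pi$ becomes strictly $2$-natural, and it then appeals to the existing coherence theorem for \MTT{} so that the composite-lock equation $\LockCx{\LockCx{\Gamma}}<\nu> = \LockCx{\Gamma}<\mu\circ\nu>$ need only be verified up to pseudonatural isomorphism rather than on the nose. Without some such strictification step, the direct induction must thread an ever-growing tower of comparison isomorphisms between iterated left adjoints through every clause, and it is not obvious that the invariant you state is strong enough to close that induction.
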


\begin{rem}
  While we have proven this result quite generally, we will apply it only in the special case
  where $\pi$ is a 2-natural transformation between strict 2-functors and required isomorphisms of
  left adjoints are likewise identities. The reader may accordingly safely ignore these coherences
  when reading the proof without consequence.
\end{rem}

\begin{rem}
  Both Theorem~\ref{thm:cosmoi:initiality} and \ref{thm:cosmoi:quasi-initiality} are categorical
  abstractions of \emph{rule induction}. Indeed, \ref{thm:cosmoi:initiality} is used to prove
  \ref{thm:cosmoi:quasi-initiality}---via the construction of an appropriate displayed
  model~\cite{kaposi:qiits:2019}---and the latter takes the place of rule induction in the proof of
  normalization (see Theorem~\ref{thm:normalization:normalization}).
\end{rem}

\begin{proof}
  We write $\SEl$, $\STy$ and $\STm$ instead of $\El{m}$, $\TY{m}$, and $\EL{m}$ in the syntactic
  model, reserving the latter exclusively for $G$. We write $\Interp{\mu}$ for the functor sending
  $\Gamma$ to $\LockCx{\Gamma}$. We begin by replacing $G$ by an equivalent strict 2-functor so that
  $\pi$ becomes strictly 2-natural.

  We construct a displayed model of \MTT{}~\cite{kaposi:qiits:2019} which lies over the syntactic
  model. Using the existing coherence result for \MTT{}~\cite{gratzer:tech-report:2020}, we only
  ensure that $\LockCx{\LockCx{\Gamma}}<\nu>$ and $\LockCx{\Gamma}<\mu\circ\nu>$ agree up to
  pseudonatural isomorphism.
  \begin{itemize}
  \item A context in $m$ is a triple $X : G(m)$, $\IsCx{\Gamma}$, and
    $\alpha : \pi(X) \cong \Yo{\Gamma}$.
  \item A type in a context $(X, \Gamma, \alpha)$ is a pair of $\Mor[\bar{A}]{X}{\TY{m}}$ and
    $\IsTy{A}$ such that $\pi(\bar{A}) = \YoEm{A} \circ \alpha$.
  \item A term in a context $(X, \Gamma, \alpha)$ of type $(\bar{A}, A)$ is a pair
    $\Mor[\bar{M}]{X}{\El{m}\brk{\bar{A}}}$ and $\IsTm{M}{A}$ such that $\pi(\bar{M}) = \YoEm{M} \circ \alpha$.
  \item A substitution $\Mor{(X, \Gamma, \alpha)}{(Y, \Delta, \beta)}$ is a pair $\Mor[f]{X}{Y}$ and
    $\IsSb{\delta}{\Delta}$ satisfying $\beta \circ \pi(f) = \Yo{\delta} \circ \alpha$
  \end{itemize}
  Once this model is constructed, the result follows from
  Theorem~\ref{thm:cosmoi:initiality}. The construction of contexts, substitutions, terms, and types
  is straightforward as $\pi$ is a 2-natural transformation which preserves finite limits, and
  commutes with all connectives. We show two cases.

  \paragraph{The action of a modality on a context}
  Given a triple $(X, \Gamma, \alpha)$ at mode $m$ and a modality $\Mor[\mu]{n}{m}$, we define the
  `locked' context to be the following:
  \[
    (\LKan{G(\mu)}(X), \LockCx{\Gamma}, \gamma \circ \LKan{\Interp{\mu}}{\alpha} \circ \beta)
  \]
  Here $\beta : \pi(\LKan{G(\mu)} X) \cong \LKan{\Interp{\mu}} \pi(X)$ and
  $\gamma : \LKan{\Interp{\mu}} \Yo{\Gamma} \cong \Yo{\LockCx{\Gamma}}$ are the canonical
  isomorphisms.

  \paragraph{Modal types}
  Suppose we are given a context $(X, \Gamma, \alpha)$ and a type $(\bar{A}, A)$ in the context
  $(\LKan{G(\mu)}(\mu)(X), \LockCx{\Gamma}, \gamma \circ \Interp{\mu}^*(\alpha) \circ
  \beta_\mu)$. Writing $\bar{B}$ for the transpose of $\bar{A}$, we form the modal type as
  \[
    (\mathbf{Mod}_\mu(\bar{B}), \Modify{A})
  \]
  It remains to check that these types are coherent \ie{}:
  \[
    \pi(\CModify({\bar{B}})) = \YoEm{\Modify{A}} \circ \alpha
  \]
  By assumption, $\pi(\bar{B}) = \YoEm{A} \circ \gamma \circ \Interp{\mu}^*(\alpha) \circ \beta$. By
  our assumption that $\pi$ satisfies Beck-Chevalley
  $\pi(\bar{B}) = \Transpose{\YoEm{A} \circ \gamma} \circ \alpha$. The result follows from
  the fact that $\pi$ preserves $\CModify$.
\end{proof}

\subsection{Presheaf cosmoi}
\label{sec:cosmoi:presheaf-cosmoi}

Example~\ref{ex:cosmoi:model-to-cosmos} shows that each model of \MTT{} induces an \MTT{} cosmos. In
fact, such cosmoi are particularly well-behaved as they are comprised of presheaf topoi connected by
adjoint triples. These cosmoi enjoy a privileged role in our proof and we observe some of their
unique behavior.

\begin{defi}
  A presheaf cosmos $F$ is a cosmos where $F$ is a strict 2-functor, each $F\prn{m}$ is a presheaf
  topos, and each right adjoint $F\prn{\mu}$ sends small families to small families.
\end{defi}

What distinguishes presheaf cosmoi from other cosmoi is the rich internal language they
offer. Gratzer et al.~\cite{gratzer:mtt-journal:2021} have proven that such a cosmos $F$ supports a
model of \emph{extensional} \MTT{} with the same mode theory where $\Modify{-}$ is interpreted by
$F\prn{\mu}$. We will now use extensional \MTT{} as a \emph{multimodal metalanguage} to specify the
structure of an \MTT{} cosmos as a sequence of constants, thereby reducing its construction to a
series of programming exercises. It is this characterization of \MTT{}-cosmoi that we will
use in Section~\ref{sec:normalization-cosmos} to construct the normalization cosmos.

\begin{rem}
  Some caution is required here, as a presheaf cosmos will frequently host more than one
  interpretation of \MTT{}, with different universes of types. In particular, if we consider the
  collection of presheaf categories $E = \PSH{F\prn{-}}$ where $F$ is a strict 2-functor coming from
  a model of \MTT{}, we may interpret \MTT{} into $E$ either by choosing types to be arbitrary
  families of presheaves, or locally representable families of presheaves. This is comparable to
  Diagram~\ref{diag:cosmoi:dependent-sum}, where type theory is used to describe a model of type
  theory.
\end{rem}

Within this internal language, the universe $\Mor[\El{m}]{\EL{m}}{\TY{m}}$ is encoded by a pair of
types:
\[
  \Ty{m} : \Uni[0]
  \qquad
  \Tm{m} : \prn{A : \Ty{m}} \to \Uni[0]
\]

Each of the diagrams discussed in Sections~\ref{sec:cosmoi:models} and \ref{sec:cosmoi:cosmoi} can
then be translated into constants within this language with the use of dependent types automatically
encoding commutativity. For instance, Diagram~\ref{diag:cosmoi:cosmoi-modal-intro} becomes the
following pair of constants:
\[
  \ModConst : \DeclNameless{\Ty{n}} \to \Ty{m}
  \qquad
  \ModIntroConst :
  \DeclVar{A}{\Ty{n}}\DeclNameless{\Tm{n}\prn{A}} \to \Tm{m}\prn{\ModConst\prn{A}}
\]

In this language it is far easier to specify the modal elimination principle:
\begin{align*}
  &\ModElimConst :{}\\
  &\quad \DeclVar{A}{\Ty{n}}<\nu\circ\mu>\,\prn{B : \DeclNameless{\Tm{n}\prn{\ModConst\prn{A}}}<\nu> \to \Ty{o}}\\
  &\quad \prn{
    b : \DeclVar{x}{\DelimMin{1}\Tm{n}(A)}<\nu\circ\mu> \to
    \Tm{o}\prn{\DelimMin{1}B\prn{\ModIntroConst\prn{A,x}}}
  }\\
  &\quad \to \DeclVar{a}{\Tm{m}(\ModConst(A))}<\nu>  \to \Tm{o}(B(a))
\end{align*}

Each argument to $\ModElimConst$ corresponds directly to a premise of the rule given in
Section~\ref{sec:mtt}. The hypothetical judgment is encoded by the dependent products in the
language and each occurrence of $\LockCx{-}<->$ is replaced with an occurrence of the corresponding
modal type within the metalanguage. The $\beta$-rule for this elimination principle is encoded by
another constant inhabiting the equality type:
\begin{align*}
  &\ModElimConstEq :{}\\
  &\quad \DeclVar{A}{\Ty{n}}<\nu\circ\mu>\,\prn{B : \DeclNameless{\Tm{n}\prn{\ModConst\prn{A}}}<\nu> \to \Ty{o}}\\
  &\quad \prn{
    b : \DeclVar{x}{\DelimMin{1}\Tm{n}(A)}<\nu\circ\mu> \to
    \Tm{o}\prn{\DelimMin{1}B\prn{\ModIntroConst\prn{A,x}}}
  }\\
  &\quad \to \DeclVar{a}{\Tm{m}(A)}<\nu\circ \mu> \to\ModElimConst\prn{A,B,b,\ModIntroConst\prn{A,a}} = b\prn{a}
\end{align*}

The remaining connectives are detailed in Figure~\ref{fig:cosmoi:internal-constants}.
\begin{figure}
  \small
  \begin{align*}
    &\PiConst : \DeclVar{A}{\Ty{m}}\,\prn{B : \DeclNameless{\Tm{m}\prn{A}} \to \Ty{m}} \to \Ty{m}
    \\
    &\alpha_{\PiConst} : \DeclVar{A}{\Ty{m}}\,\prn{B : \DeclNameless{\Tm{m}\prn{A}} \to \Ty{m}}\\
    &\quad\to \Tm{m}\prn{\PiConst\prn{A,B}} \cong \brk{\DeclVar{a}{\Tm{m}\prn{A}} \to \Tm{m}\prn{B\prn{a}}}
    \\[0.2cm]
    &\SigConst : \prn{A : \Ty{m}} \to \prn{\Tm{m}(A) \to \Ty{m}} \to \Ty{m}
    \\
    &\alpha_{\SigConst} : (A : \Ty{m})(B : \Tm{m}(A) \to \Ty{m})\\
    &\quad\to \Tm{m}(\SigConst(A, B)) \cong \brk{\Sum{a : \Tm{m}(A)} \Tm{m}(B(a))}
    \\[0.2cm]
    &\BoolConst : \Ty{m}
    \\
    &\TrueConst,\FalseConst : \Tm{m}(\BoolConst)
    \\
    &\IfConst : (A : \Tm{m}(\BoolConst) \to \Ty{m})\\
    &\quad \to \Tm{m}(A(\TrueConst)) \to \Tm{m}(A(\FalseConst)) \to (b : \Tm{m}(\BoolConst)) \to \Tm{m}(A(b))
    \\
    \_ &: (A : \Tm{m}(\BoolConst) \to \Ty{m})
    \,(t : \Tm{m}(A(\TrueConst)))
    \,(f : \Tm{m}(A(\FalseConst)))\\
    &\quad{}\to (\IfConst(A, t, f, \TrueConst) = t) \times (\IfConst(A, t, f, \FalseConst) = f)
    \\[0.2cm]
    &\IdConst : (A : \Ty{m})(a_0,a_1 : \Tm{m}(A)) \to \Ty{m}
    \\
    &\ReflConst : (A : \Ty{m})(a : \Tm{m}(A)) \to \Tm{m}(\IdConst(A,a,a))
    \\
    &\IdElimConst : (A : \Ty{m})
    \,(B : (a_0,a_1 : \Tm{m}(A))(p : \Tm{m}(\IdConst(A,a_0,a_1))) \to \Ty{m})\\
    &\quad \to ((a : \Tm{m}(A)) \to \Tm{m}(B(a,a,\ReflConst(a))))\\
    &\quad \to (a_0,a_1 : \Tm{m}(A))(p : \Tm{m}(\IdConst(A,a_0a_1))) \to \Tm{m}(B(a_0,a_1,p))
    \\
    &\_ : (A : \Ty{m})
    \,(B : (a_0,a_1 : \Tm{m}(A))(p : \Tm{m}(\IdConst(A,a_0,a_1))) \to \Ty{m})\\
    &\quad \to (b : (a : \Tm{m}(A)) \to \Tm{m}(B(a,a,\ReflConst(a))))\\
    &\quad \to (a : \Tm{m}(A)) \to \IdElimConst(A,B,b,a,a,\ReflConst(a)) = b(a)
    \\[0.2cm]
    &\UniConst : \Ty{m}
    \\
    &\DecConst : \Tm{m}(\UniConst) \to \Ty{m}
    \\
    &\SigCodeConst : \prn{A : \Tm{m}\prn{\UniConst}}
      \to \prn{\Tm{m}(\DecConst(A)) \to \Tm{m}(\UniConst)} \to \Tm{m}(\UniConst)
    \\
    &\PiCodeConst : \DeclVar{A}{\Tm{n}(\UniConst)}
     \to \prn{\DeclNameless{\Tm{n}(\DecConst(A))} \to \Ty{m}} \to \Tm{m}(\UniConst)
    \\
    &\BoolCodeConst : \Tm{m}(\UniConst)
    \\
    &\ModCodeConst : \DeclNameless{\Tm{n}(\UniConst)} \to \Tm{m}(\UniConst)
    \\
    &\DecIsoConst_{\SigCodeConst} :
    (A : \Tm{m}(\UniConst))(B : \Tm{m}(\DecConst(A)) \to \Tm{m}(\UniConst))\\
    &\quad \to \Tm{m}(\DecConst(\SigCodeConst(A,B))) \cong \Tm{m}(\SigConst(\DecConst(A), \DecConst \circ B))
    \\
    &\DecIsoConst_{\PiCodeConst} :
    \DeclVar{A}{\Tm{n}(\UniConst)}(B : \DeclNameless{\Tm{n}(\DecConst(A))} \to \Tm{m}(\UniConst))\\
    &\quad \to \Tm{m}(\DecConst(\PiCodeConst(A,B))) \cong \Tm{m}(\PiConst(\DecConst(A), \DecConst \circ B))
    \\
    &\DecIsoConst_{\BoolCodeConst} : \Tm{m}(\DecConst(\BoolCodeConst)) \cong \Tm{m}(\BoolConst)
    \\
    &\DecIsoConst_{\ModCodeConst} :
    \DeclVar{A}{\Tm{m}(\UniConst)} \to \Tm{m}(\DecConst(\ModCodeConst(A))) \cong \Tm{m}(\ModConst(\DecConst(A)))
  \end{align*}
  \caption{Internal presentation of an \MTT{} cosmos}
  \label{fig:cosmoi:internal-constants}
\end{figure}

\section{Multimodal Synthetic Tait computability}
\label{sec:mstc}

In light of Section~\ref{sec:models}, we revise the proof outlined in
Section~\ref{sec:introduction}: instead of constructing a glued \emph{model} of \MTT{}, we will
construct a glued \MTT{} \emph{cosmos}. In fact, we will construct a glued presheaf cosmos, and take
advantage of the internal language discussed in Section~\ref{sec:cosmoi:presheaf-cosmoi} to upgrade
it to an \MTT{} cosmos with a projection onto $\InterpSyn$. Prior to this, however, we must show
that (1) a pair of cosmoi can be glued together and (2) that each mode of the internal language of
the resulting cosmos can be extended with synthetic Tait computability primitives compatible with
the already-present \MTT{} modalities.

\subsection{Synthetic Tait computability}
\label{sec:mstc:stc}

For this subsection, fix two presheaf topoi $\EE$ and $\FF$ along with a continuous functor
$\Mor[\rho]{\EE}{\FF}$.
\begin{defi}
  The \emph{Artin gluing} $\GL{\rho}$ is a category whose objects are triples $\prn{E, F, f}$ of an
  object from $\EE$, an object from $\FF$, and a morphism $\Mor{F}{\rho\prn{E}}$. Morphisms in
  $\GL{\rho}$ are commuting squares:
  \[
    \DiagramSquare{
      height = 1.5cm,
      width = 2.75cm,
      nw = F_0,
      ne = F_1,
      sw = \rho\prn{E_0},
      se = \rho\prn{E_1},
      west = f_0,
      east = f_1,
      north = \alpha,
      south = \rho\prn{\beta},
    }
  \]
  Projection induces functors $\Mor[\pi_0]{\GL{\rho}}{\EE}$ and $\Mor[\pi_1]{\GL{\rho}}{\FF}$.
\end{defi}

\begin{exa}
  Intuitively $\GL{\rho}$ is a category of proof-relevant $\FF$-predicates on $\rho$-elements of
  $\EE$.  To cultivate this intuition, consider $\FF = \SET$ and $\rho = \Hom{\ObjTerm{}}{-}$. An
  object of $\GL{\Hom{\ObjTerm{}}{-}}$ is a triple of $(S,E,f)$ which induces a proof-relevant
  predicate $\Phi(e) = f^{-1}\prn{e}$ on the global points of $E$. Following Tait~\cite{tait:1967}, we
  refer to elements in the image of $f$ as \emph{computable elements}.
  Morphisms are then morphisms of $\EE$ equipped with additional structure ensuring that
  computable elements are sent to computable elements.
\end{exa}

We now reap the first reward from considering proof-\emph{relevant} predicates: $\GL{\rho}$ is
extremely well-behaved.
\begin{thmC}[\cite{sga:4,carboni:1995}]
  \label{thm:mstc:gl-psh}
  $\GL{\rho}$ is a presheaf topos and $\pi_0$ is a logical functor with left and right adjoints.
\end{thmC}

As a presheaf topos, $\GL{\rho}$ enjoys a model of extensional type theory with a strictly
cumulative hierarchy of universes and a universe of propositions $\Prop$. We can use this language
to \emph{synthetically} build logical relations models~\cite{sterling:modules:2021}. In order to
effectively construct such models, however, we must supplement type theory with primitives specific
to $\GL{\rho}$. The most fundamental of these is a proposition:
\begin{defi}
  The \emph{syntactic proposition} $\Syn : \Prop$ is interpreted in $\GL{\rho}$ as the subterminal
  object $\prn{\ObjTerm{\EE}, \ObjInit{\FF}, \ArrInit{}}$.
\end{defi}

Recalling the correspondence between objects of $\GL{\rho}$ and predicates, $\Syn$ is the predicate
on $\ObjTerm{\EE}$ with no computable elements. What makes this proposition useful is its ability to
wipe out the obligation to track computable elements. A morphism $\Mor[f]{\Syn \times A}{B}$ must
contain a morphism $\Mor[\pi_0\prn{f}]{\pi_0\prn{\Syn \times A} \cong \pi_0\prn{A}}{\pi_0\prn{B}}$,
but there are no computable elements of $\Syn \times A$ so $\pi_0\prn{f}$ entirely determines $f$;
there is a bijection $\Hom[\GL{\rho}]{\Syn \times A}{B} \cong
\Hom[\EE]{\pi_0\prn{A}}{\pi_0\prn{B}}$. Internally, hypothesizing $\Syn$ collapses the category to
$\EE$:
\begin{lem}
  \label{lem:mstc:open-subtopos}
  There is an equivalence $\EE \Equiv \SLICE{\GL{\rho}}{\Syn}$.
\end{lem}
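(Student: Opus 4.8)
The plan is to unfold the slice category $\SLICE{\GL{\rho}}{\Syn}$ by hand and observe that, because the second component of $\Syn$ is the initial object $\ObjInit{\FF}$, the data of an object over $\Syn$ degenerates to a bare object of $\EE$. Concretely, I would define a functor $\Mor[\iota]{\EE}{\SLICE{\GL{\rho}}{\Syn}}$ sending $E$ to the glued object $\prn{E, \ObjInit{\FF}, \ArrInit{}}$ equipped with the morphism to $\Syn = \prn{\ObjTerm{\EE}, \ObjInit{\FF}, \ArrInit{}}$ whose first component is the terminal projection $\Mor{E}{\ObjTerm{\EE}}$ and whose second component is $\ArrId{\ObjInit{\FF}}$, and acting on a morphism $\Mor[h]{E_0}{E_1}$ by $h \mapsto \prn{h, \ArrId{\ObjInit{\FF}}}$. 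Its pseudoinverse will be the composite of the forgetful functor $\Mor{\SLICE{\GL{\rho}}{\Syn}}{\GL{\rho}}$ with $\pi_0$, which satisfies $\pi_0 \circ \iota = \ArrId{\EE}$ strictly, so it suffices to check that $\iota$ is full, faithful, and essentially surjective.

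The one nontrivial ingredient is the strictness of the initial object in the presheaf topos $\FF$: every morphism $\Mor{F}{\ObjInit{\FF}}$ is an isomorphism. For essential surjectivity, take an arbitrary slice object, i.e.\ a gluing object $\prn{E, F, \Mor[f]{F}{\rho\prn{E}}}$ together with a morphism $\Mor[g]{\prn{E,F,f}}{\Syn}$ in $\GL{\rho}$. The first component of $g$ carries no information; its second component is a morphism $\Mor{F}{\ObjInit{\FF}}$, hence an isomorphism, so $F \cong \ObjInit{\FF}$. Since $\ObjInit{\FF}$ is initial, this forces $f$ to be the unique arrow $\Mor{F}{\rho\prn{E}}$ and forces the square defining $g$ to commute automatically; transporting along $F \cong \ObjInit{\FF}$ then exhibits $\prn{\prn{E,F,f}, g}$ as isomorphic over $\Syn$ to $\iota\prn{E}$.

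For full faithfulness, a morphism $\Mor{\iota\prn{E_0}}{\iota\prn{E_1}}$ in $\GL{\rho}$ consists of a map $\Mor{\ObjInit{\FF}}{\ObjInit{\FF}}$, necessarily $\ArrId{}$, together with a map $\Mor[h]{E_0}{E_1}$; the square such a pair must satisfy has top edge landing in $\ObjInit{\FF}$, so its two legs into $\rho\prn{E_1}$ agree by initiality and the square commutes for free, and such a pair automatically respects the structure maps to $\Syn$. Hence $\Hom[{\SLICE{\GL{\rho}}{\Syn}}]{\iota\prn{E_0}}{\iota\prn{E_1}} \cong \Hom[\EE]{E_0}{E_1}$, naturally in $E_0$ and $E_1$, so $\iota$ is fully faithful, and combined with the previous paragraph it is an equivalence.

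I do not anticipate any genuine obstacle: once the use of strictness of $\ObjInit{\FF}$ is isolated --- it is precisely what makes the existence of a map to $\ObjInit{\FF}$ a \emph{property} of $F$ rather than extra structure --- the rest is bookkeeping. One could alternatively subsume the statement under the general fact that a subterminal $U$ in a topos $\mathcal{X}$ presents an open subtopos $\SLICE{\mathcal{X}}{U} \hookrightarrow \mathcal{X}$, together with the standard identification of the open part of an Artin gluing with $\EE$; I find the explicit computation above more transparent and self-contained, and would present it directly.
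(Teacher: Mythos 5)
Your proof is correct. The paper states this lemma without an explicit proof, and your direct computation --- isolating strictness of the initial object $\ObjInit{\FF}$ as the reason that a map to $\Syn$ is a property of an object rather than extra structure, so that the slice collapses to $\EE$ --- is exactly the standard open-subtopos argument that the surrounding discussion (``there are no computable elements of $\Syn \times A$'') gestures at.
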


In topos-theoretic terms, $\EE$ is an open subtopos of $\GL{\rho}$. As an open subtopos, we can
present $\EE$ internally to $\GL{\rho}$ through a lex idempotent monad
$\Open A = \Syn \to A$~\cite{rijke:2020}. This modality has a strongly disjoint lex idempotent
modality, $\Closed A$~\cite[Section 3.4]{rijke:2020}. While we could work with $\Closed$ entirely
through this characterization, it is helpful to fix a definition:
\begin{equation}
  \DiagramSquare{
    height = 1.3cm,
    width = 2.75cm,
    nw = \Syn \times A,
    ne = A,
    sw = \Syn,
    se = \Closed A,
    se/style = pushout,
  }
  \label{diag:mstc:closed}
\end{equation}
Intuitively, $\Closed A$ is the portion of $A$ with a trivial $\EE$ component. This is even clearer
if one calculates the behavior of $\Closed$ on a closed type $A = \prn{E,F,f}$ as
$\Closed A = \prn{\ObjTerm{},F,\ArrTerm{}}$. Just as hypothesizing $\Syn$ \ie{}, working under $\Open$,
recovers $\EE$ internally to $\GL{\rho}$, working under $\Closed$ recovers $\FF$. Phrased in
topos-theoretic terms, $\FF$ is a \emph{closed} subtopos of $\GL{\rho}$.

The final ingredient we must add to our type theory is the \emph{realignment
  axiom}~\cite{orton:2018,birkedal:2019,sterling:modules:2021}, stating that the following
canonical map has an inverse $\Realign$ for any $B : \Uni$:
\begin{equation}
  \prn{\Sum{A : \Uni} \brk{A \cong B}}
  \to
  \prn{\Sum{A : \Syn \to \Uni} \Prod{z : \Syn} A\prn{z} \cong B}
\end{equation}

Unfolding these conditions yields the following:
\begin{defi}
  \label{def:mstc:realign}
  Fix $B : \Uni$, $A : \Open \Uni$, and $\alpha : \Prod{z : \Syn} A\prn{z} \cong B$. The
  \emph{realignment} $\Realign\prn{B,A,\alpha}$ of $B$ along $\alpha$ is a term of type
  $\Sum{A^* : \Uni} A^* \cong B$ satisfying the following condition:
  \[
    \Prod{z : \Syn} \Realign\prn{B,A,\alpha} = \prn{A\prn{z},\alpha\prn{z}}
  \]
\end{defi}

More intuitively, realignment states that a predicate lying over an object in $\EE$ can be shifted
to lie over an isomorphic object. A proper motivation of realignment is deferred to its use in
Section~\ref{sec:normalization-cosmos}, but broadly realignment will be used to satisfy the strict
equalities demanded by Definition~\ref{def:cosmoi:morphism} where a priori two constants might agree only up to
isomorphism.

Theorem 8.4 of Orton and Pitts~\cite{orton:2018} shows that a Hofmann--Streicher universe satisfies realignment for
levelwise decidable propositions. Using the presentation of $\GL{\rho}$ as a presheaf
topos~\cite{carboni:1995}, $\Syn$ is clearly levelwise decidable and so realignment at $\Syn$ is
constructively valid. Indeed, for this proposition realignment has a simple and intuitive meaning.
To a first approximation, it allows us to take an object in a gluing topos $\Mor{X}{\rho\prn{Y}}$ along
with an isomorphism $Y \cong Y'$ and perturb the first object to $\Mor{X}{\rho\prn{Y'}}$. Making
this precise (\eg{}, allowing $\Realign$ to act in an arbitrary context) is only marginally more
complex.

\begin{defi}
  The language of synthetic Tait computability is extensional type theory with a cumulative
  hierarchy of universes and a universe of propositions equipped with a distinguished proposition
  $\Syn : \Prop$ such that each universe satisfies the realignment axiom for $\Syn$.
\end{defi}

This subsection is summarized by the following result, which might be termed the `fundamental lemma'
of STC:
\begin{thm}
  \label{thm:mstc:stc}
  $\GL{\rho}$ is a model of STC.
\end{thm}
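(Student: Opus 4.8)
The plan is to assemble the ingredients developed over the course of this subsection. By Theorem~\ref{thm:mstc:gl-psh}, $\GL{\rho}$ is a presheaf topos, and every presheaf topos carries the standard model of extensional type theory together with a cumulative hierarchy of Hofmann--Streicher universes~\cite{hofmann-streicher:1997} and a universe of propositions $\Prop$ given by the subobject classifier. Thus the only data of an STC model requiring separate attention are the distinguished proposition $\Syn$ and the realignment axiom; the auxiliary modality $\Closed$ plays no role in the statement as we have phrased STC, so nothing further concerning it need be checked.

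For $\Syn$, I would verify that the object $\prn{\ObjTerm{\EE}, \ObjInit{\FF}, \ArrInit{}}$ is subterminal, hence an element of $\Prop$. As $\GL{\rho}$ embeds faithfully into $\EE \times \FF$ by $\prn{\pi_0, \pi_1}$, a morphism of $\GL{\rho}$ is monic as soon as both of its components are; the map $\Syn \to \ObjTerm{\GL{\rho}}$ has an identity as its $\EE$-component and $\ObjInit{\FF} \to \ObjTerm{\FF}$ as its $\FF$-component, the latter monic because initial objects in a topos are strict. Hence $\Syn$ is subterminal.

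The substance of the argument is realignment for $\Syn$, uniformly in the universe level. Here I would appeal to Theorem 8.4 of Orton and Pitts~\cite{orton:2018}, which equips a Hofmann--Streicher universe with a realignment operator along any \emph{levelwise decidable} proposition, uniformly in the level; after unfolding its formulation in terms of an isomorphism over a fixed $B : \Uni$ into the $\Sigma$-type presentation of Definition~\ref{def:mstc:realign}, this is exactly the operator $\Realign$. It therefore suffices to observe that $\Syn$ is levelwise decidable. Presenting $\GL{\rho}$ as a category of presheaves $\PSH{\CC}$ via Carboni and Johnstone~\cite{carboni:1995}, the object set of the site $\CC$ decomposes as a disjoint union reflecting the open and closed subtopoi $\EE$ and $\FF$, and $\Syn$ is precisely the sieve consisting of those objects in the summand coming from $\EE$; whether a given object of $\CC$ lies in that summand is decidable, so $\Syn$ is levelwise decidable. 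Instantiating the realignment operator at each level of the cumulative hierarchy completes the proof that $\GL{\rho}$ is a model of STC.

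I expect the only genuinely delicate point to be the levelwise decidability of $\Syn$, together with matching the hypotheses of the Orton--Pitts theorem to the concrete presheaf presentation of $\GL{\rho}$; everything else is bookkeeping. A secondary care point is confirming that the realignment principle of~\cite{orton:2018} is equivalent, after unfolding, to the $\Sigma$-type form used in Definition~\ref{def:mstc:realign}, so that the constructed operator is literally $\Realign$.
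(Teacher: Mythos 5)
Your proposal is correct and follows essentially the same route as the paper: the theorem is presented there as a summary of the preceding subsection, namely that $\GL{\rho}$ is a presheaf topos (hence models extensional type theory with Hofmann--Streicher universes and $\Prop$), that $\Syn$ is the indicated subterminal object, and that realignment follows from Theorem 8.4 of Orton--Pitts because $\Syn$ is levelwise decidable in the Carboni--Johnstone presheaf presentation. Your added checks (subterminality of $\Syn$ via the faithful embedding into $\EE \times \FF$, and the explicit identification of $\Syn$ with a decidable sieve on the glued site) are exactly the details the paper leaves implicit.
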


\subsection{Gluing together cosmoi}
\label{sec:mstc:mstc}

While a model in $\GL{\rho}$ for a carefully chosen $\EE$, $\FF$, and $\rho$ is sufficient to prove
many results of \MLTT{}~\cite{coquand:2019} the situation for \MTT{} is more complex. Rather than
gluing along a single functor, it is necessary to glue along an entire 2-natural transformation of
continuous functors between 2-functors of presheaf topoi. We begin by considering a pair of presheaf
cosmoi for the mode theory $\brc{\Mor[\mu]{n}{m}}$ and a 2-natural transformation of
right adjoints between them:
\begin{equation}
  \DiagramSquare{
    height = 1.35cm,
    width = 2.75cm,
    nw = \EE_n,
    sw = \EE_m,
    ne = \FF_n,
    se = \FF_m,
    north = \rho_n,
    south = \rho_m,
    west = f,
    east = g,
  }
  \label{diag:mstc:gluing}
\end{equation}
For simplicity and since we do not require the additional generality, we shall assume that $F$ and
$G$ are strict 2-functors and that the 2-natural transformation between them is likewise strict.
Let us further assume that $f$ and $g$ preserve finite colimits.

Gluing `horizontally', we obtain a pair of categories $\GL{\rho_n}$ and $\GL{\rho_m}$ and by
Theorems \ref{thm:mstc:gl-psh} and \ref{thm:mstc:stc} both are presheaf topoi and models of
STC. Artin gluing is functorial, and Diagram~\ref{diag:mstc:gluing} induce a functor
$\Mor[\GL{f,g}]{\GL{\rho_n}}{\GL{\rho_m}}$ sending $\prn{E_n,F_n,x}$ to
$\prn{f\prn{E_n},g\prn{F_n},g\prn{x}}$.
\begin{lem}
  \label{lem:mstc:gl-is-right-adjoint}
  $\Mor[\GL{f,g}]{\GL{\rho_n}}{\GL{\rho_m}}$ is a right adjoint.
\end{lem}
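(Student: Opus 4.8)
The plan is to exhibit an explicit left adjoint $\Mor[L]{\GL{\rho_m}}{\GL{\rho_n}}$ to $\GL{f,g}$ and verify the adjunction by constructing the unit and counit and checking the triangle identities — or, more economically, by producing a natural bijection $\Hom[\GL{\rho_n}]{L(Y)}{X} \cong \Hom[\GL{\rho_m}]{Y}{\GL{f,g}(X)}$. The candidate for $L$ is assembled componentwise from the left adjoints we already have: $f$ and $g$ preserve finite colimits, and since $\EE_n,\EE_m,\FF_n,\FF_m$ are presheaf topoi and $\rho_n,\rho_m$ are continuous between them, the hypotheses of Lemma~\ref{lem:mstc:open-subtopos} and Theorem~\ref{thm:mstc:gl-psh} already give us that $\GL{\rho_n}$ and $\GL{\rho_m}$ are presheaf topoi. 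First I would recall that $f$, being a continuous functor between presheaf topoi, is in fact a right adjoint (it preserves all limits and the categories are locally presentable), so write $f_! \Adjoint f$; similarly $g_! \Adjoint g$. These are exactly the adjoints supplied by the presheaf-cosmos structure, so no new work is needed to obtain them.

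Next I would define $L$ on an object $\prn{E_m, F_m, x}$ of $\GL{\rho_m}$, where $\Mor[x]{F_m}{\rho_m(E_m)}$. The $\EE_n$-component is $f_!(E_m)$ and the $\FF_n$-component is $g_!(F_m)$; what remains is to produce a structure map $\Mor{g_!(F_m)}{\rho_n(f_!(E_m))}$. Here I would use the Beck--Chevalley-type mate of the commuting square in Diagram~\ref{diag:mstc:gluing}: from the 2-naturality $\rho_m \circ f = g \circ \rho_n$ we obtain by transposition a canonical natural transformation $\Mor{g_! \circ \rho_m}{\rho_n \circ f_!}$, and precomposing its component at $E_m$ with $g_!(x)$ gives the desired map $g_!(F_m) \to g_!(\rho_m(E_m)) \to \rho_n(f_!(E_m))$. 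On morphisms $L$ acts by applying $f_!$ and $g_!$ componentwise; the commuting-square condition for the image morphism follows from naturality of $x \mapsto$ (the mate) together with functoriality of $g_!$. One then checks that the adjunction isomorphism for $\GL{\rho_n} \dashv \GL{\rho_m}$ is inherited levelwise from the adjunctions $f_! \Adjoint f$ and $g_! \Adjoint g$: a morphism $L\prn{E_m,F_m,x} \to \prn{E_n,F_n,y}$ is a pair of morphisms $f_!(E_m) \to E_n$ and $g_!(F_m) \to F_n$ making the evident square commute, which transposes to a pair $E_m \to f(E_n)$ and $F_m \to g(F_n)$; the compatibility square transposes correctly precisely because the structure map of $L$ was built from the mate, so the transposed square is exactly the one witnessing a morphism $\prn{E_m,F_m,x} \to \GL{f,g}\prn{E_n,F_n,y} = \prn{f(E_n), g(F_n), g(y)}$.

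The main obstacle I anticipate is the bookkeeping around the mate/Beck--Chevalley transformation: one must be careful that the canonical map $g_! \circ \rho_m \Rightarrow \rho_n \circ f_!$ is the right one (as opposed to its formal opposite, which would not be invertible or even exist in the needed direction) and that its interaction with the transposition of morphisms is exactly compatible, so that the componentwise adjunction bijection respects the commuting-square conditions on both sides. This is the standard subtlety in showing that Artin gluing is functorially well-behaved, and it is entirely formal once the directions are pinned down; I would verify it by drawing the two commuting squares (before and after transposition) and checking they correspond under the $f_! \Adjoint f$ and $g_! \Adjoint g$ bijections. Everything else — that $L$ preserves composition and identities, and that the triangle identities hold — then reduces to the corresponding facts for $f_!$ and $g_!$ applied componentwise.
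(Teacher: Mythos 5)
Your proposal is correct and takes essentially the same route as the paper: the paper also builds the left adjoint $\GL{f,g}_!$ componentwise from $f_!$ and $g_!$, using the comparison map $\beta : g_! \circ \rho_m \to \rho_n \circ f_!$ (obtained by transposition from the unit of $f_! \dashv f$) to give $g_!(F)$ its structure map over $\rho_n(f_!(E))$, and then observes that the hom-set bijection is inherited componentwise from the two underlying adjunctions. The only cosmetic difference is that the paper takes the left adjoints $f_! \dashv f$ and $g_! \dashv g$ as given by the presheaf-cosmos structure rather than re-deriving them from continuity and local presentability.
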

\begin{proof}
  While this follows classically from the special adjoint functor theorem, an explicit construction
  is useful. There is a comparison $\Mor[\beta]{g_! \circ \rho_m}{\rho_n \circ f_!}$ induced by
  transposition and the unit of the $f_! \Adjoint f$. The left adjoint $\GL{f,g}_!$ sends
  $\Mor[f]{F}{\rho_m\prn{E}}$ to $\Mor[\beta \circ g_!\prn{f}]{g_!\prn{F}}{\rho_n\prn{f_!\prn{E}}}$.
  The isomorphism $\Hom{\brk{f,g}_!\prn{X}}{Y} \cong \Hom{X}{\brk{f,g}\prn{Y}}$ is given
  component-wise by the isomorphisms associated with $f_! \Adjoint f$ and $g_! \Adjoint g$.
\end{proof}

\begin{rem}
  This explicit calculation show that $\Mor[\pi_n]{\GL{\rho_n}}{\EE_{n}}$ and
  $\Mor[\pi_m]{\GL{\rho_m}}{\EE_{m}}$ assemble into a natural transformation which satisfies
  Beck-Chevalley.
\end{rem}

Since each $\GL{\rho_{-}}$ is a presheaf topos, it supports a model of extensional type theory. We
wish to stitch these models together into a single model of \MTT{} with mode theory
$\brc{\Mor{n}{m}}$ using the results of Gratzer et al.~\cite{gratzer:mtt-journal:2021}. To do so, we
must show that $\GL{f,g}$ induces a dependent right adjoint between models of \MLTT{} in
$\GL{\rho_{n}}$ and $\GL{\rho_{m}}$. Next, we show this holds if we take the models of extensional
type theory in $\GL{\rho_{-}}$ as each having universes of types given by a sufficiently large
Hofmann--Streicher universe:
\begin{lem}
  \label{lem:mstc:gl-is-dra}
  The adjunction $\GL{f,g}_! \Adjoint \GL{f,g}$ induces a dependent right adjoint with respect to
  sufficiently large Hofmann--Streicher universe $\UU$.
\end{lem}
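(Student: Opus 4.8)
\textit{Proof sketch.} The plan is to follow the standard recipe for promoting a base right adjoint to a dependent right adjoint~\cite{birkedal:2020}. By Lemma~\ref{lem:mstc:gl-is-right-adjoint} we have an adjunction $\GL{f,g}_! \Adjoint \GL{f,g}$; write $\eta$ and $\epsilon$ for its unit and counit. In the presheaf semantics of $\GL{\rho_n}$ a type over $X$ is a map $\Mor[A]{X}{\UU}$, equivalently the $\UU$-small display map $\Mor{A^\ast\widetilde{\UU}}{X}$ obtained by pulling back the generic $\UU$-small family $\Mor[\pi_\UU]{\widetilde{\UU}}{\UU}$, and a term is a section. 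Given a type $\Mor[A]{\GL{f,g}_!\Gamma}{\UU}$ over a lock, I would define the action of $\CModify$ on $A$ by transporting across the adjunction: apply $\GL{f,g}$ to the family $A^\ast\widetilde{\UU} \to \GL{f,g}_!\Gamma$ and pull back along $\eta_\Gamma$; since $\GL{f,g}$ preserves limits this is classified by $\CModify\prn{A} := c \circ \GL{f,g}\prn{A} \circ \eta_\Gamma$, where $\Mor[c]{\GL{f,g}\prn{\UU}}{\UU}$ is a code for the display map $\GL{f,g}\prn{\pi_\UU}$.

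The only substantive point is that $\GL{f,g}\prn{\pi_\UU}$ is again $\UU$-small, so that the code $c$ exists; this is where ``sufficiently large'' enters. By Theorem~\ref{thm:mstc:gl-psh} both $\GL{\rho_n}$ and $\GL{\rho_m}$ are presheaf topoi, and $\GL{f,g}$ is a right adjoint between them, hence an accessible functor; an accessible right adjoint between presheaf topoi preserves $\UU$-small families whenever $\UU$ is a Hofmann--Streicher universe at a sufficiently large inaccessible. (This is also visible concretely: by the explicit description $\GL{f,g}\prn{E,F,x} = \prn{f\prn{E},g\prn{F},g\prn{x}}$, the functor acts by the right-adjoint modal functors $f$ and $g$ of presheaf cosmoi on the two components, and these send small families to small families on the nose; the presentation of the Artin gluing as a presheaf topos~\cite{carboni:1995} then lets one bound the resulting size inflation by a single large enough $\UU$.) Fix such a $\UU$ together with a choice of $c$.

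It remains to check substitution-stability and the term isomorphism, which are the familiar bookkeeping. Naturality of $\CModify$ in $\Gamma$ holds because $\GL{f,g}$ preserves the pullbacks that define reindexing of types and $\eta$ is natural; the coherence required for iterated locks $\LockCx{\LockCx{-}}$ may be ignored here, exactly as in the proof of Theorem~\ref{thm:cosmoi:quasi-initiality}, since in the eventual application the relevant isomorphisms of left adjoints are identities. For the term isomorphism, a term of $\CModify\prn{A}$ over $\Gamma$ is a section of $\eta_\Gamma^\ast\GL{f,g}\prn{A^\ast\widetilde{\UU}} \to \Gamma$, \ie{} a map $\Mor{\Gamma}{\GL{f,g}\prn{A^\ast\widetilde{\UU}}}$ over $\eta_\Gamma$; transposing across $\GL{f,g}_! \Adjoint \GL{f,g}$, using naturality of $\epsilon$ and the triangle identity $\epsilon_{\GL{f,g}_!\Gamma} \circ \GL{f,g}_!\prn{\eta_\Gamma} = \ArrId{}$, identifies these bijectively with sections of $A^\ast\widetilde{\UU} \to \GL{f,g}_!\Gamma$, \ie{} with terms of $A$, and the same chase shows the bijection is natural in $\Gamma$. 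Thus $\CModify$ together with this isomorphism is the desired dependent right adjoint. The main obstacle is really only the smallness step: one must pin down how the Carboni--Johnstone presentation of $\GL{\rho}$ relates its Hofmann--Streicher universe to those of $\EE$ and $\FF$ closely enough to license a single sufficiently large $\UU$; the remainder is the standard observation that a right adjoint on the base of a natural model lifts canonically to a DRA.
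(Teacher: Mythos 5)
Your proposal is correct and follows essentially the same route as the paper: the paper's proof simply observes that it suffices for $\GL{f,g}$ to send $\UU$-small families to $\UU$-small families and cites Gratzer et al.\ (Lemma 3.3.7 of the universes paper) for that fact, leaving the standard lifting of a smallness-preserving right adjoint to a DRA implicit. You spell out that standard lifting (correctly, including the transposition/triangle-identity chase) and sketch the smallness argument that the paper delegates to the citation, so the only difference is that you carry by hand what the paper outsources.
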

\begin{proof}
  It suffices to argue that $\GL{f,g}$ sends a $\UU$-small family in $\GL{\rho_n}$ to a $\UU$-small
  in $\GL{\rho_m}$. This is proven by \eg{}, Gratzer et al.~\cite[Lemma 3.3.7]{gratzer:universes:2022}.
\end{proof}

As a consequence of Lemma~\ref{lem:mstc:gl-is-dra}, we obtain a model of \MTT{} with the mode theory
$\brc{\Mor[\mu]{n}{m}}$ which interprets $n$, $m$, and $\mu$ as $\GL{\rho_n}$, $\GL{\rho_m}$, and
$\GL{f,g}$ respectively. This model of \MTT{} is particularly well-behaved: equality is extensional
and $\GL{f,g}$ validates the strong transposition-style elimination rules specified by
Birkedal et al.~\cite{birkedal:2020}.
\begin{lem}
  \label{lem:mstc:syn}
  In this model of \MTT{}, $\Modify[\mu]{\Syn_n} \cong \Syn_m$
\end{lem}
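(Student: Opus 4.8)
The plan is to compute $\Modify[\mu]{\Syn_n}$ by hand, using that $\mu$ is interpreted by the right adjoint $\GL{f,g}$ together with the explicit descriptions of $\GL{f,g}$ and of $\Syn$. Since $\Syn_n$ is a global proposition---interpreted by the subterminal object $\prn{\ObjTerm{\EE_n}, \ObjInit{\FF_n}, \ArrInit{}}$ of $\GL{\rho_n}$, regarded as a small family over the terminal object---the modal type $\Modify[\mu]{\Syn_n}$ is computed by applying $\GL{f,g}$ directly to that object. By the formula for $\GL{f,g}$ recorded above this yields $\prn{f\prn{\ObjTerm{\EE_n}}, g\prn{\ObjInit{\FF_n}}, g\prn{\ArrInit{}}}$, so the task reduces to identifying its three components with those of $\Syn_m = \prn{\ObjTerm{\EE_m}, \ObjInit{\FF_m}, \ArrInit{}}$.

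First I would note that $f$ admits a left adjoint $f_!$---being the action of $\mu$ in a presheaf cosmos---so $f$ preserves all limits, and in particular $f\prn{\ObjTerm{\EE_n}} \cong \ObjTerm{\EE_m}$. Next, by the standing hypothesis on Diagram~\ref{diag:mstc:gluing}, $g$ preserves finite colimits and hence the empty colimit, giving $g\prn{\ObjInit{\FF_n}} \cong \ObjInit{\FF_m}$. Finally, transporting along these two isomorphisms, the remaining structure map has source $\ObjInit{\FF_m}$; since there is a unique morphism $\ObjInit{\FF_m} \to \rho_m\prn{\ObjTerm{\EE_m}}$, it is forced to be $\ArrInit{}$.

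To finish I would check that these component isomorphisms cohere into a single isomorphism $\GL{f,g}\prn{\Syn_n} \cong \Syn_m$ in $\GL{\rho_m}$: the square demanded of a morphism of $\GL{\rho_m}$ lives over $\rho_m\prn{\ObjTerm{\EE_m}}$ and has the initial object $\ObjInit{\FF_m}$ as its source corner, so it commutes on the nose. I expect this last coherence to be the only genuinely delicate point, and it is trivialized precisely because one corner of the relevant square is initial; no computability- or PER-style argument enters, the claim being in essence a short computation once the interpretation of the modality is spelled out.
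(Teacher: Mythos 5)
Your proposal is correct and takes essentially the same route as the paper: both compute $\GL{f,g}$ on the subterminal object $\prn{\ObjTerm{},\ObjInit{},\ArrInit{}}$ componentwise, using that $f$ preserves the terminal object and $g$ (by the standing finite-colimit hypothesis) preserves the initial object. The final coherence check you flag is the only detail the paper leaves implicit, and it is indeed trivial because the relevant corner is initial.
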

\begin{proof}
  Externally, $\Syn_n = \prn{\ObjTerm{},\ObjInit{},\ArrInit{}}$ but $g$ preserves $\ObjInit{}$ while
  $f$ preserves $\ObjTerm{}$, so
  $\GL{f,g}\prn{\Syn_n} \cong \prn{\ObjTerm{},\ObjInit{},\ArrInit{}} = \Syn_m$.
\end{proof}

\begin{lem}
  In this model of \MTT{}, $\Open \Modify{A} \cong \Modify{\Open A}$ and
  $\Closed \Modify{A} \cong \Modify{\Closed A}$.
\end{lem}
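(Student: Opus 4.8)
The plan is to argue externally, reducing both claims to the explicit descriptions of $\Modify[\mu]{-}$ and of the two lex modalities in the glued topoi. Recall that $\Modify[\mu]{-}$ is interpreted by $\GL{f,g}$, which sends $\prn{E,F,x}$ to $\prn{fE, gF, gx}$; that $\Closed_k\prn{E,F,x} = \prn{\ObjTerm{},F,\ArrTerm{}}$ (the computation recalled after Diagram~\ref{diag:mstc:closed}); and that $\Open_k = \pi_0^R \circ \pi_0$, where $\pi_0 \colon \GL{\rho_k} \to \EE_k$ is the projection and $\pi_0^R$ its right adjoint (Theorem~\ref{thm:mstc:gl-psh}), which is readily checked to act by $\pi_0^R\prn{E} = \prn{E, \rho_k E, \ArrId{}}$; hence $\Open_k\prn{E,F,x} = \pi_0^R\prn{E} = \prn{E, \rho_k E, \ArrId{}}$. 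We use in addition that $f$ and $g$ preserve the terminal object (they are right adjoints), that each $\rho_k$ is continuous so $\rho_k\prn{\ObjTerm{}} \cong \ObjTerm{}$, and that $g \circ \rho_n \cong \rho_m \circ f$ by Diagram~\ref{diag:mstc:gluing}.

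With these descriptions in hand, both isomorphisms are short component calculations. For the open modality, $\GL{f,g}\prn{\Open_n A} = \prn{f\pi_0 A,\; g\rho_n\pi_0 A,\; \ArrId{}} \cong \prn{f\pi_0 A,\; \rho_m f\pi_0 A,\; \ArrId{}} = \Open_m\prn{f\pi_0 A} = \Open_m\prn{\Modify A}$, where the last step uses $\pi_0\prn{\Modify A} = f\pi_0 A$ and the fact that the structure map of an $\Open$-modal object is forced to be the identity. For the closed modality, $\GL{f,g}\prn{\Closed_n A} = \prn{f\ObjTerm{},\; g\pi_1 A,\; \ArrTerm{}} \cong \prn{\ObjTerm{},\; g\pi_1 A,\; \ArrTerm{}} = \Closed_m\prn{\Modify A}$, using that $f$ preserves $\ObjTerm{}$, that $\pi_1\prn{\Modify A} = g\pi_1 A$, and that the structure map of a $\Closed$-modal object is forced to be terminal.

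I do not anticipate a genuine obstacle; the only subtlety is that $\Open_k A = \Syn_k \to A$ is an exponential rather than a finite limit, so it cannot be handled by bare preservation properties of the right adjoint $\GL{f,g}$ and must instead be treated via its presentation as the localization $\pi_0^R \circ \pi_0$ (equivalently, the explicit formula above). For robustness one may also give purely structural proofs. The closed case then follows since $\GL{f,g}$ preserves finite limits (Lemma~\ref{lem:mstc:gl-is-right-adjoint}) and finite colimits (these are computed componentwise in each $\GL{\rho_k}$, where $f$ and $g$ preserve them) and satisfies $\GL{f,g}\prn{\Syn_n} \cong \Syn_m$ (Lemma~\ref{lem:mstc:syn}), so it carries the defining pushout square of $\Closed_n A$ to that of $\Closed_m\prn{\Modify A}$. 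The open case follows from the Beck--Chevalley condition recorded after Lemma~\ref{lem:mstc:gl-is-right-adjoint}, which gives $\pi_0 \circ \GL{f,g}_! \cong f_! \circ \pi_0$; passing to mates yields $\GL{f,g} \circ \pi_0^R \cong \pi_0^R \circ f$, whence $\GL{f,g}\prn{\Open_n A} = \GL{f,g}\prn{\pi_0^R\pi_0 A} \cong \pi_0^R\prn{\pi_0\prn{\Modify A}} = \Open_m\prn{\Modify A}$. All of this takes place up to the coherence isomorphisms witnessing the commutativity of Diagram~\ref{diag:mstc:gluing} and the preservation of $\ObjTerm{}$ by $f$, $g$, and $\rho_k$, which are absorbed without comment as elsewhere in the paper.
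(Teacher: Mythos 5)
Your external computations are correct, and they in fact supply a proof of the step the paper merely asserts without argument (``$\GL{f,g}$ preserves $\Open$ \emph{externally}''): identifying $\Open$ with the composite of $\pi_0$ and its right adjoint, computing $\Closed\prn{E,F,x} = \prn{\ObjTerm{},F,\ArrTerm{}}$, and invoking the square $g \circ \rho_n \cong \rho_m \circ f$ is exactly the right way to see the object-level isomorphisms, and your structural variants (preservation of the defining pushout of $\Closed$, the mate of Beck--Chevalley for $\Open$) make them natural in $A$. The gap is that this is only the first half of what the lemma claims. The statement lives in the \emph{internal language} of the glued model of \MTT{}, where $A$ ranges over dependent types in an arbitrary (locked) context, not over objects of $\GL{\rho_n}$. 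Internally, $\Open$ is interpreted fiberwise via the cartesian map $\Dep{\Open}_n$ on the universe $\Mor[\El{n}]{\EL{n}}{\TY{n}}$ (and similarly for $\Closed$), and $\Modify{-}$ via the dependent right adjoint, not the raw functor $\GL{f,g}$ applied to a total space. So what has to be produced is an isomorphism between the two \emph{families} classified by the two composites $\GL{f,g}\prn{\Open\TY{n}} \to \TY{m}$ (code-of-modal-type after $\Open$, versus $\Open$ after code-of-modal-type). This is precisely the step the paper calls ``internalizing'' the isomorphism, and it is where it locates the content of the proof: one checks that both composites classify $\GL{f,g}\prn{\Open\EL{n}}$, using that both $\GL{f,g}$ and $\Open$ preserve finite limits. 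A natural isomorphism of endofunctors on objects does not by itself yield this fibered statement.

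The distinction is not pedantic, because the lemma is later applied to genuinely dependent data: in Lemma~\ref{lem:model:mod} and in the crisp-induction argument, $\Modify[\nu]{-}$ is commuted past $\Closed$ of a dependent sum living over a nonempty context. The repair is short given what you already have --- either redo your component calculation fiberwise in an arbitrary slice, noting that $\Open$, $\Closed$, $\GL{f,g}$, and all the coherence isomorphisms you invoke are stable under pullback, or follow the paper and compare the two classifying maps into $\TY{m}$ directly --- but it needs to be said.
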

\begin{proof}
  We consider the only case of $\Open$, as the argument for $\Closed$ is identical. First, we
  observe that $\GL{f,g}$ preserves $\Open$ \emph{externally}. That is, there is an isomorphism
  $\alpha : \GL{f,g} \circ \Open \cong \Open \circ \GL{f,g}$. It remains to show that this
  isomorphism can be internalized. Let us write $\Mor[\El{m}]{\EL{m}}{\TY{m}}$ for the universe of
  types in $\GL{\rho_m}$ and write $\El{n}$ for its counterpart in $\GL{\rho_n}$. Let us further
  write $i$, $\Dep{\Open}_m$, and $\Dep{\Open}_n$ for the cartesian natural transformations
  $\Mor{\GL{f,g}\prn{\El{n}}}{\El{m}}$, $\Mor{\Open\El{m}}{\El{m}}$, and $\Mor{\Open\El{n}}{\El{n}}$
  that are used to interpret $\Modify{-}$ and $\Open$ in both $\GL{\rho_n}$ and $\GL{\rho_m}$,
  respectively.

  Unfolding this statement into the model, we must argue that the following pair of maps classify
  isomorphic families:
  \[
    \begin{tikzpicture}[diagram]
      \node (AA) {$\GL{f,g}\prn{\Open \TY{n}}$};
      \node [right = 5cm of AA] (AB) {$\GL{f,g}\prn{\TY{n}}$};
      \node [right = 5cm of AB] (AC) {$\TY{m}$};
      \node [below = 2cm of AA] (BA) {$\GL{f,g}\prn{\Open \TY{n}}$};
      \node [right = 5cm of BA] (BB) {$\Open \TY{m}$};
      \node [right = 5cm of BB] (BC) {$\TY{m}$};
      \path[->] (AA) edge node[above] {$\GL{f,g}\prn{\Dep{\Open}}$} (AB);
      \path[->] (AB) edge node[above] {$i$} (AC);
      \path[->] (BA) edge node[above] {$\Open i \circ \alpha$} (BB);
      \path[->] (BB) edge node[above] {$\Dep{\Open}$} (BC);
    \end{tikzpicture}
  \]
  We check that both classify $\GL{f,g}\prn{\Open\El{n}}$ as both $\GL{f,g}$ and $\Open$ preserve
  finite limits.
\end{proof}

\begin{rem}
  Technically, $\Syn$, $\Open$, and $\Closed$ should be always annotated with a mode. In light of
  these results, however, we shall omit this annotation and systematically \emph{identify} $\Syn_m$
  and $\Modify{\Syn_n}$. As both are subterminal, there are no coherence issues in this
  identification.
\end{rem}

\begin{defi}
  The language of \emph{multimodal STC} (MSTC) is extensional \MTT{} with a cumulative hierarchy of
  universes and a universe of propositions such that
  \begin{itemize}
  \item Each mode is equipped with a proposition $\Syn$.
  \item Each universe satisfies the realignment axiom for $\Syn$.
  \item \MTT{} modalities commute with $\Syn$, $\Open$, and $\Closed$.
  \end{itemize}
\end{defi}

Summarizing the preceding discussion:
\begin{thm}
  $\GL{\rho_n}$, $\GL{\rho_m}$, and $\GL{f,g}$ assemble into a presheaf cosmos and a model of MSTC.
\end{thm}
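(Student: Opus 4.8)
The plan is essentially bookkeeping: every ingredient has been isolated in the lemmas above, and no new construction is needed. First I would fix the cosmos structure. Since the mode theory $\brc{\Mor[\mu]{n}{m}}$ under consideration is free on its single generating modality and admits no nontrivial composites, the assignment $n \mapsto \GL{\rho_n}$, $m \mapsto \GL{\rho_m}$, $\mu \mapsto \GL{f,g}$ determines a strict 2-functor $F\colon \Mode \to \CAT$, the only remaining coherence data being the identity functors on $\ArrId{n}$ and $\ArrId{m}$. By Theorem~\ref{thm:mstc:gl-psh} each $\GL{\rho_i}$ is a presheaf topos, hence locally cartesian closed, and by Lemma~\ref{lem:mstc:gl-is-right-adjoint} the functor $\GL{f,g}$ is a right adjoint, so $F$ is a cosmos. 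It is moreover a \emph{presheaf} cosmos, the only extra requirement being that $\GL{f,g}$ preserve small families, which is exactly the fact established in the proof of Lemma~\ref{lem:mstc:gl-is-dra}.

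Next I would pass to the internal language. Being a presheaf cosmos, $F$ carries a model of extensional \MTT{} over $\Mode$ by the result of Gratzer et al.\ recalled in Section~\ref{sec:cosmoi:presheaf-cosmoi}, with $\Modify{-}$ interpreted by $\GL{f,g}$; as noted after Lemma~\ref{lem:mstc:gl-is-dra} this model is in fact especially well-behaved, though only its bare structure is needed here. It then remains to equip each mode with the synthetic Tait computability primitives and to check their compatibility with $\Modify{-}$. Each $\GL{\rho_i}$ is a model of STC by Theorem~\ref{thm:mstc:stc}, supplying a proposition $\Syn_i : \Prop$ together with the realignment axiom for $\Syn_i$ in every universe, via Theorem~8.4 of Orton and Pitts and the levelwise decidability of $\Syn_i$ in the presheaf presentation of $\GL{\rho_i}$. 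Finally, $\GL{f,g}$ commutes with $\Syn$, $\Open$, and $\Closed$ by Lemma~\ref{lem:mstc:syn} and the lemma immediately following it. These data realize every clause of the definition of MSTC, so $F$ is a model of MSTC.

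The one step calling for genuine care --- and already discharged by the preceding lemmas --- is the third clause of MSTC. It does not suffice that $\GL{f,g}$ preserve the \emph{external} data $\Open$, $\Closed$, and $\Syn$; one must verify that the resulting equivalences are \emph{internalizable}, i.e.\ expressible as isomorphisms between $\Modify{-}$-types and $\Open$/$\Closed$-types inside extensional \MTT{}. That is precisely the content of the two commutation lemmas, so with those in hand the present theorem reduces to collecting the pieces.
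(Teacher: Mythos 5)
Your proposal is correct and takes essentially the same route as the paper, which states this theorem with no proof beyond the phrase ``summarizing the preceding discussion'': the content is exactly the assembly of Theorem~\ref{thm:mstc:gl-psh}, Lemmas~\ref{lem:mstc:gl-is-right-adjoint} and~\ref{lem:mstc:gl-is-dra}, Theorem~\ref{thm:mstc:stc}, Lemma~\ref{lem:mstc:syn}, and the $\Open$/$\Closed$ commutation lemma, as you have done. Your closing remark correctly isolates the only non-bookkeeping point, namely that the commutation of $\GL{f,g}$ with $\Open$ and $\Closed$ must be internalized rather than merely external, which is precisely what the paper's commutation lemma establishes.
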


In fact, it is only a small step from this result to the full fundamental lemma of multimodal STC:
\begin{thm}
  \label{thm:mstc:fundamental}
  Given a pair of cosmoi $\Mor[F,G]{\Mode}{\CAT}$ and a 2-natural transformation $\Mor[\rho]{F}{G}$
  such that each $F\prn{\mu},G\prn{\mu}$ preserves finite colimits and each $\rho_m$ is continuous,
  $\Mor[\GL{\rho}]{\Mode}{\CAT}$ both a presheaf cosmos and a model of MSTC. Furthermore
  $\Mor[\pi_0]{\GL{\rho}}{F}$ is a morphism of cosmoi.
\end{thm}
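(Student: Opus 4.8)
The plan is to globalise the construction of the preceding subsection, carried out for the two-object mode theory $\brc{\Mor[\mu]{n}{m}}$, to an arbitrary mode theory $\Mode$. The guiding observation is that essentially every condition to be checked is \emph{local}: it concerns either a single mode $m$, where it is exactly Theorems~\ref{thm:mstc:gl-psh} and~\ref{thm:mstc:stc} applied to $\rho_m$, or a single modality $\Mor[\mu]{n}{m}$ together with its endpoints, where it is Lemmas~\ref{lem:mstc:gl-is-right-adjoint}, \ref{lem:mstc:gl-is-dra}, \ref{lem:mstc:syn} and the two lemmas following Lemma~\ref{lem:mstc:syn}, applied to $F\prn{\mu}$ and $G\prn{\mu}$. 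Only the task of organising this pointwise data into a 2-functor and checking 2-naturality of the projections is genuinely new, and, as usual with Artin gluing, this is coherence bookkeeping rather than new mathematics.

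First I would assemble the 2-functor $\Mor[\GL{\rho}]{\Mode}{\CAT}$. On a mode $m$ set $\GL{\rho}\prn{m} = \GL{\rho_m}$; on a modality $\Mor[\mu]{n}{m}$ set $\GL{\rho}\prn{\mu} = \GL{F\prn{\mu},G\prn{\mu}}$, the functor sending $\prn{E,X,x}$ to $\prn{F\prn{\mu}\prn{E}, G\prn{\mu}\prn{X}, G\prn{\mu}\prn{x}}$, which is well typed because 2-naturality of $\rho$ forces $\rho_m \circ F\prn{\mu} = G\prn{\mu} \circ \rho_n$. A 2-cell $\Mor[\alpha]{\mu}{\nu}$ yields $F\prn{\alpha}$ and $G\prn{\alpha}$, compatible with $\rho$ by its naturality, hence a natural transformation $\GL{\rho}\prn{\alpha}$; functoriality of Artin gluing in the pair $\prn{F\prn{\mu},G\prn{\mu}}$ together with the coherence data of $F$ and $G$ exhibits $\GL{\rho}$ as a 2-functor. (As in the proof of Theorem~\ref{thm:cosmoi:quasi-initiality} I would first replace $F$, $G$ by equivalent strict 2-functors and $\rho$ by a strictly 2-natural transformation, so that $\GL{\rho}$ is strict.) That $\GL{\rho}$ is a presheaf cosmos is then purely local: each $\GL{\rho}\prn{m}$ is a presheaf topos by Theorem~\ref{thm:mstc:gl-psh}, hence locally cartesian closed; each $\GL{\rho}\prn{\mu}$ has a left adjoint by Lemma~\ref{lem:mstc:gl-is-right-adjoint}, built from $F_!\prn{\mu}$, $G_!\prn{\mu}$ (which exist since $F$, $G$ are cosmoi) and the evident comparison map; and by Lemma~\ref{lem:mstc:gl-is-dra} this adjunction is a dependent right adjoint preserving small families.

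Next I would install the MSTC structure. Theorem~\ref{thm:mstc:stc} equips each $\GL{\rho}\prn{m}$ with a syntactic proposition $\Syn$, the lex idempotent monads $\Open$ and $\Closed$, and realignment for $\Syn$ at every universe (Definition~\ref{def:mstc:realign}; constructively valid since $\Syn$ is levelwise decidable). For each modality $\Mor[\mu]{n}{m}$ the argument of Lemma~\ref{lem:mstc:syn} applies verbatim, since $G\prn{\mu}$ preserves $\ObjInit{}$ (it preserves finite colimits) and $F\prn{\mu}$ preserves $\ObjTerm{}$ (it is a right adjoint), giving $\Modify[\mu]{\Syn_n} \cong \Syn_m$; the two subsequent lemmas likewise give $\Open\Modify[\mu]{A} \cong \Modify[\mu]{\Open A}$ and $\Closed\Modify[\mu]{A} \cong \Modify[\mu]{\Closed A}$ for every $\mu$. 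For composite modalities these isomorphisms are obtained by pasting, using the identification of $\Syn_m$ with $\Modify[\mu]{\Syn_n}$ (harmless as both are subterminal). Since a presheaf cosmos already supports a model of extensional \MTT{} with a cumulative hierarchy of universes, this makes $\GL{\rho}$ a model of MSTC.

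Finally, for the projection I would observe that the $\Mor[\pi_{0,m}]{\GL{\rho_m}}{F\prn{m}}$ are strictly 2-natural: for $\Mor[\mu]{n}{m}$ both $\pi_{0,m} \circ \GL{\rho}\prn{\mu}$ and $F\prn{\mu} \circ \pi_{0,n}$ send $\prn{E,X,x}$ to $F\prn{\mu}\prn{E}$, and the same holds on morphisms and 2-cells. Each $\pi_{0,m}$ is a logical functor by Theorem~\ref{thm:mstc:gl-psh}, in particular a morphism of locally cartesian closed categories, and the explicit description of $\GL{\rho}\prn{\mu}_!$ from the proof of Lemma~\ref{lem:mstc:gl-is-right-adjoint} yields a natural isomorphism $\pi_{0,n} \circ \GL{\rho}\prn{\mu}_! \cong F_!\prn{\mu} \circ \pi_{0,m}$ commuting with transposition, \ie{} $\pi_0$ satisfies the Beck-Chevalley condition of Definition~\ref{def:cosmoi:morphism}; hence $\Mor[\pi_0]{\GL{\rho}}{F}$ is a morphism of cosmoi. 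I expect the main obstacle to be entirely organisational: confirming that the modewise and modality-wise constructions cohere into one strict 2-functor and that $\pi_0$ is strictly 2-natural and Beck-Chevalley, which amounts to tracking the coherence isomorphisms of $F$, $G$, $\rho$ and of the functoriality of Artin gluing, and, for the MSTC clause, checking that the $\Syn$/$\Open$/$\Closed$-commutation isomorphisms for the generating modalities extend coherently along composition. No conceptual ingredient beyond the two-modality case should be required.
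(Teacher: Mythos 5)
Your proposal is correct and matches the paper's (largely implicit) argument: the paper states this theorem as ``only a small step'' from the single-modality case, relying on exactly the pointwise application of Theorems~\ref{thm:mstc:gl-psh} and~\ref{thm:mstc:stc} at each mode and of Lemmas~\ref{lem:mstc:gl-is-right-adjoint}--\ref{lem:mstc:syn} and their successors at each modality, with the remaining work being the 2-functoriality and Beck--Chevalley bookkeeping you describe. You have simply written out the details the paper elides.
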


\section{The normalization cosmos}
\label{sec:normalization-cosmos}

Recall from Section~\ref{sec:primer:normals-and-neutrals} the 2-functor of categories of renamings
$\Ren{-}$. By an identical construction to Example~\ref{ex:cosmoi:model-to-cosmos}, we obtain the
cosmos of renamings $\InterpRen{-} = \PSH{\Ren{-}}$ and the 2-natural transformation
$\Mor[\EmbRen[-]]{\Ren{-}}{\Cx{-}}$ acts by precomposition to yield a 2-natural transformation
$\Mor[\InvRen[-]]{\InterpSyn}{\InterpRen}$. Theorem~\ref{thm:mstc:fundamental} then yields the
following:
\begin{defi}
  \label{def:model:gluing-cosmos}
  The normalization cosmos $\InterpGl$ is a presheaf cosmos and model of MSTC where
  $\InterpGl{m} = \GL{\InvRen}$.
\end{defi}

\begin{rem}
  One may explicitly present $\GL{\InvRen}$ as a presheaf category over the \emph{collage} of
  $\Ren{m}$ and $\Cx{m}$~\cite{carboni:1995}. This is a category whose objects are given by the
  disjoint union of $\Ren{m} \Coprod{} \Cx{m}$ and with morphisms defined as follows:
  \begin{align*}
    &\Hom{\In{0}\prn{\Delta}}{\In{0}\prn{\Gamma}} = \Hom[\Ren{m}]{\Delta}{\Gamma}
    &
    &\Hom{\In{1}\prn{\Delta}}{\In{1}\prn{\Gamma}} = \Hom[\Cx{m}]{\Delta}{\Gamma}
    \\
    &\Hom{\In{1}\prn{\Delta}}{\In{0}\prn{\Gamma}} = \Hom[\Cx{m}]{\Delta}{i\prn{\Gamma}}
    &
    &\Hom{\In{0}\prn{\Delta}}{\In{1}\prn{\Gamma}} = \emptyset
  \end{align*}
\end{rem}

As a further consequence of Theorem~\ref{thm:mstc:fundamental}, the projection map
$\Mor[\pi_0]{\InterpGl}{\InterpSyn}$ is a morphism of cosmoi. In this section, we equip
$\InterpGl$ with the structure of an \MTT{} cosmos and show that $\pi_0$ extends to a morphism of
\MTT{} cosmoi.

\subsection{Prerequisites for the normalization cosmos}
\label{sec:model:prereq}

Before we extend $\InterpGl$ to an \MTT{} cosmos, we import features of $\InterpGl$ into the
language of MSTC to specialize the latter to this situation. In this section, we begin using the
interpretation of \MTT{} to work internally to $\InterpGl$ and explicitly record the extensions to
MSTC required for the normalization proof.

\begin{nota}[Dependent open modality]
  As $\Open A = \Syn \to A$, we will write $\Open[z] A\prn{z} = \prn{z : \Syn} \to A\prn{z}$ for the
  \emph{dependent} version of the open modality.
\end{nota}

\begin{nota}[Extension types]
  Given a type $A$, a proposition $\phi$, and an element $a : \phi \to A$, we write
  $\Ext{A}{x : \phi}{a\prn{x}}$ for subtype of $A$ of elements equal to $a$ under $\phi$. Formally:
  \[
    \Ext{A}{x : \phi}{a\prn{x}} = \Sum{a' : A}{\prn{x : \phi} \to a' = a\prn{x}}
  \]
  We treat the coercion $\Ext{A}{x : \phi}{a\prn{x}} \to A$ as silent and refer to the equation
  $a' = a\prn{x}$ as a \emph{boundary condition}.
\end{nota}

Recall from Example~\ref{ex:cosmoi:model-to-cosmos} that $\InterpSyn$ already contains the structure of an
\MTT{} cosmos. As a presheaf cosmos, this manifests through a series of constants in the internal
language of $\InterpSyn$. Using Lemma~\ref{lem:mstc:open-subtopos} we import these constants into
$\InterpGl$.

\begin{extension}
  \label{ext:model:syntactic-model}
  For each $m : \Mode$, there is a pair of constants $\IsTm[z : \Syn]{\Ty{m}\prn{z}}{\Uni[0]}$ and
  $\IsTm[z : \Syn, A : \Ty{m}\prn{z}]{\Tm{m}\prn{z,A}}{\Uni[0]}$. These constants are further
  equipped with operations {\`a} la Figure~\ref{fig:cosmoi:internal-constants} closing them under
  dependent sums, dependent products, modal types, \etc{}
\end{extension}

Next, observe that normals, neutrals, and normal types are equipped with an action by renamings, so
that they can be structured as presheaves over $\Ren{-}$. The decoding operations further organize
them into proof-relevant predicates over terms and types \eg{}, the presheaf of normal types as an
object of $\InterpGl$ lying over the presheaf of types from $\InterpSyn{m}$. In fact, because
renamings map variables to variables, the collection of variables of a given type organizes into a
presheaf over $\Ren{-}$ and part of an object in $\InterpGl$. We import these objects into the
internal language as additional constants:
\begin{extension}
  \label{ext:model:nf-ne}
  Given $m : \Mode$ and $A : \Open[z] \Ty{m}\prn{z}$, we have constants
  $\Nf{m}\prn{A},\Ne{m}\prn{A},\Vars{m}\prn{A} : \Ext{\Uni[0]}{z : \Syn}{\Tm{m}\prn{z,A\prn{z}}}$
  and $\NfTy{m} : \Ext{\Uni[0]}{z : \Syn}{\Ty{m}\prn{z}}$.

  We treat the coercion from $\Vars{m}\prn{A}$ to $\Ne{m}\prn{A}$ as silent.
\end{extension}

\begin{nota}
  We frequently omit explicitly passing $z : \Syn$ as an argument to $M : \Open X$. For instance,
  given $A,B : \Open \Ty{m}$ we write $\Nf{m}\prn{\PiConst\prn{A,B}}$ not
  $\Nf{m}\prn{\lambda z.\ \PiConst\prn{z, A\prn{z},B\prn{z}}}$.
\end{nota}

Following Hofmann~\cite{hofmann:1999}, the constructors for normal forms, neutrals, and normal types
can be realized in $\PSH{\Ren{-}}$ by a form of higher-order abstract syntax. As
$\Nf{m}\prn{A}$, $\Ne{m}\prn{A}$, and $\NfTy{m}$ lie over $\Tm{m}\prn{A}$ and $\Ty{m}$, one can
extend this higher-order abstract syntax presentation to $\InterpGl$ and realize each normal form,
neutral, and normal type as a constant of $\Nf{m}\prn{A}$, $\Ne{m}\prn{A}$, or $\NfTy{m}$ which
collapses to the appropriate syntactic constant under $z : \Syn$. As a simple example, the normal
form type for booleans along with the ordinary boolean type former induce maps
$\Mor[\NfBool]{\ObjTerm{}}{\pi_1\prn{\NfTy{m}}}$ and $\Mor[\Bool]{\ObjTerm{}}{\Ty{m}}$ in
$\PSH{\Ren{m}}$ and $\PSH{\Cx{m}}$ respectively. These maps pair together to introduce a morphism
$\Mor[{\Interp{\CBool}}]{\ObjTerm{}}{\Interp{\NfTy{m}}}$ in $\InterpGl{m}$ where we rely on the
equation $\EmbNfTy{\NfBool} = \Bool$ to ensure that these morphisms fit into the commutative square
required by $\InterpGl{m}$. The full collection of constants is specified in
Figure~\ref{fig:model:neutral-and-normal}.

\begin{extension}
  \label{ext:model:nf-ne-const}
  There are constants internalizing normals, neutrals, and normal types.
\end{extension}

\begin{figure}
  \begin{align*}
    &\CPi : \prn{A : \NfTy{m}}(B : \Vars{m}(A) \to {\NfTy{m}}) \to \NfTy{m}
    \\
    &\CSig : (A : \NfTy{m})(B : \Vars{m}(A) \to \NfTy{m}) \to \NfTy{m}
    \\
    &\CId : (A : \NfTy{m}) \to \Nf{m}(A) \to \Nf{m}(A) \to \NfTy{m}
    \\
    &\CBool : \NfTy{m}
    \\
    &\CModify : \DeclNameless{\NfTy{n}} \to \NfTy{m}
    \\[0.2cm]
    &\CLam : \prn{A : \Open \Ty{m}}(B : {\Open\Tm{m}(A)} \to {\Open \Ty{m}})\\
    &\quad \to (\prn{a : \Vars{m}(A)} \to {\Nf{m}(B(a))}) \to \Nf{m}(\PiConst(A,B))
    \\
    &\CApp : \DeclVar{A}{\Open \Ty{m}}(B : {\Open \Tm{m}(A)} \to {\Open \Ty{m}})\\
    &\quad \to \Ne{m}(\PiConst(A,B)) \to \DeclVar{a}{\Nf{m}(A)} \to \Ne{m}(B(a))
    \\[0.2cm]
    &\CNfInj : \Ne{m}(\BoolConst) \to \Nf{m}(\BoolConst)
    \\
    &\CTrue,\CFalse : \Nf{m}(\BoolConst)
    \\
    &\CIf : (A : \Vars{m}(\BoolConst) \to \NfTy{m})\\
    &\quad \to \Nf{m}(A(\TrueConst)) \to \Nf{m}(A(\FalseConst)) \to (b : \Ne{m}(\BoolConst)) \to \Ne{m}(A(b))
    \\[0.2cm]
    &\CNfInj : (A : \Open \Ty{m})(a_0,a_1 : \Open \Tm{m}(A))\\
    &\quad\to \Ne{m}(\IdConst(A,a_0,a_1)) \to \Nf{m}(\IdConst(A,a_0,a_1))
    \\
    &\CRefl{} : (A : \Open[z] \Ty{m}(z))(a : \Open[z] \Tm{m}(z, A(z))) \to \Nf{m}(\IdConst(A,a,a))
    \\
    &\CIdRec : (A : \Open \Ty{m})
      \,(B : (a_0,a_1 : \Vars{m}(A))(p : \Vars{m}(\IdConst(A,a_0,a_1))) \to \NfTy{m})\\
    &\quad \to
      ((a : \Vars{m}(A)) \to \Nf{m}(B(a,a,\ReflConst(a))))\,
      (a_0,a_1 : \Open[z] \Tm{m}(A))(p : \Ne{m}(\IdConst(A,a_0,a_1))) \\
    &\quad \to \Ne{m}(B(a_0,a_1,p))
    \\[0.2cm]
    &\CNfInj : \DeclVar{A}{\Ty{n}} \to \Ne{m}\prn{\ModConst\prn{A}} \to \Nf{m}\prn{\ModConst\prn{A}}
    \\
    &\CMkBox : \DeclVar{A}{\Open \Ty{n}}\DeclNameless{\Nf{n}(A)} \to \Nf{m}\prn{\lambda z.\ \ModConst\prn{z, A(z)}}
    \\
    &\CLetMod : \DeclVar{A}{\Open \Ty{n}}<\nu\circ\mu>
      \, \prn{B : \DeclVar{a}{\Vars{m}\prn{\ModConst\prn{A}}}<\nu> \to \NfTy{o}}\\
    &\quad{} \to \prn{\DeclVar{a}{\Vars{n}(A)}<\nu\circ\mu> \to \Nf{o}\prn{B\prn{\ModIntroConst\prn{a}}}}
      \to \DeclVar{a}{\Ne{m}\prn{\ModConst\prn{A}}}<\nu> \to \Ne{o}\prn{B\prn{a}}
    \\[0.2cm]
    &\CUni : \NfTy{m}
    \\
    &\CDec : \Nf{m}(\UniConst) \to \NfTy{m}
    \\
    &\CNfInj : \Ne{m}(\UniConst) \to \Nf{m}(\UniConst)
    \\
    &\CModifyCode : \DeclNameless{\Nf{n}(\UniConst)} \to \Nf{m}(\UniConst)
    \\
    &\CDecIso_{\CModifyCode} : \DeclVar{A}{\Nf{n}(\UniConst)}
      \to \Nf{m}(\ModConst(A)) \to \Nf{m}(\DecConst(\ModCodeConst(A)))
    \\
    &\CDecIso*_{\CModifyCode} : \DeclVar{A}{\Nf{n}(\UniConst)}
      \to \Ne{m}(\DecConst(\ModCodeConst(A))) \to \Ne{m}(\ModConst(A))
  \end{align*}

  \caption{Neutral and normal forms, internally}
  \label{fig:model:neutral-and-normal}
\end{figure}

Finally, inspecting Definition~\ref{def:model:gluing-cosmos} reveals that modalities are interpreted
by functors which are both left and right adjoints as they preserve all (co)limits. As a result,
modalities preserve coproducts:
\begin{extension}
  \label{ext:model:modalities-cocont}
  $\Modify{A + B} \cong \Modify{A} + \Modify{B}$
\end{extension}

\subsection{The \MTT{} cosmos}
\label{sec:model:mtt-cosmos}

We now extend $\InterpGl$ to an \MTT{} cosmos. To ensure that $\pi_0$ induces a morphism of \MTT{}
cosmoi, it suffices to ensure that each constant we add to $\InterpGl$ is equal to the corresponding
piece of $\InterpSyn$ as internalized by Extension~\ref{ext:model:syntactic-model} under $z : \Syn$.

\paragraph{The universe of computable types and terms}
We begin with the definition of types and terms in this cosmos. Concretely, we require the following
for each $m : \Mode$:
\begin{align*}
  &\Ty*{m} : \Ext{\Uni[2]}{z : \Syn}{\Ty{m}\prn{z}}
  \\
  &\Tm*{m} : \prn{A : \Ty*{m}} \to \Ext{\Uni[1]}{z : \Syn}{\Tm{m}\prn{z,A}}
\end{align*}

We start with the following putative definition of types:
\begin{equation}
  \begin{make-rcd}{T}{\Uni[2]}
    \Code : \NfTy{m}
    \\
    \Pred : \Ext{\Uni[1]}{z : \Syn}{\Tm{m}(z, \Code)}
    \\
    \ReflectFld : \Ext{\Ne{m}\prn{\Code} \to \Pred}{\Syn}{\ArrId{}}
    \\
    \ReifyFld : \Ext{\Pred \to \Nf{m}\prn{\Code}}{\Syn}{\ArrId{}}
  \end{make-rcd}
  \label{cons:model:T}
\end{equation}
In prose, $A : T$ contains the code of a normal type $A.\Code$ as well as a proof-relevant predicate
on the elements of $A.\Code$.

The last two fields ensure that (1) all elements tracked by this predicate can be assigned normal
forms, and (2) all neutrals lie within the predicate. We write $\Reify{A}$ and $\Reflect{A}$ for
$A.\ReifyFld$ and $A.\ReflectFld$. Of the two, the $\ReifyFld$ is the crucial operation needed for
the normalization algorithm: it ensures that computable elements can be given normal
forms. Tait~\cite{tait:1967}, however, has shown that the pair of operations is necessary to close
all type formers under just $\ReifyFld{}$.

We cannot simply define $\Ty*{m} = T$, as $T$ does not satisfy the equation
$z : \Syn \vdash T = \Ty{m}\prn{z}$. It does, however, satisfy this condition up to isomorphism:
under $z : \Syn$, the types of $\Pred{}$, $\ReflectFld{}$, and $\ReifyFld{}$ collapse to singletons,
while the type of $\Code$ collapses to $\Ty{m}\prn{z}$ by Extension~\ref{ext:model:nf-ne}:
\begin{align*}
  &\alpha_{\Open} : \Prod{z : \Syn} T \cong \Ty{m}\prn{z}
  \\
  &\alpha_{\Open}(z,A) = A.\Code
\end{align*}

Observe $\prn{\Ty{m},\alpha_{\Open}} : \Sum{A : \Open\Uni} \Prod{z : \Syn} A\prn{z} \cong T$, so the
realignment axiom of Definition~\ref{def:mstc:realign} applies and we can define
\begin{equation}
  \prn{\Ty*{m},\alpha} = \Realign\prn{T,\Ty{m},\alpha_{\Open}}
  \label{cons:model:ty}
\end{equation}
The equation $z : \Syn \vdash \Ty*{m} = \Ty{m}\prn{z}$ follows immediately from the second half of
Definition~\ref{def:mstc:realign}. On elements $A : \Ty*{m}$, this implies
$z : \Syn \vdash A = \alpha\prn{A}.\Code$. For readability, we continue to use record notation to
manipulate $\Ty*{m}$.

Given $A : \Ty*{m}$, we define $\Tm*{m}\prn{A}$:
\begin{equation}
  \Tm*{m}\prn{A} = A.\Pred
  : \Ext{\Uni[1]}{z : \Syn}{\Tm{m}\prn{z,A}}
  \label{cons:model:tm}
\end{equation}
To see that this is well-typed, we must show $\Tm*{m}\prn{A} = \Tm{m}\prn{z,A}$ given $z :
\Syn$. The type of $A.\Code$ in Construction~\ref{cons:model:T} ensures
$\Tm*{m}\prn{A} = \Tm{m}\prn{z,A.\Code}$. We have observed that $A = A.\Code$ under $z : \Syn$ so
$\Tm*{m}\prn{A} = \Tm{m}\prn{z,A}$.

\paragraph{Type connectives}
It remains only to close $\prn{\Ty*{m},\Tm*{m}}$ under all connectives in such a way that each
connective lies over the corresponding one in $\prn{\Ty{m},\Tm{m}}$. For mode-local connectives,
these constructions are very similar to those given by Sterling~\cite{sterling:phd}
(Lemmas~\ref{lem:model:sums},~\ref{lem:model:booleans},~\ref{lem:model:id},
and~\ref{lem:model:universes}).  Modal types and dependent products, however, involve modalities
and thus are different than the other connectives
(Lemmas~\ref{lem:model:pi} and~\ref{lem:model:mod}).

\begin{lem}
  \label{lem:model:pi}
  $\prn{\Ty*{m},\Tm*{m}}$ is closed under dependent products and the relevant constants lift those
  of $\Ty{m}$ (\ie{}, under an assymption $z : \Syn$, they agree with those of $\Ty{m}$ and
  $\Tm{m}$):
  \begin{align*}
    &\PiConst* : \DeclVar{A}{\Ty*{n}}\prn{B : \DeclNameless{\Tm*{n}\prn{A}} \to \Ty*{m}} \to \Ty*{m}
    \\
    &\alpha_{\PiConst*} : \DeclVar{A}{\Ty*{n}}\prn{B : \DeclNameless{\Tm*{n}\prn{A}} \to \Ty*{m}}\\
    &\quad \to \Tm*{m}\prn{\PiConst*\prn{A,B}} \cong \brk{\DeclVar{a}{\Tm*{n}\prn{A}} \to \Tm*{m}\prn{B\prn{a}}}
  \end{align*}
\end{lem}
\begin{proof}
  We must define two constants ($\PiConst*$ and $\alpha_{\PiConst*}$) with the aforementioned types.
  We begin by fixing $\DeclVar{A}{\Ty*{m}}$ and $B : \DeclNameless{\Tm*{n}\prn{A}} \to
  \Ty*{m}$ and define $\Phi$ as follows:
  \[
    \Phi = \DeclVar{a}{\Tm*{n}\prn{A}} \to \Tm*{m}\prn{B\prn{a}}
  \]
  Observe under $z : \Syn$, the following equality holds:
  \[
    \Phi = \DeclVar{a}{\Tm{n}\prn{z,A}} \to \Tm{m}\prn{B\prn{z,a}}
  \]
  We may apply realignment using
  $\alpha_{\PiConst}\prn{z} : \Tm{m}\prn{z,\PiConst\prn{z,A,B}} \cong \Phi$.  This realignment
  yields a type $\Psi$ and isomorphism $\beta : \Psi \cong \Phi$. Under $z : \Syn$, these restrict
  to $\Tm{m}\prn{z,\PiConst\prn{z,A,B}}$ and $\alpha_{\PiConst}\prn{z}$ respectively.

  With these to hand we define $\PiConst*$ and $\alpha_{\PiConst*}$ as follows:
  \begin{align*}
    &\PiConst*\prn{A,B}.\Code = \CPi\prn{A.\Code, \lambda v.\ B\prn{\Reflect{A} v}.\Code}
    \\
    &\PiConst*\prn{A,B}.\Pred = \Psi
    \\
    &\PiConst*\prn{A,B}.\ReflectFld =
      \lambda e.\ \beta^{-1}\prn{\lambda a.\ \Reflect{B\prn{a}} \CApp\prn{e,\Reify{A} a}}
    \\
    &\PiConst*\prn{A,B}.\ReifyFld =
      \lambda f.\ \CLam\prn{\lambda v.\ \Reify{B\prn{\Reflect{A} v}} \beta\prn{f}\prn{\Reflect{A} v}}
    \\
    &\alpha_{\PiConst*} = \beta
  \end{align*}
  It remains to check a variety of boundary conditions under $z : \Syn$. In particular, we must show
  that $\PiConst*\prn{A,B} = \PiConst\prn{z,A,B}$ and that $\ReflectFld$ and $\ReifyFld$ become the
  identity. These follow directly from assumptions about $A$, $B$, and the boundaries of various
  constructors. For instance
  \begin{align*}
    \PiConst*\prn{A,B}
    &= \PiConst*\prn{A,B}.\Code\\
    &= \CPi\prn{A.\Code, \lambda v.\ B\prn{\Reify{A} v}.\Code}\\
    &= \PiConst\prn{z, A.\Code, \lambda v.\ B\prn{\Reify{A} v}.\Code}\\
    &= \PiConst\prn{z, A, \lambda v.\ B\prn{\Reify{A} v}}\\
    &= \PiConst\prn{z, A, B} \qedhere
  \end{align*}
\end{proof}

\begin{lem}
  \label{lem:model:mod} $\prn{\Ty*{m},\Tm*{m}}$ is closed under modal types and the four relevant
  constants ($\ModConst*$, $\ModIntroConst*$, $\ModElimConst*$, and $\ModElimConstEq*$) lift those of
  their counterparts in $\Ty{m}$ and $\Tm{m}$.
\end{lem}
\begin{proof}
  Fix a modality $\Mor[\mu]{n}{m}$. In this case we define the four constants $\ModConst$,
  $\ModIntroConst$, $\ModElimConst$, and $\ModElimConstEq$ described in
  Section~\ref{sec:cosmoi:models}, subject to the expected boundary conditions. Fix a variable $A :
  \Ty*{n}$ under the modal annotation $\mu$ \ie{}, $\DeclVar{A}{\Ty*{n}}$. We define the unaligned
  predicate as follows:
  \begin{equation*}
    \begin{make-rcd}{\Phi}{\Uni[1]}
      \TmFld : \Nf{m}\prn{\ModConst\prn{A}}
      \\
      \PrfFld :
      \Closed \prn{
        \DelimMin{4}
        \begin{aligned}
          &\Sum{e : \Ne{m}\prn{\ModConst\prn{A}}} \TmFld = \CNfInj\prn{e}
          \\
          + &\Sum{a : \Modify{A.\Pred{}}} \TmFld = \CMkBox\prn{\Reify{A} a}
        \end{aligned}
      }
    \end{make-rcd}
  \end{equation*}
  For the first time, we have used the closed modality $\Closed$ to explicitly tweak the
  proof-relevant predicate. Intuitively, $\Phi$ is a predicate on $\Tm{m}\prn{z,\ModConst\prn{z,A}}$
  and $\TmFld$ ensures that this predicate tracks elements with normals forms. The second field,
  moreover, ensures that these normal are either neutral or $\NfMkBox{a}$ where $a$ is computable.
  Without the closed modality shielding the second field of $\Phi$, however, this could never have
  the correct extent along $z : \Syn$. Using $\Open \Closed X \cong \ObjTerm{}$ and the boundary of
  $\Nf{m}\prn{\ModConst\prn{A}}$, we can now define the following isomorphism:
  \[ \alpha_{\Open}\prn{z,p} = p.\TmFld : \Prod{z : \Syn} \Phi \cong
  \Tm{m}\prn{z,\ModConst\prn{z,A}} \]

  Realigning $\Phi$ along $\alpha_{\Open}$, we obtain $\Psi$ and $\alpha : \Psi \cong \Phi$ which
  under $z : \Syn$ become $\Tm{m}\prn{z,\ModConst\prn{z,A}}$ and $\alpha_{\Open}$.

  We now define $\ModConst*$:
  \begin{align*}
    &\ModConst*\prn{A}.\Code = \CModify\prn{A.\Code}
    \\
    &\ModConst*\prn{A}.\Pred = \Psi
    \\
    &\ModConst*\prn{A}.\ReflectFld = \lambda e.\ \alpha^{-1}\gls{\CNfInj\prn{e}, \eta_\Closed\In{1}\gls{e,\star}}
    \\
    &\ModConst*\prn{A}.\ReifyFld = \lambda m.\ \alpha\prn{m}.\TmFld
  \end{align*}

  Unlike Lemma~\ref{lem:model:pi}, the introduction and elimination principles are not automatically
  obtained from $\alpha$ and they must be constructed separately:
  \[
    \ModIntroConst*\prn{A,a} = \alpha^{-1}\gls{\CMkBox\prn{\Reify{A} a}, \eta_\Closed \In{2}\gls{a,\star}}
  \]

  It remains to define the elimination principle $\ModElimConst*$. This is an involved affair and we
  describe it step-by-step. Begin by fixing $\Mor[\nu]{m}{o}$ along with the following:
  \begin{align*}
    &B : \DeclNameless{\Tm*{m}\prn{\ModConst*\prn{A}}}<\nu> \to \Ty{o}
    \\
    &b : \DeclVar{x}{\Tm*{n}(A)}<\nu\circ\mu> \to \Tm*{o}\prn{B\prn{\ModIntroConst*\prn{A,x}}}
    \\
    &\DeclVar{p}{\Tm*{m}\prn{\ModConst*(A)}}<\nu>
  \end{align*}
  We must construct an element of $\Tm*{o}\prn{B\prn{a}}$. We begin by inspecting $p$. As \MTT{}
  modalities in extensional \MTT{} commute with dependent sums, equality, $\Closed$, and---by
  Extension~\ref{ext:model:modalities-cocont}---with finite coproducts, $p$ can be decomposed into the
  following:
  \begin{align*}
    &\DeclVar{\TmFld}{\Nf{m}\prn{\ModConst\prn{A}}}<\nu>
    \\
    &\PrfFld :
    \Closed \prn{
      \DelimMin{4}
      \begin{aligned}
        &\Sum{e : \Modify[\nu]{\Ne{m}\prn{\ModConst\prn{A}}}} \MkBox[\nu]{\TmFld} = \CNfInj \ZApp{} e
        \\
        + &\Sum{a : \Modify[\nu\circ\mu]{A.\Pred{}}} \MkBox[\nu]{\TmFld} = \prn{\CMkBox \circ \Reify{A}} \ZApp{} a
      \end{aligned}
    }
  \end{align*}

  Recall from Diagram~\ref{diag:mstc:closed} that $\Closed X$ is a pushout of $\Syn$ and $X$. To
  define a map out of $\Closed X$, therefore, it suffices to define a map out of $X$ which is
  constant assuming $z : \Syn$. We conclude by scrutinizing $\PrfFld$:
  \begin{equation*}
    \begin{cases}
      \Reflect{} \CLetMod(A, \lambda v.\ B\prn{\Reflect{} v}.\Code{}, \lambda x.\ \Reify{} b\prn{\Reflect{} x}, e)
      &
      \text{if } \PrfFld = \In{1}\prn{\MkBox[\nu]{e}, \_}
      \\
      b(a)
      &
      \text{if } \PrfFld = \In{2}\prn{\MkBox[\nu]{a}, \_}
    \end{cases}
  \end{equation*}
  Given $z : \Syn$, both branches collapse to $\ModElimConst\prn{z,A,B,b,\TmFld}$ so this yields a
  well-defined map. The boundary conditions follow from routine computations.
\end{proof}

\begin{lem}
  \label{lem:model:sums}
  $\prn{\Ty*{m},\Tm*{m}}$ is closed under dependent sums via:
  \begin{align*}
    \SigConst*\prn{A,B} &: \Ty*{m}\\
    \alpha_{\SigConst*} &: \Tm{m}(\SigConst*(A, B)) \cong \Sum{a : \Tm*{m}(A)} \Tm*{m}(B(a))
  \end{align*}
  Moreover, assuming $z : \Syn$ then $\SigConst* = \SigConst$ and $\alpha_{\SigConst*} = \alpha_{\SigConst}$.
\end{lem}
\begin{proof}
  Fixing $A : \Ty*{m}$ and $B : \Tm*{m}(A) \to \Ty*{m}$. 
  We begin by applying realignment to the following:
  \[
    \prn{\Sum{a : A.\Pred} B\prn{a}.\Pred, \alpha_{\SigConst(z)}}
  \]
  This produces $\Psi : \Uni[1]$ and
  $\alpha_{\SigConst*} : \Psi \cong \Sum{a : A.\Pred} B\prn{a}.\Pred$ such that under the assumption
  $z : \Syn$ the following holds:
  \[
    \Psi = \Tm{m}(\SigConst(z, A, B)) \qquad \alpha_{\SigConst*} = \alpha_{\SigConst}\prn{z}
  \]

  We now define $\SigConst*(A,B)$ as follows:
  \begin{align*}
    &\SigConst*(A,B).\Code = \CSig(A.\Code, \lambda v.\ B.\Code(\Reflect{A} v))
    \\
    &\SigConst*(A,B).\Pred = \Psi
    \\
    &\SigConst*(A,B).\ReflectFld =
    \lambda e.\ \alpha_{\SigConst*}^{-1}
      \gls{
        \Reflect{A}(\CProj{0}(e)),
        \Reflect{B(\Reflect{A}(\CProj{0}(e)))}(\CProj{1}(e))
      }
    \\
    &\SigConst*(A,B).\ReifyFld =
    \lambda p.\ \CPair(\Reify{A}(\alpha_{\SigConst*}p.0), \Reify{B(\alpha_{\SigConst*}p.0)}(\alpha_{\SigConst*}p.1))
  \end{align*}
  The fact that $\Reify{}$ and $\Reflect{}$ lie over the identity follows directly from the $\beta$
  and $\eta$ laws of dependent sums in \MTT{}. We show the calculations for $\Reflect{}$. Fix
  $z : \Syn$:
  \begin{align*}
    \Reflect{\SigConst*(A,B)}(e)
    &= \alpha_{\SigConst*}^{-1} \gls{\Reflect{A}(\CProj{0}(e)), \Reflect{B(\Reflect{A}(\CProj{0}(e)))}(\CProj{1}(e))}\\
    &= \alpha_{\SigConst}^{-1} \gls{\CProj{0}(e), \CProj{1}(e)} \\
    &= \alpha_{\SigConst}^{-1} \gls{\alpha_{\SigConst(A,B)}(e)_0, \alpha_{\SigConst(A,B)}(e)_1} \\
    &= e
  \end{align*}

  The fact that $\SigConst*(A,B).\Code{}$ and $\SigConst*(A,B).\Pred{}$ lie over $\SigConst(A,B)$
  and $\Tm{m}(z, \SigConst(z, \allowbreak A, B))$ follows from their definition and realignment.
\end{proof}

\begin{lem}
  \label{lem:model:booleans}
  $\prn{\Ty*{m},\Tm*{m}}$ is closed under booleans and the relevant constants lie over their
  counterparts in $\prn{\Ty{m},\Tm{m}}$.
\end{lem}
\begin{proof}
  We must implement the following constants:
  \begin{align*}
    &\BoolConst* : \Ext{\Ty*{m}}{z : \Syn}{\BoolConst(z)}
    \\
    &\TrueConst* : \Ext{\Tm*{m}(\BoolConst*)}{z : \Syn}{\TrueConst}
    \\
    &\FalseConst* : \Ext{\Tm*{m}(\BoolConst*)}{z : \Syn}{\FalseConst}
    \\
    &\IfConst* : (A : \Tm*{m}(\BoolConst*) \to \Ty*{m})\\
    &\quad \to \Tm*{m}(A(\TrueConst*))\\
    &\quad \to \Tm*{m}(A(\FalseConst*))\\
    &\quad \to (b : \Tm*{m}(\BoolConst*))\\
    &\quad \to \Ext{\Tm*{m}(A(b))}{z : \Syn}{\IfConst(A, t, f, b)}
    \\
    &\_ : (A : \Tm*{m}(\BoolConst*) \to \Ty*{m})\\
    &\quad \to (t : \Tm*{m}(A(\TrueConst*)))\\
    &\quad \to (f : \Tm*{m}(A(\FalseConst*)))\\
    &\quad \to (\IfConst*(A, t, f, \TrueConst*) = t) \times (\IfConst*(A, t, f, \FalseConst*) = f)
  \end{align*}

  First, we define $\Phi$ by realignment:
  \begin{equation*}
    \begin{make-rcd}{\Phi}{\Ext{\Uni[1]}{z : \Syn}{\Tm{m}\prn{z,\BoolConst}}}
      \TmFld : \Nf{m}\prn{\BoolConst}
      \\
      \PrfFld :
      \Closed \prn{
        \DelimMin{4}
        \begin{aligned}
          &\Sum{e : \Ne{m}\prn{\BoolConst}} \TmFld = \CNfInj\prn{e}
          \\
          + &\Sum{b : \mathbf{2}} \TmFld = \mathsf{rec}_{\mathbf{2}}(b; \NfTrue; \NfFalse))
        \end{aligned}
      }
    \end{make-rcd}
  \end{equation*}
  In the above, we have used $\mathsf{rec}_{\mathbf{2}}$ for the ordinary elimination principle for
  $\mathbf{2}$ in $\InterpGl$. We have opted for the names $\mathbf{2}$ and
  $\mathsf{rec}_{\mathbf{2}}$ in the hopes of avoiding ambiguity with $\BoolConst$, $\CIf$, and
  $\IfConst$.

  We may now define $\BoolConst*$:
  \begin{align*}
    &\BoolConst*.\Code = \CBool
    \\
    &\BoolConst*.\Pred = \Phi
    \\
    &\BoolConst*.\ReflectFld = \lambda e. \gls{\CNfInj(e), \Unit(\In{1}(e, \Ax))}
    \\
    &\BoolConst*.\ReifyFld = \lambda b.\ b.\TmFld
  \end{align*}

  It remains to define the introduction and elimination forms.
  \begin{align*}
    \TrueConst* &= \gls{\CTrue, \Unit(\In{2}(0, \Ax))}\\
    \FalseConst* &= \gls{\CFalse, \Unit(\In{2}(1, \Ax))}
  \end{align*}

  The elimination form is defined by constructing a map out of $\Closed X$,  by taking advantage of
  its definition as a pushout (Diagram \ref{diag:mstc:closed}):
  \begin{align*}
    \IfConst*&(A, t_0, t_1, b = \gls{\TmFld, \PrfFld}) =\\
    &
    \begin{cases}
      \IfConst(z, A, t_0, t_1, b) & \PrfFld = \In{1}(z)
      \\
      \Reify{A(b)} \CIf(\lambda v.\ A(\Reflect{} v).\Code{}, \Reify{} t_0, \Reify{} t_1, e)
      & \PrfFld = \In{2}(\In{1}(e, \_))
      \\
      \mathsf{rec}_{\mathbf{2}}\prn{b_0; t_0; t_1} & \PrfFld = \In{2}(\In{2}(b_0, \_))
    \end{cases}
  \end{align*}
  In this definition, three different incarnations of the elimination rule for booleans are
  used. The first branch deals uses $\IfConst\prn{z, \dots}$ which is the elimination rule from the
  syntactic model, the second uses the neutral form $\CIf$ associated to $\IfConst$, and the third
  is the ``ordinary'' elimination principle for booleans available within the model.
\end{proof}

\begin{lem}
  \label{lem:model:id}
  $\prn{\Ty*{m},\Tm*{m}}$ is closed under intensional identity types and the relevant constants lie
  over their counterparts in $\prn{\Ty{m},\Tm{m}}$.
\end{lem}
\begin{proof}
  We must implement the following constants:
  \begin{align*}
    &\IdConst* : (A : \Ty*{m})(a_0,a_1 : \Tm*{m}(A))\\
    &\quad \to \Ext{\Ty*{m}}{z : \Syn}{\IdConst(z, A, a_0,a_1)}
    \\
    &\ReflConst* : (A : \Ty*{m})(a : \Tm*{m}(A))\\
    &\quad \to \Ext{\Tm*{m}(\IdConst*(A,a,a))}{z : \Syn}{\ReflConst(z, A, a)}
    \\
    &\IdElimConst* : (A : \Ty*{m})\\
    &\quad \to (B : (a_0,a_1 : \Tm*{m}(A)) \to \Tm*{m}(\IdConst*(A,a_0,a_1)) \to \Ty*{m})\\
    &\quad \to (b : (a : \Tm*{m}(A)) \to \Tm*{m}(B(a,a,\ReflConst(a))))\\
    &\quad \to (a_0,a_1 : \Tm*{m}(A))(p : \Tm*{m}(\IdConst*(A,a_0,a_1)))\\
    &\quad \to \Ext{\Tm*{m}(B(a_0,a_1,p))}{z : \Syn}{\IdElimConst(z, B, b, p)}
    \\
    \_ &: (A : \Ty*{m})\\
    &\quad \to (B : (a_0,a_1 : \Tm*{m}(A)) \to \Tm*{m}(\IdConst*(A,a_0,a_1)) \to \Ty*{m})\\
    &\quad \to (b : (a : \Tm*{m}(A)) \to \Tm*{m}(B(a,a,\ReflConst(a))))\\
    &\quad \to (a : \Tm*{m}(A)) \to \IdElimConst*(A,B,b,\ReflConst*(a)) = b(a)
  \end{align*}

  Fix $A : \Ty*{m}$ and $a_0,a_1 : \Tm*{m}\prn{A}$. Just as with the normalization structure for
  booleans, we begin by defining $\Phi$ by realignment:
  \begin{equation*}
    \begin{make-rcd}{\Phi}{\Ext{\Uni[1]}{z : \Syn}{\Tm{m}\prn{z,\IdConst\prn{A,a_0,a_1}}}}
      \TmFld : \Nf{m}\prn{\IdConst\prn{A,a_0,a_1}}
      \\
      \PrfFld :
      \Closed \prn{
        \DelimMin{4}
        \begin{aligned}
          &\Sum{e : \Ne{m}\prn{\IdConst\prn{A,a_0,a_1}}} \TmFld = \CNfInj\prn{e}
          \\
          + &\Sum{a : A.\Pred} a_0 = a_1 \times \TmFld = \ReflConst\prn{\Reify{A} a}
        \end{aligned}
      }
    \end{make-rcd}
  \end{equation*}

  We now define $\IdConst*$:
  \begin{align*}
    &\IdConst*(A,a_0,a_1).\Code{} = \NfId{\Code{A}}{\Reify{A} a_0}{\Reify{A} a_1}
    \\
    &\IdConst*(A,a_0,a_1).\Pred = \Phi
    \\
    &\IdConst*(A,a_0,a_1).\ReflectFld = \lambda e. \gls{\CNfInj(e), \Unit(\In{1}(e, \Ax))}
    \\
    &\IdConst*(A,a_0,a_1).\ReifyFld = \lambda p.\ p.\TmFld
  \end{align*}

  We define reflexivity by $\ReflConst* = \gls{\ReflConst, \Unit(\In{2}(\Ax, \Ax, \Ax))}$. Finally, the
  elimination principle is defined using the induction principle for $\Closed X$.
    \begin{align*}
    &\IdElimConst*(B, b, a_0, a_1, p = \gls{\TmFld, \PrfFld}) =\\
    &
    \begin{cases}
      \IdElimConst(z, B, b, a_0, a_1, p) & \PrfFld = \In{1}(z)
      \\
      \Reify{}
      \CIdRec(
        \lambda l,r,p.B(\Reflect{}l, \Reflect{}r, \Reflect{} p).\Code{},
        \lambda a. \Reify{} b(\Reflect{} a),
        e)
      & \PrfFld = \In{2}(\In{1}(e, \_))
      \\
        b(a_0) & q = \In{2}(\In{2}(\_, \_, \_)) \tag*{\qedhere}
    \end{cases}
  \end{align*}
\end{proof}

\begin{lem}
  \label{lem:model:universes}
  $\prn{\Ty*{m},\Tm*{m}}$ is closed under a universe and the relevant constants lie
  over their counterparts in $\prn{\Ty{m},\Tm{m}}$.
\end{lem}
\begin{proof}
  We begin by constructing the two constants for the universe and the decoding family:
  \begin{align*}
    &\UniConst* : \Ext{\Ty*{m}}{z : \Syn}{\UniConst}
    \\
    &\DecConst* : (A : \Tm*{m}(\UniConst*)) \to \Ext{\Ty*{m}}{z : \Syn}{\DecConst(A)}
  \end{align*}

  At this point we take advantage of the fact that $\Pred{}$ is an element of $\Uni[1]$; in
  particular, we observe that $\Uni[0]$ is small enough to fit inside $\Uni[1]$.

  We may then define $\Psi$ by realigning the following element of $\Uni[1]$ along the evident
  isomorphism to $\Tm*{m}(z,\UniConst(z))$:
  \begin{equation*}
    \begin{make-rcd}{\Psi}{\Ext{\Uni[1]}{z : \Syn}{\Tm*{m}\prn{z,\UniConst}}}
      \Code{} : \Nf{m}(\UniConst)
      \\
      \Pred{} : \Ext{\Uni[0]}{z : \Syn}{\Tm{m}(z, \DecConst(\Code{}))}
      \\
      \ReflectFld{} : \Ext{\Ne{m}(\DecConst(\Code{})) \to \Pred{}}{z : \Syn}{\ArrId{}}
      \\
      \ReifyFld{} : \Ext{\Pred{} \to \Nf{m}(\DecConst(\Code{}))}{z : \Syn}{\ArrId{}}
    \end{make-rcd}
  \end{equation*}
  With $\Psi$ in hand, we may define $\UniConst*$:
  \begin{align*}
    &{\UniConst*}.\Code = \CUni
    \\
    &{\UniConst*}.\Pred = \Psi
    \\
    &{\UniConst*}.\ReflectFld = \lambda e.\ \gls{\CNfInj(e); \Ne{m}\prn{\DecConst\prn{e}}; \ArrId{}; \lambda e.\ \CNfInj(e)}
    \\
    &{\UniConst*}.\ReifyFld = \lambda A.\ A.\Code
  \end{align*}

  The definition of $\DecConst*$ is essentially cumulativity:
  \[
    \DecConst*(\gls{\Code{}; \Pred{}; \ReifyFld{}; \ReflectFld{}}) =
    \gls{\NfDec{\Code{}}; \Pred{}; \ReifyFld{}; \ReflectFld{}}
  \]

  It remains to show that $\prn{\UniConst*,\DecConst*}$ is closed under various type formers. We
  show a representative cases: modal types. This concretely entails implementing the following
  constants:
  \begin{align*}
    &\ModCodeConst* :
    \DeclVar{A}{\Tm*{n}\prn{\UniConst*}} \to
    \Ext{
      \Tm*{m}\prn{\UniConst*}
    }{
      z : \Syn
    }{
      \ModCodeConst\prn{z,A}
      }
    \\
    &\DecIsoConst*_{\ModCodeConst} : \DeclVar{A}{\Tm*{n}(\UniConst*)}\\
    &\quad
    \to
    \Ext{
      \Tm*{m}\prn{\DecConst*\prn{\ModCodeConst*\prn{A}}}
      \cong \Tm*{m}\prn{\ModConst*\prn{\DecConst*\prn{A}}}
    }{
      z : \Syn
    }{
      \DecIsoConst_{\ModCodeConst}\prn{z,A}
    }
  \end{align*}

  Fix $\DeclVar{A}{\Tm*{n}\prn{\UniConst*}}$. We realign
  $\Tm*{m}\prn{\ModConst*\prn{\DecConst*\prn{A}}}$ along the isomorphism
  $\DecIsoConst_{\ModCodeConst}$ to obtain a type $\Psi$ and an isomorphism:
  \[
    \DecIsoConst*_{\ModConst} :
    \Ext{
      \Psi %
      \cong \Tm*{m}\prn{\ModConst*\prn{\DecConst*\prn{A}}}
    }{
      z : \Syn
    }{
      \DecIsoConst_{\ModCodeConst}\prn{z,A}
    }
  \]

  It remains only to define $\ModCodeConst*\prn{A}$ such that $\ModCodeConst*\prn{A}.\Pred = \Psi$:
  \begin{align*}
    &\ModCodeConst*\prn{A}.\Code = \ModifyCode{A.\Code}
    \\
    &\ModCodeConst*(A).\Pred{} = \Psi
    \\
    &\ModCodeConst*(A).\ReflectFld{} =
    \lambda e.\ (\DecIsoConst*_{\ModCodeConst})^{-1}\prn{\Reflect{\ModConst*\prn{\DecConst*\prn{A}}} \NfDecIso{e}}
    \\
    &\ModCodeConst*(A).\ReifyFld{} =
    \lambda m.\ \NfDecIso*{\Reify{\ModConst*\prn{\DecConst*\prn{A}}} \DecIsoConst*_{\ModCodeConst}(m)}
  \end{align*}
  The checks that all constructions lie over their syntactic counterparts follow immediately from
  the conclusions of realignment.
\end{proof}

\begin{thm}
  \label{thm:model:mtt-cosmos}
  $\InterpGl$ supports an \MTT{} cosmos built around $\prn{\Ty*{m},\Tm*{m}}$ and
  $\Mor[\pi_0]{\InterpGl}{\InterpSyn}$ is a map of \MTT{} cosmoi.
\end{thm}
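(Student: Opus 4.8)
The plan is to assemble the pieces already in place. Recall from Section~\ref{sec:cosmoi:presheaf-cosmoi} that putting an \MTT{} cosmos structure on a presheaf cosmos amounts to exhibiting, mode by mode, the constants of Figure~\ref{fig:cosmoi:internal-constants} in the internal language, the dependent-type discipline of the metatheory automatically encoding the commutativity and pullback conditions from Section~\ref{sec:cosmoi:cosmoi}. For $\InterpGl$ these constants are exactly what Section~\ref{sec:model:mtt-cosmos} produces: the universe $\prn{\Ty*{m},\Tm*{m}}$ is built in Constructions~\ref{cons:model:ty} and~\ref{cons:model:tm}; closure under dependent products, modal types, dependent sums, booleans, and intensional identity types is Lemmas~\ref{lem:model:pi},~\ref{lem:model:mod},~\ref{lem:model:sums},~\ref{lem:model:booleans}, and~\ref{lem:model:id}; and Lemma~\ref{lem:model:universes} supplies the weak {\`a} la Tarski universe together with all of its codes, which is precisely the subuniverse demanded by the last clause of the definition of an \MTT{} cosmos. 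Translating these internal data back through the dictionary of Section~\ref{sec:cosmoi:presheaf-cosmoi} equips $\InterpGl$ with \MTT{} cosmos structure built around $\prn{\Ty*{m},\Tm*{m}}$.

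It remains to promote $\Mor[\pi_0]{\InterpGl}{\InterpSyn}$ to a morphism of \MTT{} cosmoi. By Theorem~\ref{thm:mstc:fundamental}, $\pi_0$ is already a morphism of cosmoi --- it is $2$-natural, each component is an LCCC functor, and it satisfies Beck--Chevalley --- so, by Definition~\ref{def:cosmoi:morphism}, the only remaining obligation is strict preservation of the universe and every connective. Under the correspondence between the internal language of $\InterpGl$ and that of $\InterpSyn$, the functor $\pi_0$ is the inverse image of the open immersion of Lemma~\ref{lem:mstc:open-subtopos}: it is restriction along $\Syn$, i.e.\ evaluation at $z : \Syn$. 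Hence $\pi_0$ strictly preserves a given piece of structure exactly when the corresponding starred constant of Section~\ref{sec:model:mtt-cosmos} restricts, under $z : \Syn$, to the syntactic constant imported by Extension~\ref{ext:model:syntactic-model}. But these are precisely the boundary conditions discharged in each construction: $z : \Syn \vdash \Ty*{m} = \Ty{m}\prn{z}$ and $z : \Syn \vdash \Tm*{m}\prn{A} = \Tm{m}\prn{z,A}$ from realignment (Constructions~\ref{cons:model:ty} and~\ref{cons:model:tm}), $z : \Syn \vdash \PiConst* = \PiConst\prn{z}$ with its introduction and elimination rules (Lemma~\ref{lem:model:pi}), the analogous equations for $\ModConst*$, $\SigConst*$, $\BoolConst*$, $\IdConst*$, $\UniConst*$, $\DecConst*$ and their computation rules (Lemmas~\ref{lem:model:mod}--\ref{lem:model:universes}). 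Collecting these identifications yields strict preservation of the entire \MTT{} cosmos structure, so $\pi_0$ is a morphism of \MTT{} cosmoi.

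At this stage the theorem carries essentially no content beyond bookkeeping: all of the genuine difficulty was front-loaded into the modal cases --- Lemmas~\ref{lem:model:pi} and, especially, \ref{lem:model:mod} --- which turned on the compatibility of the \MTT{} modalities with $\Open$, $\Closed$, and $\Syn$ established in Section~\ref{sec:mstc} (notably $\Modify{\Syn} \cong \Syn$ and $\Closed\Modify{A} \cong \Modify{\Closed A}$) together with the cocontinuity of modalities recorded in Extension~\ref{ext:model:modalities-cocont}, used to decompose a computable element of a modal type. The one point that still wants care is the identification underlying the previous paragraph: one must check that $\pi_0$, as produced by the explicit construction of Lemma~\ref{lem:mstc:gl-is-right-adjoint}, is --- up to the left-adjoint coherences flagged in the remark following Theorem~\ref{thm:cosmoi:quasi-initiality}, which here reduce to identities --- exactly the functor that the internal language perceives as restriction along $\Syn$, so that the internally stated boundary conditions genuinely witness external strict preservation. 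Granting that, the proof is complete.
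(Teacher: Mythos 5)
Your proposal is correct and follows the paper's own (implicit) argument: Theorem~\ref{thm:model:mtt-cosmos} is stated in the paper as the summary of Section~\ref{sec:model:mtt-cosmos}, whose content is exactly the assembly you describe --- the internal-language reduction of \MTT{}-cosmos structure to the constants of Figure~\ref{fig:cosmoi:internal-constants}, supplied by Constructions~\ref{cons:model:ty}--\ref{cons:model:tm} and Lemmas~\ref{lem:model:pi}--\ref{lem:model:universes}, with strict preservation by $\pi_0$ witnessed by the $z:\Syn$ boundary conditions and the cosmos-morphism structure of $\pi_0$ coming from Theorem~\ref{thm:mstc:fundamental}. Your closing remark about identifying $\pi_0$ with restriction along $\Syn$ is a fair observation of the one coherence the paper leaves tacit.
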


\section{The normalization algorithm}
\label{sec:normalization}

After Theorem~\ref{thm:model:mtt-cosmos}, it remains only to parlay the existence of the
normalization cosmos into a normalization function.

\subsection{The normalization function}
At this point, it becomes necessary to shift from working purely internally to $\InterpGl$ to
inspecting some constructions externally. Accordingly, we will have use for the \emph{total} spaces
of terms and normal forms \eg{} $\Tm*{m} = \Sum{A : \Ty*{m}} \Tm*{m}\prn{A}$. We write $\TY{m}$ and
$\EL{m}$ for the presheaves of types and terms in $\InterpSyn{m}$ to disambiguate them from
$\Ty*{m}$ and $\Tm*{m}$.
\begin{lem}
  \label{lem:normalization:reify}
  There is a morphism $\Mor[\Reify{}]{\Tm*{m}}{\Nf{m}}$ which restricts to $\ArrId{}$ under $\Syn$.
\end{lem}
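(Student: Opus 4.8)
The plan is to read $\Reify{}$ directly off the $\ReifyFld$ field of a computable type, define it by an internal formula in $\InterpGl$, and then externalize.

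First I would unwind the two total spaces. Internally, $\Tm*{m} = \Sum{A : \Ty*{m}} \Tm*{m}\prn{A}$, and Construction~\ref{cons:model:tm} gives $\Tm*{m}\prn{A} = A.\Pred$ for every $A : \Ty*{m}$. Each such $A$ also carries $A.\Code : \NfTy{m}$ together with the map $\Reify{A} = A.\ReifyFld : A.\Pred \to \Nf{m}\prn{A.\Code}$ supplied by Construction~\ref{cons:model:T} (here $A.\Code : \NfTy{m}$ is silently coerced into $\Open\Ty{m}$ before being fed to $\Nf{m}$, following the conventions used throughout Section~\ref{sec:normalization-cosmos}). I would then set
\[
  \Reify{}\prn{A, a} = \prn{A.\Code,\, \Reify{A}\,a} \quad\in\quad \Nf{m} = \Sum{\tau : \NfTy{m}} \Nf{m}\prn{\tau},
\]
where $\Nf{m}$ now denotes the total space of normal forms. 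Externalizing along Lemma~\ref{lem:mstc:open-subtopos} yields the required morphism $\Mor[\Reify{}]{\Tm*{m}}{\Nf{m}}$ in $\InterpGl{m}$.

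Next I would check the boundary condition, namely that $\Reify{}$ restricts to $\ArrId{}$ under $\Syn$. Working under $z : \Syn$: the realignment defining $\Ty*{m}$ (Construction~\ref{cons:model:ty}, via Definition~\ref{def:mstc:realign}) yields $A = A.\Code$ for every $A : \Ty*{m}$, while the extension type bounding the $\ReifyFld$ field in Construction~\ref{cons:model:T} yields $\Reify{A} = \ArrId{}$. By Extension~\ref{ext:model:nf-ne} we moreover have $\NfTy{m} = \Ty{m}\prn{z}$ and $\Nf{m}\prn{A} = \Tm{m}\prn{z, A}$, so that $\Nf{m}$ and $\Tm*{m}$ restrict to the very same total space $\Sum{A : \Ty{m}\prn{z}} \Tm{m}\prn{z, A}$. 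Under all these identifications $\Reify{}\prn{A,a} = \prn{A,a}$, which is the claim.

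I do not anticipate a genuine obstacle here: the lemma is a bookkeeping step that merely collects boundary conditions already arranged by the definitions of $\Ty*{m}$, $\Tm*{m}$, and the normal-form constants in Section~\ref{sec:model:prereq}. The only point meriting care is that the total spaces $\Tm*{m}$ and $\Nf{m}$, although defined by internal formulas, must be viewed externally as objects of $\InterpGl{m}$ before the phrase ``restricts to $\ArrId{}$ under $\Syn$'' acquires its intended meaning through the open-subtopos equivalence of Lemma~\ref{lem:mstc:open-subtopos}; equivalently, one may state the entire construction internally and observe $\Open\Reify{} = \ArrId{}$ there.
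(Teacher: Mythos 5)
Your proof is correct and matches the paper's: the paper's entire proof is the one-liner ``Working internally, $\Reify{}\prn{A,M} = \prn{A, \Reify{A}\,M}$,'' which is exactly your definition, and your verification of the boundary condition under $z : \Syn$ just spells out the bookkeeping the paper leaves implicit. No gap.
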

\begin{proof}
  Working internally, $\Reify{} \prn{A,M} = \prn{A, \Reify{A} M}$.
\end{proof}

Fix a term $\IsTm{M}{A}$. Theorems \ref{thm:cosmoi:quasi-initiality} and \ref{thm:model:mtt-cosmos}
define a map $\Mor[\Interp{M}]{\Interp{\Gamma}}{\Tm*{m}}$ in $\InterpGl{m}$ along with an
isomorphism $\alpha : \pi_0\prn{\Interp{\Gamma}} \cong \Yo{\Gamma}$ such that
$\pi_0\prn{\Interp{M}} = \YoEm{M} \circ \alpha$.

We would like to obtain a normal form for $M$ from $\Interp{M}$. To this end, we can unfold
$\Interp{M}$ along with $\Reify{}$ from Lemma~\ref{lem:normalization:reify} to obtain a commuting
diagram:
\begin{equation*}
  \begin{tikzpicture}[diagram,baseline=(current bounding box.center)]
    \SpliceDiagramSquare{
      height = 1.6cm,
      width = 3.75cm,
      nw = \pi_1\prn{\Interp{\Gamma}},
      sw = \InvRen{\Yo{\Gamma}},
      ne = \pi_1\prn{\Tm*{m}},
      se = \InvRen{\EL{m}},
      south = {\InvRen{\YoEm{M}}},
      west = {\InvRen{\alpha} \circ \Interp{\Gamma}},
    }
    \node [right = 2.5cm of ne] (nf) {$\pi_1\prn{\Nf{m}}$};
    \path[->] (ne) edge (nf);
    \path[->] (nf) edge (se);
  \end{tikzpicture}
\end{equation*}

To normalize $M$, it suffices to construct $\Atoms{\Gamma} : \pi_1\prn{\Interp{\Gamma}}_\Gamma$ such
that $\alpha\prn{\Interp{\Gamma}\prn{\Atoms{\Gamma}}} = \ArrId{} : \InvRen{\Yo{\Gamma}}_\Gamma$:
pushing $\Atoms{\Gamma}$ along the top of the diagram would yield a normal form (an element of
$\pi_1\prn{\Nf{m}}$) which decodes to $M$ by Yoneda. Modulo technical details, $\Atoms{\Gamma}$ is
produced by using $\Reflect{}$ to convert variables for each element of $\Gamma$ into elements of
$\pi_1\prn{\Interp{\Gamma}}$.

\begin{lem}
  \label{lem:normalization:atoms}
  For any $\IsCx{\Gamma}$ there exists
  $\Mor[\Atoms{\Gamma}]{\prn{\Yo{\Gamma},\Yo{\Gamma}}}{\Interp{\Gamma}}$ in $\InterpGl$ lying
  over $\ArrId{} : \InvRen{\Yo{\Gamma}}$ in $\InterpSyn$.
\end{lem}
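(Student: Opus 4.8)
The plan is to proceed by induction on the derivation of $\IsCx{\Gamma}$, in each case producing $\Atoms{\Gamma}$ together with the witness that it lies over the identity $\ISb : \Gamma \to \Gamma$ (written $\ArrId{}$ in the statement). Recall from the proof of Theorem~\ref{thm:cosmoi:quasi-initiality} that $\Interp{-}$ arises from a displayed model over the syntactic model, so --- ignoring the coherences permitted by the remark --- $\Interp{\EmpTele}$ is the terminal object $\ObjTerm{\InterpGl{m}}$ ($\pi_0$ preserving the terminal object) and $\Interp{\ECx{\Gamma_0}{A}<\mu>}$ is the modal context extension $\ECx{\Interp{\Gamma_0}}{\Interp{A}}<\mu>$ formed in the \MTT{} cosmos of Theorem~\ref{thm:model:mtt-cosmos}, where $\Interp{A}$ is the \emph{computable} type over $\LockCx{\Interp{\Gamma_0}}<\mu>$. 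The base case is then immediate: $\pi_1\prn{\ObjTerm{\InterpGl{m}}}$ is the terminal presheaf on $\Ren{m}$, so it has a unique global element, which perforce lies over the unique element of $\InvRen{\Yo{\EmpTele}}_{\EmpTele}$.

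For the inductive step, suppose $\Gamma = \ECx{\Gamma_0}{A}<\mu>$ for some $\Mor[\mu]{n}{m}$ and that $\Atoms{\Gamma_0}$ has been constructed. Since $\pi_1$ preserves finite limits and commutes with the modal functors (Beck--Chevalley), the universal property of modal context extension reduces the construction of an element of $\pi_1\prn{\Interp{\Gamma}}$ at $\Gamma$ to a pair of (a) the restriction of $\Atoms{\Gamma_0}$ along the weakening renaming $\WkRen : \ECx{\Gamma_0}{A}<\mu> \to \Gamma_0$ --- legitimate precisely because weakening belongs to the class of renamings, and lying over $\Wk$ since $\Atoms{\Gamma_0}$ lies over $\ISb$ --- together with (b) a $\mu$-locked computable term of $\Interp{A}$ reindexed along $\WkRen$. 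For (b) I would reflect the generic variable: the variable rule supplies, in the locked context $\LockCx{\ECx{\Gamma_0}{A}<\mu>}<\mu>$, the neutral $\NeVar{0}{\ArrId{}}$ --- an element of $\Vars{n}$ of the computable type obtained by reindexing $\Interp{A}$ along $\LockRen{\WkRen}$ --- which we coerce silently into $\Ne{n}$ and feed to the reflection operation $\Reflect{}$ of that computable type. Assembling (a) and (b) through the universal property produces $\Atoms{\Gamma}$, and reading off the value at the object $\Gamma \in \Ren{m}$ gives the required element of $\pi_1\prn{\Interp{\Gamma}}_\Gamma$.

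It remains to verify that $\Atoms{\Gamma}$ lies over $\ISb$. By the $\eta$-law for context extension the identity substitution on $\ECx{\Gamma_0}{A}<\mu>$ decomposes as the pair $\prn{\Wk, \Var{0}}$, so it suffices that components (a) and (b) lie over $\Wk$ and $\Var{0}$ respectively; the former was noted above, and for the latter one uses that under $\Syn$ both the coercion $\Vars{n} \hookrightarrow \Ne{n}$ and the reflection operation of any computable type are the identity (Extension~\ref{ext:model:nf-ne} together with the boundary conditions built into $\Ty*{n}$), so that (b) restricts under $\Syn$ to the decoding of $\NeVar{0}{\ArrId{}}$, namely $\Var{0}$. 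The main obstacle is bookkeeping rather than conceptual difficulty: one must keep the modal annotations straight so that the generic variable is reflected at mode $n$ under the correct $\mu$-lock, check that every reindexing used (along $\WkRen$ and along $\LockRen{\WkRen}$) is genuinely along a renaming --- so that $\Nf{m}$, $\Ne{m}$, $\Vars{m}$, and the computable-type structure remain stable --- and confirm that the universal property of modal context extension in $\InterpGl$ agrees with the one in $\InterpSyn$ along $\pi_0$, which holds because $\pi_0$ is a morphism of \MTT{} cosmoi (Theorem~\ref{thm:model:mtt-cosmos}). This step is the multimodal analogue of the standard ``reflection of a context'' construction in normalization-by-evaluation.
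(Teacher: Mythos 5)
Your base case and your context-extension case both match the paper's proof: $\Atoms{\EmpCx}$ is the unique element of the terminal object, and $\Atoms{\ECx{\Delta}{A}}$ is obtained by reindexing $\Atoms{\Delta}$ along weakening and pairing it with $\Reflect{A}$ applied to the generic variable $\Var{0}$ viewed as a neutral under the appropriate lock. Your extra bookkeeping about why this lies over $\ISb = \ESb{\Wk}{\Var{0}}$ (the coercion $\Vars{n} \to \Ne{n}$ and $\Reflect{}$ both restricting to the identity under $\Syn$) is correct and is left implicit in the paper.

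However, your induction is incomplete: \MTT{} contexts are generated by \emph{three} formers, $\EmpCx$, $\ECx{\Gamma}{A}$, and $\LockCx{\Gamma}$, and you have omitted the case $\Gamma = \LockCx{\Delta}$. This case cannot be absorbed into the other two---$\LockCx{\ECx{\EmpCx}{A}}$ is a well-formed context of neither shape---and it needs a genuinely different idea. Recall from the proof of Theorem~\ref{thm:cosmoi:quasi-initiality} that $\Interp{\LockCx{\Delta}}$ is, up to the canonical isomorphisms, $\InterpGl{\mu}_!\prn{\Interp{\Delta}}$, the \emph{left adjoint} applied to $\Interp{\Delta}$, while $\Yo{\LockCx{\Delta}} \cong \LKan{\Interp{\mu}}\prn{\Yo{\Delta}}$. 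The paper accordingly sets $\Atoms{\LockCx{\Delta}} = \InterpGl{\mu}_!\prn{\Atoms{\Delta}}$, transporting the inductively given element of $\pi_1\prn{\Interp{\Delta}}_\Delta$ across the left adjoint to obtain an element of $\pi_1\prn{\Interp{\LockCx{\Delta}}}_{\LockCx{\Delta}}$ still lying over the identity. Without this case the lemma is not established for any context containing a lock, and the normalization function of Theorem~\ref{thm:normalization:normalization} would then be undefined on essentially all modal judgments, so you should add it.
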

\begin{proof}
  This proof proceeds by induction on $\Gamma$.
  \begin{description}
  \item[Case] $\Gamma = \EmpCx$\\
    Here $\Interp{\Gamma}$ is terminal, so $\Atoms{\EmpCx}$ is its unique element. The requirement
    that $\Atoms{\EmpCx}$ lie over $\ArrId{}$ is then tautological.
  \item[Case] $\Gamma = \ECx{\Delta}{A}$\\
    In this case, we note that
    $\Interp{\Gamma} =
    \Interp{\Delta} \times_{\InterpGl{\mu}\prn{\Ty*{n}}} \InterpGl{\mu}\prn{\Tm*{n}}$
    and, since pullback are computed pointwise, it suffices to
    construct element of $\pi_1\prn{\Interp{\Delta}_\Gamma}$ and
    $\pi_1\prn{\InterpGl{\mu}\prn{\Tm*{n}}_{\Gamma}}$ separately which agree on
    $\pi_1\prn{\InterpGl{\mu}\prn{\Ty*{n}}_{\Gamma}}$.

    First, we reindex $\Atoms{\Delta}$ by $\IsSb{\Wk}{\Delta}$ to
    obtain $\delta \in \pi_1\prn{\Interp{\Delta}_\Gamma}$. Next, using the element
    $\Var{0} \in \InterpGl{\mu}\prn{\Ne{n}\prn{A}}_{\Gamma}$. It is easily seen that these agree
    on $\pi_1\prn{\InterpGl{\mu}{\prn{\Ty*{n}}_{\Gamma}}}$. The check that this lies over $\ArrId{}$
    follows from the fact that (1) $\delta$ lies over $\Wk$, (2) $\Reflect{A} \Var{0}$ lies over
    $\Var{0}$ and (3) that $\ESb{\Wk}{\Var{0}} = \ArrId{}$.
  \item[Case] $\Gamma = \LockCx{\Delta}$\\
    We define $\Atoms{\Gamma} = \InterpGl{\mu}_!\prn{\Atoms{\Delta}}$.
    The check that this lies over $\ArrId{}$ amounts to the equation in syntax that
    $\LockCx{\ISb{}} = \ISb$.
    \qedhere
  \end{description}
\end{proof}

\begin{rem}
  $\Atoms{\Gamma}$ is analogous to the initial environment used in classical NbE proofs to kick off
  normalization. Abel~\cite{abel:2013}, for instance, denotes the environment $\uparrow^{\Gamma}$.
\end{rem}

Combining Lemma~\ref{lem:normalization:atoms} with the argument above, we conclude that for term
$\IsTm{M}{A}$, there exists $\IsNf[\Gamma]{u}{A}$ such that $\EmbNf{u} = M$. Moreover, because we
have consistently worked with equivalences class of terms, this function automatically respects
definitional equality. Summarizing:

\begin{thm}
  \label{thm:normalization:normalization}
  There is a function $\Normalize{-}{A}$ sending terms of type $\IsTy{A}$ to normal forms such that
  \begin{enumerate}
  \item If $\IsTm{M}{A}$ then $\EqTm{\EmbNf{\Normalize{M}{A}}}{M}{A}$.
  \item If $\EqTm{M}{N}{A}$ then $\Normalize{M}{A} = \Normalize{N}{A}$.
  \end{enumerate}
\end{thm}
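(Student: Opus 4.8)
The plan is to bolt together the ingredients already established. Fix $\IsTm{M}{A}$ at mode $m$, with $\Gamma$ the ambient context. By Theorem~\ref{thm:cosmoi:quasi-initiality}, applied to the morphism of cosmoi $\Mor[\pi_0]{\InterpGl}{\InterpSyn}$ of Theorem~\ref{thm:model:mtt-cosmos}, we obtain an object $\Interp{\Gamma} : \InterpGl{m}$, an isomorphism $\alpha : \pi_0\prn{\Interp{\Gamma}} \cong \Yo{\Gamma}$, and a morphism $\Mor[\Interp{M}]{\Interp{\Gamma}}{\Tm*{m}}$ into the total space of computable terms with $\pi_0\prn{\Interp{M}} = \YoEm{M} \circ \alpha$. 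I define $\Normalize{M}{A}$ by transporting the element $\Atoms{\Gamma} : \pi_1\prn{\Interp{\Gamma}}_\Gamma$ of Lemma~\ref{lem:normalization:atoms} along the top of the square displayed above; concretely, $\Normalize{M}{A}$ is $\pi_1\prn{\Reify{}}_\Gamma$ applied to $\pi_1\prn{\Interp{M}}_\Gamma\prn{\Atoms{\Gamma}} : \pi_1\prn{\Tm*{m}}_\Gamma$, yielding an element of $\pi_1\prn{\Nf{m}}_\Gamma$ which, by the boundary conditions of Extensions~\ref{ext:model:syntactic-model} and~\ref{ext:model:nf-ne}, unwinds to a normal type decoding to $A$ together with a normal form at it.

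For soundness (1) I would chase the displayed square. Both $\Interp{M}$ and $\Reify{}$ are morphisms of $\GL{\InvRen}$, hence carry compatibility squares relating their $\pi_0$- and $\pi_1$-components to the structure maps of $\Interp{\Gamma}$, $\Tm*{m}$, and $\Nf{m}$. The key point is that $\Nf{m}$ restricts under $\Syn$ to the syntactic terms (Extension~\ref{ext:model:nf-ne}), so $\pi_0\prn{\Nf{m}} = \EL{m}$ and the structure map $\Mor{\pi_1\prn{\Nf{m}}}{\InvRen{\EL{m}}}$ is precisely decoding; likewise $\pi_0\prn{\Reify{}} = \ArrId{}$ by Lemma~\ref{lem:normalization:reify}. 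Stringing these equalities together with $\pi_0\prn{\Interp{M}} = \YoEm{M} \circ \alpha$ shows that decoding the top composite coincides with $\InvRen{\YoEm{M}} \circ \prn{\InvRen{\alpha} \circ \gamma}$, where $\gamma$ is the structure map of $\Interp{\Gamma}$. Evaluating at $\Atoms{\Gamma}$ and using that $\Atoms{\Gamma}$ lies over $\ArrId{} \in \InvRen{\Yo{\Gamma}}_\Gamma$ (Lemma~\ref{lem:normalization:atoms}), the right-hand side collapses to $\InvRen{\YoEm{M}}_\Gamma\prn{\ArrId{}} = M$, so $\EqTm{\EmbNf{\Normalize{M}{A}}}{M}{A}$, which is exactly (1).

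For completeness (2) there is nothing further to do. The assignment $\Interp{-}$ furnished by Theorem~\ref{thm:cosmoi:quasi-initiality} is defined on the syntactic model, whose terms are equivalence classes modulo definitional equality; hence $\IsTm{M}{A}$ and $\IsTm{N}{A}$ denote the same datum whenever $\EqTm{M}{N}{A}$, so $\Interp{M} = \Interp{N}$ and therefore $\Normalize{M}{A} = \Normalize{N}{A}$. As advertised in Section~\ref{sec:introduction}, completeness is automatic precisely because we have worked throughout with terms up to conversion.

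I do not anticipate a substantive obstacle here: the mathematical content has been discharged by Theorems~\ref{thm:cosmoi:quasi-initiality} and~\ref{thm:model:mtt-cosmos} and Lemmas~\ref{lem:normalization:reify} and~\ref{lem:normalization:atoms}. The one point demanding care is bookkeeping---confirming that the structure map of $\Nf{m}$ as an object of the gluing category is literally the decoding function, so that the square-chase lands on $M$ itself rather than on a merely isomorphic term, and keeping the various silent coercions (the inclusion $\Vars{m} \to \Ne{m}$, the extension-type coercions, and the identification of $\pi_0$ with the open modality) consistent throughout the chase.
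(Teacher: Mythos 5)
Your proposal is correct and follows the paper's own argument essentially verbatim: apply Theorem~\ref{thm:cosmoi:quasi-initiality} to $\Mor[\pi_0]{\InterpGl}{\InterpSyn}$, push $\Atoms{\Gamma}$ from Lemma~\ref{lem:normalization:atoms} along $\pi_1\prn{\Reify{}} \circ \pi_1\prn{\Interp{M}}$, read off soundness from the commuting square and the boundary conditions, and observe that completeness is automatic because $\Interp{-}$ is defined on equivalence classes of terms. The bookkeeping points you flag (the structure map of $\Nf{m}$ being literally decoding, $\Reify{}$ restricting to the identity under $\Syn$) are exactly the ones the paper relies on, so there is nothing further to add.
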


We can repeat this process to normalize types instead of terms. Given $\IsTy{A}$, we obtain
$\Mor[\Interp{A}]{\Interp{\Gamma}}{\Ty*{m}}$ which unfolds to an analogous diagram with only a small
change: rather than using $\Reflect{}$ to pass from $\pi_1\prn{\Tm*{m}}$ to normal forms, we use
$\Code{}$ to shift from $\Ty*{m}$ to normal types:
\begin{equation*}
  \begin{tikzpicture}[diagram]
    \SpliceDiagramSquare{
      height = 1.6cm,
      width = 3.75cm,
      nw = \pi_1\prn{\Interp{\Gamma}},
      sw = \InvRen{\Yo{\Gamma}},
      ne = \pi_1\prn{\Ty*{m}},
      se = \InvRen{\TY{m}},
      south = {\InvRen{\YoEm{A}}},
      west = {\alpha \circ \Interp{\Gamma}},
    }
    \node [right = 2.5cm of ne] (nf) {$\pi_1\prn{\NfTy{m}}$};
    \path[->] (ne) edge (nf);
    \path[->] (nf) edge (se);
  \end{tikzpicture}
\end{equation*}

By again pushing $\Atoms{\Gamma}$ along the top of this diagram, we obtain a normalization function
for types.
\begin{thm}
  \label{thm:normalization:ty-normalization}
  There is a function $\NormalizeTy{-}$ sending types to normal types such that
  \begin{enumerate}
  \item If $\IsTy{A}$ then $\EqTy{\EmbNfTy{\NormalizeTy{A}}}{A}$.
  \item If $\EqTy{A}{B}$ then $\NormalizeTy{A} = \NormalizeTy{B}$.
  \end{enumerate}
\end{thm}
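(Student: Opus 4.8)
The plan is to replay the proof of Theorem~\ref{thm:normalization:normalization} one level up, at the level of types. First I would apply Theorem~\ref{thm:cosmoi:quasi-initiality} to the morphism of \MTT{} cosmoi $\Mor[\pi_0]{\InterpGl}{\InterpSyn}$ furnished by Theorem~\ref{thm:model:mtt-cosmos}: for $\IsCx{\Gamma}$ this produces $\Interp{\Gamma} : \InterpGl{m}$ together with $\alpha_\Gamma : \pi_0\prn{\Interp{\Gamma}} \cong \Yo{\Gamma}$, and for $\IsTy{A}$ a morphism $\Mor[\Interp{A}]{\Interp{\Gamma}}{\Ty*{m}}$ satisfying $\pi_0\prn{\Interp{A}} \circ \alpha_\Gamma = \YoEm{A}$.

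Next I would isolate the type-level analogue of Lemma~\ref{lem:normalization:reify}: the morphism $\Mor[\Code{}]{\Ty*{m}}{\NfTy{m}}$ which, working internally, sends $A$ to $A.\Code$. That this restricts to $\ArrId{}$ under $\Syn$ is exactly the boundary condition $z : \Syn \vdash A = A.\Code$ obtained when $\Ty*{m}$ was built by realignment in Construction~\ref{cons:model:ty}. Unfolding $\pi_1\prn{\Interp{A}}$ and postcomposing with $\pi_1$ of this map yields precisely the commuting square displayed before the statement, whose lower edge is $\InvRen{\YoEm{A}}$ and whose right leg factors through the presheaf $\pi_1\prn{\NfTy{m}}$ of normal types.

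Finally I would invoke Lemma~\ref{lem:normalization:atoms} to obtain $\Atoms{\Gamma} : \pi_1\prn{\Interp{\Gamma}}_\Gamma$ lying over $\ArrId{} : \InvRen{\Yo{\Gamma}}_\Gamma$, and define $\NormalizeTy{A}$ to be the image of $\Atoms{\Gamma}$ along the top of the square, i.e. the normal type $\IsNfTy{\tau}$ named by $\Code{}\prn{\Interp{A}\prn{\Atoms{\Gamma}}} \in \pi_1\prn{\NfTy{m}}_\Gamma$. Chasing $\Atoms{\Gamma}$ around the square, using $\alpha_\Gamma\prn{\Interp{\Gamma}\prn{\Atoms{\Gamma}}} = \ArrId{}$ and then the Yoneda lemma, gives $\EqTy{\EmbNfTy{\NormalizeTy{A}}}{A}$, which is clause~(1). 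Clause~(2) needs no separate argument: $\Interp{A}$ depends only on the definitional-equality class of $A$, so $\EqTy{A}{B}$ immediately forces $\NormalizeTy{A} = \NormalizeTy{B}$, exactly the automatic completeness already enjoyed by Theorem~\ref{thm:normalization:normalization}. Since the whole construction is a verbatim transcription of the term case, I do not expect a genuine obstacle; the only point demanding care is checking that the right leg of the square really lands in $\pi_1\prn{\NfTy{m}}$ and decodes back to $A$, i.e. that the internal $\Code$ field agrees with the external presheaf of normal types and its decoding $\EmbNfTy{-}$, which mirrors the bookkeeping already discharged for $\Reify{}$ and $\EmbNf{-}$ in Theorem~\ref{thm:normalization:normalization}.
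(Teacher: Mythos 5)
Your proposal is correct and follows the paper's own argument essentially verbatim: the paper likewise obtains $\Interp{A}$ from Theorems~\ref{thm:cosmoi:quasi-initiality} and \ref{thm:model:mtt-cosmos}, replaces the term-level $\Reify{}$ with the $\Code{}$ projection $\Ty*{m} \to \NfTy{m}$ (whose restriction to $\ArrId{}$ under $\Syn$ is exactly the realignment boundary condition you cite), and pushes $\Atoms{\Gamma}$ along the top of the resulting square, with completeness automatic from working with equivalence classes. No gaps.
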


\subsection{Corollaries of normalization}
\label{sec:normalization:properties}

A number of important theorems follow as corollaries of
Theorems~\ref{thm:normalization:normalization} and \ref{thm:normalization:ty-normalization}. For
instance, we can reduce the decidability of conversion to the decidability of normal forms.
\begin{cor}
  \label{cor:normalization:dec}
  \leavevmode
  \begin{enumerate}
  \item $\EqTm{M}{N}{A}$ iff $\Normalize{M}{A} = \Normalize{N}{A}$.
  \item $\EqTy{A}{B}$ iff $\NormalizeTy{A} = \NormalizeTy{B}$.
  \end{enumerate}
\end{cor}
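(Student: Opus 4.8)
The plan is to read both equivalences straight off Theorems~\ref{thm:normalization:normalization} and \ref{thm:normalization:ty-normalization}, which have been arranged to package exactly the soundness and completeness clauses needed here; no genuinely new argument is required. Fix $\IsTm{M}{A}$ and $\IsTm{N}{A}$. For the first equivalence, the left-to-right direction is immediate from the completeness clause Theorem~\ref{thm:normalization:normalization}(2): if $\EqTm{M}{N}{A}$ then $\Normalize{M}{A}$ and $\Normalize{N}{A}$ are literally the same normal form. For the right-to-left direction, assume $\Normalize{M}{A} = \Normalize{N}{A}$. Since this is an identity of raw normal forms, applying the decoding $\EmbNf{-}$ to it yields $\EmbNf{\Normalize{M}{A}} = \EmbNf{\Normalize{N}{A}}$ (a function trivially respects equality of its inputs, with no coherence hypothesis needed). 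Chaining this with the soundness clause Theorem~\ref{thm:normalization:normalization}(1), instantiated at $M$ and at $N$, gives the composite $\EqTm{M}{\EmbNf{\Normalize{M}{A}}}{A}$, then $\EmbNf{\Normalize{M}{A}} = \EmbNf{\Normalize{N}{A}}$, then $\EqTm{\EmbNf{\Normalize{N}{A}}}{N}{A}$, whose concatenation is $\EqTm{M}{N}{A}$. The second equivalence is proved by the identical argument, using Theorem~\ref{thm:normalization:ty-normalization} in place of Theorem~\ref{thm:normalization:normalization} and $\EmbNfTy{-}$ in place of $\EmbNf{-}$.

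To turn this into the promised \emph{decidability} of conversion, I would then invoke the observation from Section~\ref{sec:primer:normals-and-neutrals} that normal and neutral forms are freely generated, so that deciding $\Normalize{M}{A} = \Normalize{N}{A}$ reduces to structural comparison together with deciding equality of the modalities and $2$-cells that occur (e.g.\ in $\Key{\alpha}{}$-annotated variables and in $\Modify{-}$). Hence conversion of terms, and likewise of types, is decidable whenever conversion in the mode theory $\Mode$ is decidable; conversely, since $\Mode$ embeds into the syntax (a $2$-cell $\alpha$ is recovered from the normal form of a suitable $\Key{\alpha}{}$-annotated variable), a conversion oracle for \MTT{} furnishes a decision procedure for $\Mode$, giving the ``if and only if''.

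I do not expect a real obstacle: every ingredient — soundness, completeness, the syntactic nature of normal forms — is already in hand. The only points requiring a little care are purely bookkeeping: keeping the ambient typing premises $\IsTm{M}{A}$, $\IsTm{N}{A}$ (resp. $\IsTy{A}$, $\IsTy{B}$) explicit when citing the normalization theorems, and being clear that the right-hand side ``$\Normalize{M}{A} = \Normalize{N}{A}$'' denotes honest syntactic identity of normal forms (so that it is transported by $\EmbNf{-}$ and, in the decidability argument, is actually decidable) rather than some coarser equivalence.
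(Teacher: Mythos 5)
Your proof of the stated equivalences is exactly the paper's: the forward direction is Theorem~\ref{thm:normalization:normalization}(2), and the reverse direction applies $\EmbNf{-}$ to the identity of normal forms and chains the two instances of the soundness clause Theorem~\ref{thm:normalization:normalization}(1). The additional remarks on reducing decidability to the mode theory match the discussion the paper gives immediately after the corollary, so there is nothing to correct.
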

\begin{proof}
  We show only the proof for this first claim. The `only if' direction is established by the second
  point of Theorem~\ref{thm:normalization:normalization}. Suppose instead
  $\Normalize{M}{A} = \Normalize{N}{A}$, so
  $\EmbNf{\Normalize{M}{A}} = \EmbNf{\Normalize{N}{A}}$. By the first point of
  Theorem~\ref{thm:normalization:normalization}, $\EmbNf{\Normalize{M}{A}} = M$ and
  $\EmbNf{\Normalize{M}{A}} = N$, so the conclusion follows.
\end{proof}

A priori, however, a given term could have multiple normal forms which complicates further
analysis. We therefore strengthen Theorem~\ref{thm:normalization:normalization} with the following:
\begin{thm}[Tightness]
  \label{thm:normalization:idempotent}
  \leavevmode
  \begin{enumerate}
  \item If $\IsNf[\Gamma]{u}{A}$, then $\Normalize{\EmbNf{u}}{A} = u$.
  \item If $\IsNfTy[\Gamma]{\tau}$, then $\NormalizeTy{\EmbNfTy{\tau}} = \tau$.
  \end{enumerate}
\end{thm}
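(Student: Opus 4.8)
The plan is to prove both clauses at once, by induction on the derivations of $\IsNf[\Gamma]{u}{A}$, $\IsNe[\Gamma]{e}{A}$, and $\IsNfTy[\Gamma]{\tau}$ --- a legitimate principle since, as stressed in Section~\ref{sec:primer:normals-and-neutrals}, normals, neutrals, and normal types are freely generated. Unwinding Section~\ref{sec:normalization}, for $\IsTm[\Gamma]{M}{A}$ the normal form $\Normalize{M}{A}$ is the image of the atom $\Atoms{\Gamma}$ of Lemma~\ref{lem:normalization:atoms} under $\pi_1\prn{\Interp{M}}$ followed by $\pi_1\prn{\Reify{}}$; internally at stage $\Gamma$ this is exactly $\Reify{\Interp{A}\prn{\Atoms{\Gamma}}}$ of the computability witness carried by $\Interp{M}\prn{\Atoms{\Gamma}}$, while $\NormalizeTy{A}$ is $\Interp{A}\prn{\Atoms{\Gamma}}.\Code$. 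The core of the argument is then the following strengthened \emph{mutual} statement. For a neutral $\IsNe[\Gamma]{e}{A}$, the witness carried by $\Interp{\DecNe{e}}\prn{\Atoms{\Gamma}}$ is $\Reflect{\Interp{A}\prn{\Atoms{\Gamma}}}$ of the internalized counterpart of $e$; for a normal $\IsNf[\Gamma]{u}{A}$, applying $\Reify{\Interp{A}\prn{\Atoms{\Gamma}}}$ to the witness carried by $\Interp{\EmbNf{u}}\prn{\Atoms{\Gamma}}$ returns $u$; and for a normal type $\tau$, $\Interp{\EmbNfTy{\tau}}\prn{\Atoms{\Gamma}}.\Code = \tau$. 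Granting this, pushing $\Atoms{\Gamma}$ along the normalization diagrams of Section~\ref{sec:normalization} yields precisely the element of $\pi_1\prn{\Nf{m}}$ (resp.\ $\pi_1\prn{\NfTy{m}}$) named by $u$ (resp.\ $\tau$), establishing Theorem~\ref{thm:normalization:idempotent}.

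The induction splits in the familiar way. In a neutral case one applies the definition of $\Reflect{}$ at the relevant connective --- \eg{} $\Reflect{\PiConst*\prn{A,B}}\prn{e} = \beta^{-1}\prn{\lambda a.\ \CApp\prn{e, \Reify{A} a}}$ --- together with the hypotheses for the immediate subterms and the mutual hypothesis for any motive normal type; these collapse using the cancellation of the realignment isomorphisms, the $\beta$-laws for the connectives of the \MTT{} cosmos, and the fact that the internalized eliminators $\CApp$, $\CIf$, $\CIdRec$, $\CLetMod$ of Figure~\ref{fig:model:neutral-and-normal} are honest higher-order-abstract-syntax constructors whose application to the reified subterms reproduces exactly the corresponding syntactic neutral. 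Dually, a normal case for a canonical form ($\NfLam{u}$, $\NfMkBox{u}$, $\NfTrue$, $\NfFalse$, $\NfRefl{u}$, $\NfModifyCode{u}$, and the code formers) is handled by unfolding the relevant $\ReifyFld$, which emits that very constructor applied to the values supplied by the sub-hypotheses --- where, for the binders occurring in $\NfLam{u}$ and in motives, the sub-hypothesis is applied in the appropriate extended context, whose atom Lemma~\ref{lem:normalization:atoms} builds structurally from $\Atoms{\Gamma}$, so the two line up. An injected neutral $\NfInj{e}$ reduces to the neutral clause; \eg{} at a modal type $\Reify{\ModConst*\prn{A}}\prn{\Reflect{\ModConst*\prn{A}}\prn{e}} = \alpha\prn{\alpha^{-1}\gls{\CNfInj\prn{e},\dots}}.\TmFld = \CNfInj\prn{e}$. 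The normal-type clauses are immediate from the definitions of $A.\Code$ for each former, with $\NfModify{\tau}$ additionally invoking $\Atoms{\LockCx{\Gamma}} = \InterpGl{\mu}_!\prn{\Atoms{\Gamma}}$ and the transposition built into the interpretation of $\Modify{-}$.

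The main obstacle I expect is the genuinely modal material: the neutral $\NeLetMod{\mu}{\nu}{\tau}{e}{u}$ together with the connectives whose computability predicate passes through the closed modality (booleans, identity types, modal types). There $\Reflect{}$ and $\Reify{}$ thread through a realignment and a pushout, so one must check that reflecting a neutral lands in the $\In{1}$ (``neutral'') summand under $\Closed$, and --- for this neutral --- that $\ModElimConst*$ applied to that reflected neutral decomposes, via the isomorphisms asserting that \MTT{} modalities commute with $\Closed$ and with finite coproducts (Extension~\ref{ext:model:modalities-cocont} and the lemmas of Section~\ref{sec:mstc:mstc}), into $\Reflect{}$ of $\CLetMod\prn{\dots}$ whose motive and branch, reassembled from the mutual hypotheses, are exactly those of $\NeLetMod{\mu}{\nu}{\tau}{e}{u}$. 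Working out once and for all how $\Atoms{-}$ interacts with the weakening, lock, and key renamings --- which is forced by the structural clauses of Lemma~\ref{lem:normalization:atoms} --- is the only delicate bookkeeping; with that in place each case is a routine computation, and, as throughout this development, compatibility with definitional equality is automatic since everything is phrased over equivalence classes of syntax.
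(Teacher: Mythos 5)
Your proposal follows essentially the same route as the paper: the strengthened mutual induction over neutrals, normals, and normal types (with $\Reflect{}$ characterizing interpreted neutrals, $\Reify{}$ recovering normals, and $\Code$ recovering normal types at the atomic environment $\Atoms{\Gamma}$) is precisely the paper's strengthening, and your case analysis simply fleshes out what the paper dismisses as following ``straightforwardly from mutual induction and the relevant definitions.'' No substantive difference; if anything, your explicit treatment of the $\Closed$-guarded cases and the $\NeLetMod{\mu}{\nu}{\tau}{e}{u}$ neutral supplies detail the paper omits.
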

\begin{proof}
  Recall that Theorems \ref{thm:cosmoi:quasi-initiality} and \ref{thm:model:mtt-cosmos} induce a
  function $\Interp{-}$ sending a piece of syntax to its interpretation in the normalization
  model. Furthermore, recall the $\Gamma$-element $\Atoms{\Gamma} : \Interp{\Gamma}$
  constructed in Lemma~\ref{lem:normalization:atoms}.

  We begin by strengthening the statement to make it more amenable to induction:
  \begin{enumerate}
  \item If $\IsNe{e}{A}$, then
    $\Interp{\EmbNe{M}}(\Atoms{\Gamma}) = \Reflect{\Interp{A}(\Atoms{\Gamma})} e$
  \item If $\IsNf{u}{A}$, then
    $\Reify{\Interp{A}(\Atoms{\Gamma})} \Interp{\EmbNf{u}}(\Atoms{\Gamma}) = u$.
  \item If $\IsNfTy{\tau}$, then ${\Interp{\EmbNfTy{A}}.\Code(\Atoms{\Gamma})} = \tau$.
  \end{enumerate}
  Here we have identified a code $u$ (resp. $e$) as an $\Gamma$-element of $\Nf{A}$
  (resp. $\Ne{A}$). All three follow straightforwardly from mutual induction and the relevant
  definitions. For instance, if we consider $\IsNfTy{\NfFn{\tau}{\sigma}}$, we calculate as
  follows:
  \begin{align*}
    &\Interp{\EmbNfTy{\NfFn{\tau}{\sigma}}}.\Code(\Atoms{\Gamma})
    \\
    &= {\Interp{\Fn{\EmbNfTy{\tau}}{\EmbNfTy{\sigma}}}.\Code(\Atoms{\Gamma})}
    \\
    &=
    \NfFn{
      {\Interp{\EmbNfTy{\tau}}}.\Code\prn{\Atoms{\Gamma}}
    }{
      {\Interp{\EmbNfTy{\sigma}}}.\Code(\Wk^*\Atoms{\Gamma}, \Reflect{} \NeVar{0}{})
    }
    \\
    &=
    \NfFn{
      {\Interp{\EmbNfTy{\tau}}}.\Code\prn{\Atoms{\Gamma}}
    }{
      {\Interp{\EmbNfTy{\sigma}}}.\Code(\Atoms{\ECx{\Gamma}{A}})
    }
    \\
    &= \NfFn{\tau}{\sigma}
  \end{align*}
  In order to carry out this calculation, we took advantage of not only the definition of dependent
  products in the gluing model, but also the interpretation of HOAS and $\Atoms{}$.
\end{proof}

\begin{cor}
  \label{cor:normalization:normalization-is-iso}
  Normalization is an isomorphism between equivalence classes of terms (resp. types) and normal
  forms (resp. normal types).
\end{cor}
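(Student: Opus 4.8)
The plan is to assemble the claimed bijection directly from the three preceding results, with no new constructions required. Fix $\IsCx{\Gamma}$ and a type $\IsTy{A}$. In one direction, normalization $\Normalize{-}{A}$ sends a term $\IsTm{M}{A}$ to a normal form; by the second clause of Theorem~\ref{thm:normalization:normalization} it identifies definitionally equal terms, hence descends to a function from the set of $=$-equivalence classes of terms of type $A$ to the set of normal forms of type $A$. In the other direction, decoding $\EmbNf{-}$ sends a normal form $\IsNf[\Gamma]{u}{A}$ to the equivalence class of $\EmbNf{u}$; this is well defined with no further work since its codomain is already a set of equivalence classes.

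Next I would verify the two round-trips. Starting from a term, the first clause of Theorem~\ref{thm:normalization:normalization} gives $\EqTm{\EmbNf{\Normalize{M}{A}}}{M}{A}$, so $\EmbNf{-} \circ \Normalize{-}{A}$ is the identity on equivalence classes of terms. Starting from a normal form, the first clause of Theorem~\ref{thm:normalization:idempotent} gives $\Normalize{\EmbNf{u}}{A} = u$, so $\Normalize{-}{A} \circ \EmbNf{-}$ is the identity on normal forms. Thus $\prn{\Normalize{-}{A},\EmbNf{-}}$ is an isomorphism; it is moreover natural in $\Gamma$, as all three constituent theorems are stated uniformly in the context and the underlying operations commute with the relevant reindexing. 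The identical argument with Theorem~\ref{thm:normalization:ty-normalization} and the second clause of Theorem~\ref{thm:normalization:idempotent} replacing their term-level counterparts yields the isomorphism between equivalence classes of types $\IsTy{A}$ and normal types $\IsNfTy{\tau}$.

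I do not expect any genuine obstacle here: the content of the corollary is exhausted by soundness (Theorem~\ref{thm:normalization:normalization}(1)), completeness/well-definedness (Theorem~\ref{thm:normalization:normalization}(2)), and tightness (Theorem~\ref{thm:normalization:idempotent}). The only point meriting a word of care is confirming that $\EmbNf{-}$ and $\EmbNfTy{-}$ are inverse to the \emph{descended} normalization maps on equivalence classes rather than to the raw functions on terms—which is precisely what the two displayed equalities above record—so that the statement is exactly the reformulation of normalization promised in Section~\ref{sec:introduction:normalization}, with completeness now automatic.
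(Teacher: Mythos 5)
Your proof is correct and matches the paper's: the paper likewise assembles the isomorphism from the preceding results, citing Corollary~\ref{cor:normalization:dec} for injectivity and Theorem~\ref{thm:normalization:idempotent} for the section, which is the same content as your two round-trip verifications. No gaps.
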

\begin{proof}
  Corollary~\ref{cor:normalization:dec} already shows that normalization is injective and
  Theorem~\ref{thm:normalization:idempotent} provides a section.
\end{proof}

These results imply the injectivity of type constructors, an essential property for implementation.
\begin{cor}
  \label{cor:normalization:pi-inj}
  If $\EqTy{A_0 \to B_0}{A_1 \to B_1}$ then $\EqTy{A_0}{A_1}$ and
  $\EqTy[\ECx{\Gamma}{A_0}<\ArrId{}>]{B_0}{B_1}$.
\end{cor}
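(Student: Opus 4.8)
The plan is to read this off from normalization for types (Theorem~\ref{thm:normalization:ty-normalization}) together with the fact that normal types form a freely generated syntactic class, so that the constructor $\NfFn{-}{-}$ is injective on them. The crucial intermediate fact is that normalization commutes with the (identity-modal) dependent product former, i.e.
\[
  \NormalizeTy{A \to B} = \NfFn{\NormalizeTy{A}}{\NormalizeTy{B}},
\]
where the second occurrence of $\NormalizeTy{-}$ is taken in the context $\ECx{\Gamma}{A}<\ArrId{}>$. To establish this I would unfold the construction of $\NormalizeTy{-}$: by the displayed model of Theorem~\ref{thm:cosmoi:quasi-initiality} (with type formers inherited from the \MTT{} cosmos of Theorem~\ref{thm:model:mtt-cosmos}), the interpretation $\Interp{-}$ strictly preserves the dependent product, so $\Interp{A\to B} = \PiConst*\prn{\Interp{A},\lambda a.\ \Interp{B}}$ up to the evident reindexing along context extension. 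By Lemma~\ref{lem:model:pi} the $\Code$ field of $\PiConst*\prn{A,B}$ is $\CPi\prn{A.\Code,\lambda v.\ B\prn{\Reify{A}v}.\Code}$; evaluating at $\Atoms{\Gamma}$ (Lemma~\ref{lem:normalization:atoms}) and feeding the neutral variable $\Var 0$ as the $\Atoms{\ECx{\Gamma}{A}<\ArrId{}>}$-component, the $\CPi$ node decodes to the normal type $\NfFn{-}{-}$, its domain to $\NormalizeTy{A}$, and its codomain to $\NormalizeTy{B}$. The one point needing genuine care is that $\lambda v.\ B\prn{\Reify{A}v}$ instantiated at $\Reflect{\Interp{A}(\Atoms{\Gamma})}\Var 0$ reduces to $\Interp{B}$ evaluated at the atoms of the extended context; this uses precisely the $\Reify{}/\Reflect{}$ roundtrip behaviour recorded in the strengthened induction of Theorem~\ref{thm:normalization:idempotent}, and is the only computational step.

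Granting the displayed equation, the corollary is formal. From $\EqTy{A_0 \to B_0}{A_1 \to B_1}$, Corollary~\ref{cor:normalization:dec} gives $\NormalizeTy{A_0\to B_0} = \NormalizeTy{A_1\to B_1}$, hence $\NfFn{\NormalizeTy{A_0}}{\NormalizeTy{B_0}} = \NfFn{\NormalizeTy{A_1}}{\NormalizeTy{B_1}}$. Since normal types are freely (inductive-recursively) generated, $\NfFn{-}{-}$ is injective, so $\NormalizeTy{A_0} = \NormalizeTy{A_1}$ and $\NormalizeTy{B_0} = \NormalizeTy{B_1}$. Applying Corollary~\ref{cor:normalization:dec} to the first equation yields $\EqTy{A_0}{A_1}$; with this definitional equality in hand the contexts $\ECx{\Gamma}{A_0}<\ArrId{}>$ and $\ECx{\Gamma}{A_1}<\ArrId{}>$ coincide, so the second equation of normal types lives in a single context and Corollary~\ref{cor:normalization:dec} again gives $\EqTy[\ECx{\Gamma}{A_0}<\ArrId{}>]{B_0}{B_1}$.

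The main obstacle is the first paragraph: carefully chasing $\Atoms{-}$ through the definition of $\PiConst*$ and verifying that the $\Reify{}/\Reflect{}$ composites appearing there are the identity on the relevant code-level data, so that normalization really does commute with $\to$ on the nose. Once that is settled everything else is a direct consequence of the freeness of normal forms and the iff-characterisation of conversion in Corollary~\ref{cor:normalization:dec}. The identical argument, changing only the bookkeeping of premises, applies to the remaining type constructors (dependent sums, identity types, modal types, codes in the universe), establishing injectivity of all type formers.
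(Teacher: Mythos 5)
Your proposal is correct, and its second half is exactly the paper's argument: reduce everything to the equation $\NormalizeTy{A_i \to B_i} = \NfFn{\tau_i}{\sigma_i}$, use the fact that normal types are freely generated so $\NfFn{-}{-}$ is injective, and close with Corollary~\ref{cor:normalization:dec}. The difference is in how that key equation is obtained. You propose to compute $\NormalizeTy{A \to B}$ directly by chasing $\Atoms{\Gamma}$ through the interpretation of the dependent product from Lemma~\ref{lem:model:pi} and verifying the $\Reify{}$/$\Reflect{}$ round trips---the step you yourself flag as the main obstacle. The paper avoids this model-level computation entirely: setting $\tau_i = \NormalizeTy{A_i}$ and $\sigma_i = \NormalizeTy[\ECx{\Gamma}{A_0}<\ArrId{}>]{B_i}$, the decoding function commutes with the constructor by definition of $\EmbNfTy{-}$, so soundness gives $\EmbNfTy{\NfFn{\tau_i}{\sigma_i}} = \EmbNfTy{\tau_i} \to \EmbNfTy{\sigma_i} = A_i \to B_i$, and then tightness (Theorem~\ref{thm:normalization:idempotent}, packaged as Corollary~\ref{cor:normalization:normalization-is-iso}) immediately yields $\NormalizeTy{A_i \to B_i} = \NfFn{\tau_i}{\sigma_i}$. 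Your first paragraph is in effect a re-derivation of the relevant instance of Theorem~\ref{thm:normalization:idempotent}---which you even invoke for the round-trip identities---so you may as well use that theorem as a black box; doing so removes the only delicate step and makes the argument for the remaining type formers (sums, identity types, modal types, codes) a purely formal repetition, which is the uniformity you note at the end. A last cosmetic point: the paper sidesteps your context-identification step by normalizing both $B_0$ and $B_1$ in the single context $\ECx{\Gamma}{A_0}<\ArrId{}>$ from the outset, though your order of deductions is equally valid.
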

\begin{proof}
  Set $\tau_i = \NormalizeTy{A_i}$ and $\sigma_i =
  \NormalizeTy[\ECx{\Gamma}{A_0}<\ArrId{}>]{B_i}$. Unfolding definitions shows that
  $\EmbNfTy{\NfFn{\tau_i}{\sigma_i}} = \EmbNfTy{\tau_i} \to \EmbNfTy{\sigma_i} = A_i \to B_i$.  By
  Corollary~\ref{cor:normalization:normalization-is-iso},
  $\NormalizeTy{A_i \to B_i} = \NfFn{\tau_i}{\sigma_i}$.

  Next, we recall that $\EqTy{A_0 \to B_0}{A_1 \to B_1}$ by assumption, so
  $\NfFn{\tau_0}{\sigma_0} = \NfFn{\tau_1}{\sigma_1}$. As an operation on normal forms, however,
  $\NfFn{-}{-}$ is clearly injective, so $\tau_0 = \tau_1$ and $\sigma_0 = \sigma_1$. The result
  now follows from Corollary~\ref{cor:normalization:dec}.
\end{proof}

In light of Corollary~\ref{cor:normalization:dec}, to decide the equality of terms and types, it suffices to
argue that one may decide the equality of neutral and normal forms along with normal types. For this
purpose, we adapt the bidirectional algorithm given by Altenkirch and
Kaposi~\cite{altenkirch:2017}. This argument goes through essentially without alteration, except
that since certain constructors are annotated with 1- and 2-cells from $\Mode$, we require a
decision procedure for objects in the mode theory. Note that this procedure uses \eg{},
Corollary \ref{cor:normalization:pi-inj}, which is why we have delayed its statement till now.
\begin{cor}
  \label{cor:normalization:type-checking}
  If $\Mode$ is decidable, type checking is decidable.
\end{cor}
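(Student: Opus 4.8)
The plan is to reduce type checking to the decidability of conversion established in Corollary~\ref{cor:normalization:dec} and then organize a syntax-directed checker over the fully annotated presyntax of \MTT{}. First I would record the easy half: since normals, neutrals, and normal types are freely generated modulo the data of $\Mode$ (Section~\ref{sec:primer:normals-and-neutrals}), a decision procedure for equality of modalities and $2$-cells immediately yields one for equality of normal forms and normal types by structural recursion. Feeding this into Theorems~\ref{thm:normalization:normalization} and~\ref{thm:normalization:ty-normalization} and Corollary~\ref{cor:normalization:dec} makes $\EqTm{M}{N}{A}$ and $\EqTy{A}{B}$ decidable --- normalize both sides and compare. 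The same reduction covers equality of explicit substitutions (componentwise equality of terms together with mode-theoretic data), and checking that a piece of raw mode data is legitimate (that $\mu$ really is a modality $n \to m$, that $\alpha$ really is a $2$-cell) is part of the hypothesis that $\Mode$ is decidable.

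Next I would set up the checker as a family of mutually recursive partial functions on raw expressions: one deciding $\IsCx{\Gamma}$, one deciding whether a raw substitution has a given domain and codomain, one deciding $\IsTy{A}$, and a synthesis function that, given $\Gamma$ and a raw term $M$, either fails or returns a type $A$. The crucial structural fact --- which I would extract from the GAT presentation of \MTT{} in~\cite{gratzer:mtt-journal:2021} --- is that the theory is syntax-directed: each raw operation is governed by a unique rule, every annotation a rule needs in order to fire already appears among the operation's arguments, and the type in the conclusion is a fixed expression in those arguments. Consequently each invocation of synthesis makes finitely many recursive calls on proper subexpressions, performs finitely many conversion checks (decidable by the previous paragraph) and finitely many checks on mode data (decidable by hypothesis), and then either succeeds with a determined type or fails; termination is structural. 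No inversion on a synthesized type is needed, since the relevant data is annotated --- though, should one prefer a leaner presyntax, injectivity of type constructors (itself a corollary of normalization) would license such inversion.

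I would then prove correctness by induction on raw expressions: synthesis returns $A$ on $(\Gamma, M)$ exactly when $\IsTm{M}{A}$ is derivable, and likewise for the other judgments. The forward direction replays the algorithm's recursion as a derivation, using that the conversion procedure reports success precisely when the corresponding up-to-definitional-equality premise holds. The backward direction inverts a derivation of $\IsTm{M}{A}$: it must end in the unique rule matching the head of $M$, whose premises are exactly the recursive obligations and convertibility facts the checker verifies; hence the checker succeeds with a type convertible to $A$, and since a well-typed term has a type unique up to conversion (a consequence of initiality of the syntactic model, Theorem~\ref{thm:cosmoi:initiality}, and the GAT structure) the precise statement follows. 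Checking $\IsTm{M}{A}$ for a supplied $A$ is then: verify $\IsTy{A}$, synthesize a type $A'$ for $M$, and test $\EqTy{A}{A'}$.

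The main obstacle is not mathematical but organizational: carefully fixing the presyntax and verifying that the \MTT{} GAT is genuinely syntax-directed in the sense above, so that every premise of every rule decomposes into a smaller checking obligation, a conversion test, or a mode-theory test. Once that bookkeeping is in place, Corollary~\ref{cor:normalization:dec} does all the real work.
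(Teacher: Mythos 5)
Your proposal is correct and follows exactly the route the paper intends: the paper gives no explicit proof of this corollary, relying only on the preceding remark that the fully annotated presentation of \MTT{}'s syntax is type-checkable precisely when conversion is decidable, which is what your syntax-directed checker plus Corollary~\ref{cor:normalization:dec} (and decidability of normal-form equality given a decidable $\Mode$) makes precise. You have simply filled in the standard bookkeeping the paper leaves implicit, and your observation that injectivity of type constructors is not needed for the annotated presentation matches the paper's framing.
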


Finally, Gratzer et al.~\cite{gratzer:2020} show canonicity for \MTT{} extended with the equality
$\LockCx{\EmpCx} = \EmpCx$. Normalization provides a (heavy-handed) proof of canonicity without this
equation by scrutinizing the definition of normal forms:
\begin{cor}
  \label{cor:normalization:canonicity}
  If $\IsTm[\LockCx{\EmpCx}]{M}{\Bool}$ then $M \in \brc{\True,\False}$.
\end{cor}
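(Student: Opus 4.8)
The plan is to derive this from normalization together with a direct inspection of the normal forms of boolean type. Applying Theorem~\ref{thm:normalization:normalization} with $\Gamma = \LockCx{\EmpCx}$ and $A = \Bool$ produces a normal form $\IsNf[\LockCx{\EmpCx}]{u}{\Bool}$ satisfying $\EqTm[\LockCx{\EmpCx}]{\EmbNf{u}}{M}{\Bool}$. So it suffices to enumerate the normal forms of boolean type in the context $\LockCx{\EmpCx}$. Consulting Figure~\ref{fig:app:normals-and-neutrals}, any such $u$ is $\NfTrue$, $\NfFalse$, or $\NfInj{e}$ for some $\IsNe[\LockCx{\EmpCx}]{e}{\Bool}$; since $\EmbNf{\NfTrue} = \True$ and $\EmbNf{\NfFalse} = \False$, the only thing left to rule out is the existence of such a neutral $e$.

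To that end I would prove the slightly stronger statement that there are \emph{no} neutral forms in any context obtained from $\EmpCx$ by iterated application of the lock operations $\LockCx{-}$ (a class closed under $\LockCx{-}$, and containing $\LockCx{\EmpCx}$). The key point is that the only context former introducing a variable binding is $\ECx{-}{-}$, so such a context has no variable bindings: the hypothesis $\Gamma(k) = \DeclNameless{A}$ of the variable rule for neutrals is never satisfiable, hence there is no neutral of the form $\NeVar{k}{\alpha}$. Every other neutral former---$\NeApp{-}{-}$, the boolean and identity recursors, the modal eliminator, and so on---carries a neutral premise, and in each case that premise lives again in a context which is a lock telescope over $\EmpCx$ (the modal eliminator's neutral subterm sits under an extra $\LockCx{-}$). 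A straightforward induction on the derivation of $\IsNe{e}{A}$ then shows the collection of such neutrals is empty.

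Combining the two observations gives $u \in \brc{\NfTrue,\NfFalse}$, and decoding yields $\EqTm[\LockCx{\EmpCx}]{M}{\True}{\Bool}$ or $\EqTm[\LockCx{\EmpCx}]{M}{\False}{\Bool}$, which is the claim. I do not anticipate a genuine obstacle here; the only step demanding care is the bookkeeping around locks and De~Bruijn indices in the neutral judgment, specifically checking that the neutral subterm of each eliminator remains in a lock telescope over $\EmpCx$---an argument that uses only the structural equations governing $\LockCx{-}$ and never the equation $\LockCx{\EmpCx} = \EmpCx$ assumed by Gratzer et al.~\cite{gratzer:2020}, which is exactly what we are avoiding.
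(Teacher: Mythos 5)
Your argument is correct and is exactly the route the paper intends: the paper offers nothing beyond the remark that canonicity follows ``by scrutinizing the definition of normal forms,'' and your combination of Theorem~\ref{thm:normalization:normalization} with the observation that no neutral can inhabit a context built from $\EmpCx$ by locks alone (no variable bindings, and every neutral former has a neutral premise in another such context) is precisely that scrutiny made rigorous, without ever invoking $\LockCx{\EmpCx} = \EmpCx$. The one step worth making explicit is that inverting $\IsNf[\LockCx{\EmpCx}]{u}{\Bool}$ down to the three boolean-typed rules also uses disjointness of type constructors (so that, e.g., no $\Fn{A}{B}$ or $\Modify{A}$ is convertible to $\Bool$), which is available from Theorem~\ref{thm:normalization:ty-normalization} in the same way Corollary~\ref{cor:normalization:pi-inj} exploits it.
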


\section{Extending \MTT{} with crisp identity induction}
\label{sec:crisp}

To demonstrate the flexibility of the normalization argument given in
Sections~\ref{sec:normalization-cosmos} and \ref{sec:normalization}, we now show how it may be
extended to accommodate modal principles not included in \MTT{}.

Recall that, intuitively, a modality in \MTT{} corresponds to a right adjoint. This intuition is
supported by the fact that \MTT{} modalities commute with products. In an extensional version of
\MTT{}, modalities also commute with (extensional) equality. That is, the following canonical map is
an equivalence:
\begin{equation}
  \DeclVar{x,y}{A} \to \Id{\Modify{A}}{\MkBox{x}}{\MkBox{y}} \to \Modify[\mu]{\Id{A}{x}{y}}
  \label{eq:crisp:modal-ext}
\end{equation}

\begin{rem}
  Constructing this map is slightly intricate. We begin by generalizing:
  \[
    \prn{x,y : \Modify{A}} \to \Id{\Modify{A}}{x}{y}
    \to
    \LetMod{x}[{x'}]{
      \LetMod{y}[{y'}]{
        \Modify[\mu]{\Id{A}{x'}{y'}}
      }
    }
  \]
  In this form, we may use ordinary identity induction followed by modal induction to reduce to 
  $\prn{x : \Modify{A}} \to
    \LetMod{x}[{x'}]{
      \LetMod{x}[{y'}]{
        \Modify[\mu]{\Id{A}{x'}{y'}}
      }
    }$
  and then $\DeclVar{x}{A} \to \Modify[\mu]{\Id{A}{x}{x}}$
  respectively.
\end{rem}

In \emph{intensional} \MTT{}, the same principle is not derivable.
\begin{thm}
  \label{thm:crisp:no-go}
  There exists a model of intensional \MTT{} with one mode $m$ and one modality $\Mor[\mu]{m}{m}$ in
  which Equation~\ref{eq:crisp:modal-ext} is not invertible.
\end{thm}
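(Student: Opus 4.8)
The plan is to exhibit the model directly. Interpret mode $n$ by the category $\mathbf{Gpd}$ of (small) groupoids carrying the Hofmann--Streicher model of intensional Martin-L\"of type theory, interpret mode $m$ by $\mathbf{Set}$ with its standard model, and interpret the generating modality $\mu\colon n\to m$ by the adjunction $\mathsf{disc}\dashv\mathsf{ob}$, where $\mathsf{ob}\colon\mathbf{Gpd}\to\mathbf{Set}$ sends a groupoid to its set of objects and $\mathsf{disc}\colon\mathbf{Set}\to\mathbf{Gpd}$ sends a set to the discrete groupoid on it. Thus $\LockCx{-}$ is interpreted by $\mathsf{disc}$, and, for a closed type $A$ at mode $n$, $\Modify[\mu]{A}$ is interpreted by the set $\mathsf{ob}(A)$ and $\MkBox[\mu]{x}$ by $x$ itself, regarded as an element of $\mathsf{ob}(A)$.

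First I would check that this data assembles into a model of intensional \MTT{} over the mode theory with modes $m, n$, a single generating modality $\mu$, and only identity $2$-cells. The mode-local connectives --- dependent sums and products, intensional identity types, booleans, and a Hofmann--Streicher universe --- are supplied by the groupoid model at $n$ and (a fortiori, as $\mathbf{Set}$ even models extensional type theory) by the set model at $m$. For the modality, $\mathsf{ob}$ is a right adjoint, hence preserves finite limits and sends small families to small families; so by the argument of Lemma~\ref{lem:mstc:gl-is-dra} together with Birkedal et al.~\cite{birkedal:2020}, $\mathsf{disc}\dashv\mathsf{ob}$ constitutes a (weak) dependent right adjoint and equips the model with the formation, introduction, and weak elimination rules for $\Modify[\mu]{-}$ exactly as in Gratzer et al.~\cite{gratzer:mtt-journal:2021}. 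There are no non-identity $2$-cells to accommodate, so the model is complete.

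The failure of Equation~\ref{eq:crisp:modal-ext} is then immediate, using the interval groupoid $\mathbb{I}$ --- two objects $0$ and $1$ joined by a unique isomorphism --- as a closed type at mode $n$, together with the crisp points $x = 0$ and $y = 1$. In the Hofmann--Streicher model $\Id{\mathbb{I}}{0}{1}$ is the discrete groupoid on $\mathrm{Hom}_{\mathbb{I}}(0,1)$, which is the terminal groupoid $\mathbf{1}$, so $\Modify[\mu]{\Id{\mathbb{I}}{0}{1}} = \mathsf{ob}(\mathbf{1})$ is a singleton. On the other hand $\Modify[\mu]{\mathbb{I}} = \mathsf{ob}(\mathbb{I}) = \{0,1\}$ is a two-element \emph{set}, in which $\MkBox[\mu]{0} = 0$ and $\MkBox[\mu]{1} = 1$ are distinct, so --- as $\mathbf{Set}$ validates uniqueness of identity proofs --- $\Id{\Modify[\mu]{\mathbb{I}}}{\MkBox[\mu]{0}}{\MkBox[\mu]{1}}$ is the empty type. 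A singleton and the empty type are not equivalent; consequently no canonical comparison between $\Id{\Modify[\mu]{A}}{\MkBox[\mu]{x}}{\MkBox[\mu]{y}}$ and $\Modify[\mu]{\Id{A}{x}{y}}$ --- in particular the map of Equation~\ref{eq:crisp:modal-ext}, which is interpreted here as the unique (and obviously non-invertible) function out of the empty type --- can be invertible. Since the model interprets intensional \MTT{}, invertibility of Equation~\ref{eq:crisp:modal-ext} is not derivable.

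The only part requiring genuine care is the first step: confirming that $\mathbf{Gpd}$ at $n$, $\mathbf{Set}$ at $m$, and $\mathsf{disc}\dashv\mathsf{ob}$ satisfy \emph{all} of the strict equations demanded of a model of the generalized algebraic theory \MTT{}, and in particular that the weak modal eliminator coheres with the mode-local connectives. This is routine given the cited ingredients, but it constitutes the bulk of the work; the computation above is a few lines once the model is in place. An essentially identical argument works with a single mode, interpreting the endomodality by the coreflection $\mathsf{disc}\circ\mathsf{ob}\colon\mathbf{Gpd}\to\mathbf{Gpd}$; alternatively, one may interpret $n$ by a presheaf model of univalent type theory and $\mu$ by the global-sections modality, evaluating at $A = \Uni$ with $x = y = \Bool$.
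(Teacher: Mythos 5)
Your proposal is correct, but it takes a genuinely different route from the paper. The paper does not build a semantic countermodel directly: it defines a syntactic translation of \MTT{} into intensional \MLTT{} sending $\LockCx{\Gamma}$ to $\Gamma.\Nat$ and $\Modify[\mu]{A}$ to $\Nat \to A$ (with $\MkBox[\mu]{M} = \lambda.\,M$), under which an inverse to Equation~\ref{eq:crisp:modal-ext} becomes exactly function extensionality for $\Nat$-indexed families; non-derivability then follows from the independence of function extensionality from \MLTT{}. That argument is cheap because the target is plain \MLTT{} and the independence result is off the shelf, but to literally produce \emph{a model in which the map is not invertible} one must still compose the translation with some model of \MLTT{} refuting funext. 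Your construction short-circuits this: interpreting $n$ by the groupoid model, $m$ by sets, and $\mu$ by $\mathsf{disc} \dashv \mathsf{ob}$, the interval groupoid gives $\Id{\Modify[\mu]{\mathbb{I}}}{\MkBox[\mu]{0}}{\MkBox[\mu]{1}} = \emptyset$ against $\Modify[\mu]{\Id{\mathbb{I}}{0}{1}} \cong \mathbf{1}$, so the comparison map visibly fails to be surjective. The computation is as decisive as the paper's, and your model is arguably more informative (it directly exhibits the failure rather than reducing to an independence result). The cost, which you correctly flag, is verifying that $\mathbf{Gpd}$, $\mathbf{Set}$, and $\mathsf{disc} \dashv \mathsf{ob}$ assemble into a model of the generalized algebraic theory; this is genuinely routine here since $\mathsf{ob}$ is a right adjoint preserving small families, the transpose $\ECx{\Gamma}{A} \cong \ECx{\Gamma}{\Modify[\mu]{A}}<\ArrId{}>$ is even an isomorphism of contexts (so the weak eliminator is validated by substitution), and the mode theory has no nontrivial composites or 2-cells to cohere. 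One stylistic remark: your witness $\mathbb{I}$ is a semantic type not definable in the syntax, which is fine for refuting invertibility in a model (and hence derivability of a polymorphic inverse), and is no worse than the paper's own appeal to funext at a generic $A$.
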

\begin{proof}
  Consider intensional \MTT{} and define an interpretation of \MTT{} into intensional \MLTT{} which
  interprets both modes as \MLTT{} and sends all non-modal types to their counterparts within
  \MLTT{} and interprets modal connectives as follows:
  \begin{align*}
    &\Interp{\LockCx{\Gamma}} = \Interp{\Gamma}.\Nat
    \\
    &\Interp{\ECx{\Gamma}{A}} = \Interp{\Gamma}.\prn{\Nat \to \Interp{A}}
    \\
    &\Interp{\LockCx{\Gamma}<\ArrId{}>} = \Interp{\Gamma}
    \\
    &\Interp{\ECx{\Gamma}{A}<\ArrId{}>} = \Interp{\Gamma}.\Interp{A}
    \\
    &\Interp{\Modify[\mu]{A}} = \Nat \to \Interp{A}
    \\
    &\Interp{\Modify[\ArrId{}]{A}} = \Interp{A}
    \\
    &\Interp{\MkBox{M}} = \Lam{\Interp{M}}
    \\
    &\Interp{\MkBox[\ArrId{}]{M}} = \Interp{M}
    \\
    &\Interp{\LetMod{M}{N}<\xi>[\chi]} = \Sb{\Interp{N}}{\ESb{\ISb}{\Interp{M}}}
  \end{align*}

  Unfolding the interpretation of Equation~\ref{eq:crisp:modal-ext}, we observe that an inverse to
  this map corresponds to function extensionality for functions $\Nat \to A$. As function
  extensionality is independent of \MLTT{}, there must be no inverse to
  Equation~\ref{eq:crisp:modal-ext} definable within \MTT{}.
\end{proof}

In light of Theorem~\ref{thm:crisp:no-go}, we refer to the existence of an inverse to
Equation~\ref{eq:crisp:modal-ext} as \emph{modal extensionality}. Modal extensionality is useful in
practice. In incarnations of guarded recursion within \MTT{}, for instance, some version of modal
extensionality is required to prove any equalities involving guarded
types~\cite{gratzer:mtt-journal:2021,gratzer:lob:2022}. It is therefore worth investigating whether
modal extensionality is compatible with both normalization and canonicity.\footnote{Like function
  extensionality, it is straightforward to maintain \emph{either} normalization or canonicity in the
  presence of modal extensionality. Ensuring for both simultaneously is far more difficult.}

In work by Shulman~\cite{shulman:2018} and Gratzer~\cite{gratzer:mtt-journal:2021}, \emph{crisp}
induction principles are a variation of the induction principles for types such as $\Bool$ or
$\Id{A}{a_0}{a_1}$ which allow the scrutinee of the induction to occur beneath a modality. Crisp
induction principles are derivable in \MTT{} if the modality has an internal right
adjoint~\cite{gratzer:mtt-journal:2021}, but they are justified in other situations. In particular,
crisp induction for identity types is validated if and only if modal extensionality holds. In
contrast to modal extensionality, however, it is straightforward to directly adapt the proofs of
normalization and canonicity to account for crisp identity induction principles:
\begin{mathpar}
  \inferrule{
    \IsTy[\ECx{\ECx{\ECx{\Gamma}{A}<\mu>}{\Sb{A}{\Wk}}<\mu>}{\Id{\Sb{A}{\Wk[2]}}{\Var{1}}{\Var{0}}}<\mu>]{B}
    \\
    \IsTm[\ECx{\Gamma}{A}<\mu>]{M}{\Sb{B}{\ESb{\ESb{\ESb{\Wk}{\Var{0}}}{\Var{0}}}{\Refl{\Var{0}}}}}
    \\
    \IsTm[\LockCx{\Gamma}]{N_0,N_1}{A}<n>
    \\
    \IsTm[\LockCx{\Gamma}]{P}{\Id{A}{N_0}{N_1}}<n>
  }{
    \IsTm{\CrispIdRec{B}{M}{P}}{\Sb{B}{\ESb{\ESb{\ESb{\ISb}{N_0}}{N_1}}{P}}}
  }
  \\
  \CrispIdRec{B}{M}{\Refl{N}} = \Sb{M}{\ESb{\ISb}{N}}
\end{mathpar}

The modularity of our proof of normalization ensures that only local changes to the construction of
identity types in $\InterpGl$ are needed to adapt the entire proof to support crisp
induction. Concretely, two changes to primitive constants added to MSTC by
Section~\ref{sec:model:prereq}. One alteration to the definition of cosmoi and one to the definition
of neutral forms:
\begin{align*}
  &\IdElimConst_\mu : \DeclVar{A}{\Ty{n}}
  \,(B : \DeclVar{a_0,a_1}{\Tm{n}(A)}\DeclVar{p}{\Tm{n}(\IdConst(A,a_0,a_1))} \to \Ty{m})\\
  &\quad \to (\DeclVar{a}{\Tm{n}(A)} \to \Tm{m}(B(a,a,\ReflConst(a)))) \\
  &\quad \to \DeclVar{a_0,a_1}{\Tm{n}(A)}\DeclVar{p}{\Tm{n}(\IdConst(A,a_0,a_1))}\\
  &\quad \to \Tm{m}(B(a_0,a_1,p))
  \\
  &\CIdRec_\mu : \DeclVar{A}{\Open \Ty{n}}
  \,(B : \DeclVar{a_0,a_1}{\Vars{n}(A)}\DeclVar{p}{\Vars{m}(\IdConst(A,a_0,a_1))} \to \NfTy{m})\\
  &\quad \to (\DeclVar{a}{\Vars{n}(A)} \to \Nf{m}(B(a,a,\ReflConst(a))))\\
  &\quad \to \DeclVar{a_0,a_1}{\Open[z] \Tm{n}(z,A(z))} \DeclVar{p}{\Ne{n}(\IdConst(A,a_0,a_1))}\\
  &\quad \to \Ne{m}(B(a_0,a_1,\eta(p)))
\end{align*}
These changes simply reflect the change to the elimination principle of the identity type.

After having made this change, only one portion of Section~\ref{sec:model:mtt-cosmos} must change:
Lemma~\ref{lem:model:id} which shows that the gluing cosmos is closed under identity types. We must
show that $\prn{\Ty*{m},\Tm*{m}}$ is closed under crisp induction.

\begin{lem}
  $\prn{\Ty*{m},\Tm*{m}}$ supports crisp identity induction.
\end{lem}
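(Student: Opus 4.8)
The plan is to follow the construction of Lemma~\ref{lem:model:id} almost verbatim, altering only the elimination datum: the predicate $\Phi$ defining $\IdConst*\prn{A,a_0,a_1}$, together with its code, reify, and reflect, is untouched by crisp induction, and what changes is that, in place of the ordinary eliminator $\IdElimConst*$, one must produce an internal constant $\IdElimConst*_\mu$ whose signature is the evident analogue of the crisp rule displayed above---the motive $B$ now lands in mode $m$, and the endpoints, the refl method, and the scrutinee are all carried behind the framing modality $\mu$, mirroring the primitive $\IdElimConst_\mu$ just added to the cosmos structure. As usual, $\IdElimConst*_\mu$ must restrict under $z : \Syn$ to the syntactic eliminator $\IdElimConst_\mu$ imported by Extension~\ref{ext:model:syntactic-model} (augmented as above), and it must satisfy the $\beta$-law $\IdElimConst*_\mu\prn{B,b,\ReflConst*\prn{a}} = b\prn{a}$.

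I would build $\IdElimConst*_\mu$ by combining the modal-type argument of Lemma~\ref{lem:model:mod} with the identity-type argument of Lemma~\ref{lem:model:id}, the latter being recovered as the special case $\mu = \ArrId{}$. Given the data, the scrutinee $p : \Tm*{n}\prn{\IdConst*\prn{A,a_0,a_1}}$ sits behind the modality $\mu$, hence is a modal element of $\Phi$. Since \MTT{} modalities in extensional \MTT{} commute with dependent sums, with extensional equality, with $\Closed$, and---by Extension~\ref{ext:model:modalities-cocont}---with finite coproducts, this modal hypothesis decomposes exactly as in Lemma~\ref{lem:model:mod}: a modal normal form paired with a $\Closed$-guarded witness that is either a modal neutral in $\Modify[\mu]{\Ne{n}\prn{\IdConst\prn{A,a_0,a_1}}}$ or a modal computable element in $\Modify[\mu]{A.\Pred}$, each subject to its boundary. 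Because $\Closed X$ is the pushout of $\Syn \leftarrow \Syn \times X \to X$ (Diagram~\ref{diag:mstc:closed}), to eliminate this $\Closed$-guarded witness it suffices to give a map out of the coproduct that is constant under $z : \Syn$; I would case split: under $z : \Syn$ return the syntactic crisp eliminator $\IdElimConst_\mu$; on a modal neutral $e$, reify the motive $B$ to a normal type family and the method $b$ to a normal form (pushing reify and reflect under $\mu$) and return $\Reflect{}$ of the neutral $\CIdRec_\mu\prn{\dots, e}$ built from the new neutral former; on a modal computable element $a$, return $b\prn{a}$, exactly as in the refl branch of Lemma~\ref{lem:model:id}. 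Under $z : \Syn$ all three branches agree with $\IdElimConst_\mu\prn{z,\dots}$, so the case split descends past $\Closed$, the boundary of $\IdElimConst*_\mu$ holds, and the $\beta$-law follows from the refl branch together with the $\beta$-rules of $\Closed$ and of identity types in \MTT{}. Nothing else in Section~\ref{sec:model:mtt-cosmos} changes.

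The main obstacle will be the modal bookkeeping, which combines the two most delicate points of Lemmas~\ref{lem:model:mod} and~\ref{lem:model:id}: first, commuting $\mu$ through the record $\Phi$---a nested mixture of $\Sigma$, $\Closed$, $+$, and $\Id$---to extract a usable decomposition of the modal scrutinee; and second, checking that in the refl branch $b\prn{a}$ genuinely inhabits $\Tm*{m}\prn{B\prn{a_0,a_1,p}}$. The latter needs the modalized analogue of the endpoint identification that the non-crisp proof reads off the typing rule for $\NfRefl{-}$ in Figure~\ref{fig:app:normals-and-neutrals} (a refl normal form at $\IdConst\prn{A,a_0,a_1}$ forces $a_0$ and $a_1$ to coincide), together with the realignment coherences identifying $p$ with $\ReflConst*\prn{a}$. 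Given Lemmas~\ref{lem:model:mod} and~\ref{lem:model:id} these are routine, but they are the only places real care is required.
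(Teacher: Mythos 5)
Your proposal matches the paper's proof essentially step for step: the predicate, code, reify, and reflect from Lemma~\ref{lem:model:id} are left untouched, the modal scrutinee is decomposed by commuting $\Modify[\mu]{-}$ past the dependent sum, $\Closed$, equality types, and coproducts exactly as in Lemma~\ref{lem:model:mod}, and the eliminator is defined by the same three-way case split on the $\Closed$-guarded proof field (syntactic eliminator under $z : \Syn$, $\Reflect{}$ of a $\CIdRec$ neutral on the neutral branch, $b(a_0)$ on the refl branch). This is the paper's argument; no further comment is needed.
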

\begin{proof}
  This argument is similar to Lemma \ref{lem:model:mod}, as the induction principle for modal types is
  always `crisp' in \MTT{}. We must implement the following constant.
  \begin{align*}
    &\IdElimConst*_\mu : \DeclVar{A}{\Ty*{n}}
    \,(B : \DeclVar{a_0,a_1}{\Tm*{n}(A)}\DeclVar{p}{\Tm*{n}(\IdConst*(A,a_0,a_1))} \to \Ty*{m})\\
    &\quad \to (b : \DeclVar{a}{\Tm*{n}(A)} \to \Tm*{m}(B(a,a,\ReflConst*(a))))\\
    &\quad \to \DeclVar{a_0,a_1}{\Tm*{n}(A)}\DeclVar{p}{\Tm*{n}(\IdConst(A,a_0,a_1))} \to\\
    &\quad \to \Ext{\Tm*{m}(B(a_0,a_1,p))}{z : \Syn}{\IdElimConst_\mu\prn{A,B,b,p}}
  \end{align*}

  Let us fix $A$, $B$, $b$, $a_0$, $a_1$, and $p$ with the types described above. Recalling the
  definition of $\IdConst*\prn{A,a_0,a_1}.\Pred$ from Lemma \ref{lem:model:id}, we can commute
  $\Modify[\mu]{-}$ past the dependent sum, closed modalities, equality types, and coproducts to
  decompose $p$ into a pair of the following:
  \begin{align*}
    &\DeclVar{\TmFld}{\Nf{n}(\IdConst(A,a_0,a_1))}
    \\
    &\PrfFld :
      \Closed \brk{
      \DelimMin{4}
      \begin{aligned}
        &(\Sum{e : \Modify[\mu]{\Ne{n}(\IdConst(A,a_0,a_1))}} \CNfInj \ZApp e = \MkBox{\TmFld})
        \\
        &+ (\MkBox{a_0} = \MkBox{a_1} \times \MkBox{\TmFld} = \MkBox{\NfRefl{a_0}})
      \end{aligned}
    }
  \end{align*}

  We then define $\IdElimConst*_\mu\prn{B,b,a_0,a_1,p}$ by analyzing $\PrfFld$:
  \[
    \begin{cases}
      \IdElimConst(z, B, b, a_0, a_1, p) & \PrfFld = \In{1}(z)
      \\
      \Reify{}
      \CIdRec(
        \lambda a_0,a_1,p.\ B(\Reflect{} a_0, \Reflect{} a_1, \Reflect{} p).\Code{},
        \lambda a.\ \Reify{} b(\Reflect{} a),
        e)
      & q = \In{2}(\In{1}(e, \_))
      \\
      b(a_0) & q = \In{2}(\In{2}(\_, \_)) \qedhere
    \end{cases}
  \]
\end{proof}

Having made this alteration, the remainder of Sections~\ref{sec:normalization-cosmos} and
\ref{sec:normalization} are unchanged. In particular, all the results of
Section~\ref{sec:normalization} continue to hold in the presence of crisp induction.

\section{Related work}
\label{sec:related}

We have built on top of a long line of research systematically structuring logical relations as gluing
models~\cite{mitchell:1993,altenkirch:1995,streicher:1998,fiore:2002,shulman:2015,altenkirch:2017,kaposi:gluing:2019,coquand:2019,sterling:2021,sterling:phd}.
In particular, Altenkirch et al.~\cite{altenkirch:1995} and Fiore~\cite{fiore:2002} recast NbE
into the construction of a gluing model in which types are triples
$\prn{A,\Reify{},\Reflect{}}$. Generalizing from this work to dependent type theory has proven a
considerable challenge~\cite{altenkirch:2016}. The final ingredient for Martin-L{\"o}f type theory
was provided by Coquand~\cite{coquand:2019}: a construction of a universe in this gluing model similar to
that of Shulman~\cite{shulman:2015}.

\paragraph{Gluing for modal type theory}
Gratzer et al.~\cite{gratzer:2019} gave a classical normalization-by-evaluation proof for a
Fitch-style type theory. The complexity of this proof, however, makes it intractable to extend to a
general modal type theory like \MTT{}. Unfortunately, extending gluing techniques to modal type
theories has proven challenging. In particular, Gratzer et al.~\cite{gratzer:2020} used gluing to
prove canonicity for \MTT{}, but they were forced to add an additional equality to \MTT{}
($\LockCx{\EmpCx} = \EmpCx$) to tame the construction of the gluing model. The challenge lies in
fitting the glued category of contexts into a CwF-style model of type theory; the natural definition
of glued types and terms fails to admit modalities. While there have been some attempts to
systematize the construction of glued CwFs~\cite{kaposi:gluing:2019}, they do not apply to \MTT{}.

Recently, Hu and Pientka~\cite{hu:2022} gave a proof of normalization for a simply-typed Fitch-style
type theory (Kripke-style in their parlance) with one modality. They give two separate proofs of
normalization; one through both an untyped PER model similar to Gratzer et al.~\cite{gratzer:2019}
and one using a gluing model. Their gluing proof is closely related to the argument above. For
instance, their theory of unified substitutions and modal transformations corresponds to a
specialization of \MTT{}'s substitution calculus to one modality and, accordingly, their category of
renamings offers a strict presentation of the category of renamings described above. Their proof,
however, is done using external constructions on the gluing category which may make it difficult to
scale to either multiple modalities or dependent types.

\paragraph{Synthetic Tait computability}
The introduction of representable map categories~\cite{uemura:2019} and
LCCCs~\cite{gratzer:lcccs:2020} for modeling the syntax of (non-modal) type theory offered an
alternative approach. Crucially, they show that syntax can be given a universal property among
structured categories with better behavior than CwFs. Sterling and
collaborators~\cite{sterling:modules:2021,sterling:2021,sterling:phd} have built on this idea and
introduced synthetic Tait computability to prove syntactic metatheorems via gluing together LCCCs
rather than CwFs. Unlike other approaches to gluing, STC generalizes well to a multimodal setting
and by extending STC to MSTC normalization for \MTT{} becomes tractable.

\paragraph{\MTT{} as a metalanguage}
In a parallel line of work, Bocquet et al.~\cite{bocquet:2021} have also used \MTT{} as a
metalanguage in the construction of models of type theory. They, however, do not work with a modal
object type theory and instead use \MTT{} to internalize a functor $F$ rather than working
internally to $\GL{F}$. As a result, while both proofs use \MTT{} modalities, the modalities used by
\opcit{} are encoded in our proof by fibered lex monads ($\Open$, $\Closed$) which prove easier to
manipulate.

\section{Conclusions and future work}
\label{sec:conclusions}

We prove normalization for \MTT{} (Theorem \ref{thm:normalization:normalization}) and thereby reduce
the decidability of conversion and type checking to the decidability of equality of the underlying
mode theory (Corollaries \ref{cor:normalization:dec} and \ref{cor:normalization:type-checking}). In
addition, we deduce a number of corollaries from normalization itself, including the injectivity of
type constructors and canonicity (Corollaries \ref{cor:normalization:pi-inj} and
\ref{cor:normalization:canonicity}).

By working constructively, we have obtained an effective procedure for normalization. This, along
with our results on type checking, open the door to a theoretically-sound implementation of \MTT{}
generic in the mode theory. In the future, we intend to develop a bidirectional syntax for \MTT{}
and implement it. Stassen et al.~\cite{stassen:2022} have made promising initial steps in this
direction for \emph{poset-enriched} mode theories.

\section*{Acknowledgments}
I am thankful for discussions with Carlo Angiuli, Martin Bidlingmaier, Lars Birkedal, Thierry
Coquand, Alex Kavvos, Christian Sattler, and Jonathan Sterling. I am also grateful to the careful
reading and comments provided by the reviewers of this paper.
The author was supported in part by a Villum Investigator grant (no. 25804), Center for Basic
Research in Program Verification (CPV), from the VILLUM Foundation.

\bibliographystyle{alphaurl}
\bibliography{refs,temp-refs}

\end{document}